\newcommand{\modifiedA}[1]{#1\xspace}
\newcommand{\markModificationA}{}
\newcommand{\modifiedB}[1]{#1\xspace}
\newcommand{\modified}[1]{#1\xspace}
\newtheorem{theorem}{Theorem}[section]
\newtheorem{lemma}[theorem]{Lemma}
\newtheorem{proposition}[theorem]{Proposition}
\newtheorem{corollary}[theorem]{Corollary}
\newtheorem{Definition}[theorem]{Definition}
\newtheorem{Example}[theorem]{Example}
\newtheorem{Remark}[theorem]{Remark}
\newenvironment{definition}{\begin{Definition}\begin{em}}{\end{em}\end{Definition}}
\newenvironment{example}{\begin{Example}\begin{em}}{\end{em}\end{Example}}
\newenvironment{remark}{\begin{Remark}\begin{em}}{\end{em}\end{Remark}}
\newenvironment{proof}{
	
	\smallskip
	
	\noindent
	{\em Proof.}}{

	\smallskip

	}
\def\eqref#1{(\ref{#1})}
\def\defeq{\stackrel{\mathrm{def}}{=}}
\def\tuple#1{\langle#1\rangle}
\newcommand{\mand}{\sqcap}
\newcommand{\mor}{\sqcup}
\newcommand{\V}{\forall}
\newcommand{\E}{\exists}
\def\ineg{\stackrel{.}{\neg}}
\newcommand{\PhiF}{\Phi_{\mathrm{full}}}
\newcommand{\mL}{\mathcal{L}}
\newcommand{\mLP}{$\mathcal{L}_\Phi$\xspace}
\newcommand{\mLPp}{$\mathcal{L}_\Phi^0$\xspace}
\newcommand{\mLPn}{$\mathcal{L}_{(\Phi,\ineg)}$\xspace}
\newcommand{\mLPUn}{$\mathcal{L}_{(\Phi \cup \{U\},\ineg)}$\xspace}
\newcommand{\DLP}{$\mathcal{L}_{(\Phi,\triangle)}$\xspace}
\newcommand{\DLPp}{$\mathcal{L}_{(\Phi,\triangle)}^0$\xspace}
\newcommand{\mT}{\mathcal{T}}
\newcommand{\mA}{\mathcal{A}}
\newcommand{\mI}{\mathcal{I}}
\newcommand{\mIp}{{\mathcal{I}'\!}}
\newcommand{\mZ}{\mathcal{Z}}
\newcommand{\ALCreg}{$\mathcal{ALC}_{reg}$\xspace}
\newcommand{\CN}{\mathbf{C}}
\newcommand{\RN}{\mathbf{R}}
\newcommand{\IN}{\mathbf{I}}
\newcommand{\NN}{\mathbb{N}}
\newcommand{\Self}{\mathtt{Self}}
\newcommand{\simP}{\sim_\Phi}
\newcommand{\simPI}{\sim_{\Phi,\mI}}
\newcommand{\simPn}{\stackrel{.}{\sim}_\Phi}
\newcommand{\simPIn}{\stackrel{.}{\sim}_{\Phi,\mI}}
\newcommand{\mIsimPn}{\mI/_{\simPn}}
\newcommand{\mIpsimPn}{\mIp/_{\simPn}}
\newcommand{\equivP}{\equiv_\Phi}
\newcommand{\equivPp}{\equiv_\Phi^0}
\newcommand{\equivPn}{\equiv_{(\Phi,\ineg)}}
\newcommand{\equivPdp}{\equiv_{(\Phi,\triangle)}^0}
\newcommand{\myend}{\mbox{}\hfill{\scriptsize$\blacksquare$}}
\newcommand{\comment}[1]{}
\newcommand{\fand}{\varotimes}
\newcommand{\fOr}{\varoplus}
\newcommand{\fneg}{\varominus}
\newcommand{\fto}{\Rightarrow}
\newcommand{\fequiv}{\Leftrightarrow}
\newcommand{\email}[1]{\mbox{Email: \url{#1}}}
\begin{document}
\sloppy
	
\title{Bisimulation and Bisimilarity for Fuzzy Description Logics\\ under the G{\"o}del Semantics\thanks{This is a revised and corrected version of the publication ``Bisimulation and bisimilarity for fuzzy description logics under the G\"odel semantics'', {\em Fuzzy Sets and Systems} 388: 146-178 (2020)}}

\author[1]{Linh Anh Nguyen}
\author[2]{Quang-Thuy Ha}
\author[3,4]{Ngoc-Thanh Nguyen}
\author[5,2]{Thi Hong Khanh Nguyen}
\author[6]{Thanh-Luong Tran}

\affil[1]{\small
	Institute of Informatics, University of Warsaw, 
	Banacha 2, 02-097 Warsaw, Poland, \email{nguyen@mimuw.edu.pl}
}

\affil[2]{\small
	Faculty of Information Technology, VNU University of Engineering and Technology, 
	144 Xuan Thuy, Hanoi, Vietnam, \email{thuyhq@vnu.edu.vn}
}

\affil[3]{\small
	Department of Information Systems,  
	Faculty of Computer Science and Management, 
	Wroclaw University of Science and Technology, Poland,
	\email{Ngoc-Thanh.Nguyen@pwr.edu.pl}
}

\affil[4]{\small
	Faculty of Information Technology, Nguyen Tat Thanh University, Ho Chi Minh City, Vietnam
}

\affil[5]{\small
	Faculty of Information Technology, Electricity Power University, 
	235 Hoang Quoc Viet, Hanoi, Vietnam, 
	\email{khanhnth@epu.edu.vn}
}

\affil[6]{\small
	Department of Information Technology, University of Sciences, Hue University, 
	77 Nguyen Hue, Hue city, Vietnam, \email{ttluong@hueuni.edu.vn} 
}

\date{}

\maketitle

\begin{abstract}
Description logics (DLs) are a suitable formalism for representing knowledge about domains in which objects are described not only by attributes but also by binary relations between objects. Fuzzy extensions of DLs can be used for such domains when data and knowledge about them are vague and imprecise. One of the possible ways to specify classes of objects in such domains is to use concepts in fuzzy DLs. As DLs are variants of modal logics, indiscernibility in DLs is characterized by bisimilarity. The bisimilarity relation of an interpretation is the largest auto-bisimulation of that interpretation. In DLs and their fuzzy extensions, such equivalence relations can be used for concept learning.  
In this paper, we define and study fuzzy bisimulation and bisimilarity for fuzzy DLs under the G\"odel semantics, as well as crisp bisimulation and strong bisimilarity for such logics extended with involutive negation. 
The considered logics are fuzzy extensions of the DL \ALCreg (a variant of PDL) with additional features among inverse roles, nominals, (qualified or unqualified) number restrictions, the universal role, local reflexivity of a role and involutive negation. 
We formulate and prove results on invariance of concepts under fuzzy (resp.\ crisp) bisimulation, conditional invariance of fuzzy TBoxex/ABoxes under bisimilarity (resp.\ strong bisimilarity), and the Hennessy-Milner property of fuzzy (resp.\ crisp) bisimulation for fuzzy DLs without (resp.\ with) involutive negation under the G\"odel semantics. Apart from these fundamental results, we also provide results on using fuzzy bisimulation to separate the expressive powers of fuzzy DLs, as well as results on using strong bisimilarity to minimize fuzzy interpretations. 
\end{abstract}



\section{Introduction}
\label{section:intro}

In traditional machine learning, objects are usually described by attributes, and a class of objects can be specified, among others, by a logical formula using attributes. Decision trees and rule-based classifiers are variants of classifiers based on logical formulas. To construct a classifier, one can restrict to using a sublanguage that allows only essential attributes and certain forms of formulas. If two objects are indiscernible w.r.t.\ that sublanguage, then they belong to the same decision class. Indiscernibility is an equivalence relation that partitions the domain into equivalence classes, and each decision class is the union of some of those equivalence classes. 

There are domains in which objects are described not only by attributes but also by binary relations between objects. Examples include social networks and linked data. For such domains, description logics (DLs) are a suitable formalism for representing knowledge about objects. Basic elements of DLs are concepts, roles and individuals (objects). A concept name is a unary predicate, a role name is binary predicate. A~concept is interpreted as a set of objects. It can be built from atomic concepts, atomic roles and named individuals (nominals) by using constructors. As DLs are variants of modal logics, indiscernibility in DLs is characterized by bisimilarity. The bisimilarity relation of an interpretation $\mI$ w.r.t.\ a logic language is the largest auto-bisimulation of $\mI$ w.r.t.\ that language. It has been exploited for concept learning in DLs (see, e.g., \cite{LbRoughification,TranNH15,DivroodiHNN18}). 

In practical applications, data and knowledge may be vague and imprecise, and fuzzy logics can be used to deal with them. There are different families of fuzzy operators. The G\"odel, {\L}ukasiewicz, Product and Zadeh families are the most popular ones. The first three of them use t-norm-based residua to define implication. The G\"odel and Zadeh families define conjunction and disjunction of truth values as their infimum and supremum, respectively. Extending DLs to fuzzy DLs, each family of fuzzy operators can be used to specify a semantics appropriately (see, e.g., \cite{BobilloDGS09}). 
Fuzzy DLs have attracted researchers for two decades (see \cite{BobilloCEGPS2015,BorgwardtP17b} for overviews and surveys). If objects are described by attributes and binary relations, and data and knowledge about them are vague and imprecise, then one of the possible ways to specify classes of objects is to use concepts in fuzzy DLs. Bisimilarity in fuzzy DLs can be used for learning such concepts. Thus, bisimilarity and bisimulation in fuzzy DLs are worth studying.  

The objective of this paper is to introduce and study bisimulation and bisimilarity for the classes of fuzzy DLs \mLP and \mLPn under the G\"odel semantics, where $\mL$ extends the DL \ALCreg (a variant of PDL, i.e., propositional dynamic logic) with fuzzy truth values, \mLP extends $\mL$ with the features from $\Phi \subseteq \{I$, $O$, $U$, $\Self$, $Q_n$, $N_n \mid$ $n \in \NN \setminus \{0\}\}$, which stand for inverse roles ($I$), nominals ($O$), the universal role ($U$), local reflexivity of a role ($\Self$), qualified number restrictions ($Q_n$) and unqualified number restrictions ($N_n$), respectively, with $n$ being the bound, and \mLPn extends \mLP with involutive negation. The DL \ALCreg allows PDL-like role constructors, which are union, sequential composition, reflexive-transitive closure and the test operator. 


\subsection{Research Problems and Our Results}

In this subsection, we present the research problems and introduce our most important results. 

\subsubsection{Fuzzy Bisimulation and Bisimilarity for Fuzzy DLs under the G{\"o}del Semantics}
\label{sssection: fuz-bis}

Given a fuzzy DL \mLP under the G{\"o}del semantics and fuzzy interpretations $\mI$ and $\mIp$, a function \mbox{$Z : \Delta^\mI \times \Delta^\mIp \to [0,1]$}, where $\Delta^\mI$ and $\Delta^\mIp$ are the domains of $\mI$ and $\mIp$, is called a fuzzy $\Phi$-bisimulation between $\mI$ and $\mIp$ if it satisfies certain conditions, which are appropriately designed so that the following properties hold when some light restrictions on $\mI$ and $\mIp$ are assumed: 
\begin{itemize}
\item {\em Invariance of concepts:} For every $x \in \Delta^\mI$, $x' \in \Delta^\mIp$ and every concept $C$ of \mLP, 
\begin{equation}\label{eq: HDHWP}
Z(x,x') \leq (C^\mI(x) \fequiv C^\mIp(x')),
\end{equation}
where $\fequiv$ is the {\em G{\"o}del equivalence operator}.\footnote{The G{\"o}del equivalence operator is specified as follows: $(p \fequiv q) = 1$ if $p = q$, and $(p \fequiv q) = \min\{p,q\}$ otherwise.}

\item {\em The Hennessy-Milner property:} The function 
\begin{equation} \label{eq: HGDKW}
\lambda \tuple{x,x'} \in \Delta^\mI \times \Delta^\mIp.\inf\{C^\mI(x) \fequiv C^\mIp(x') \mid C \textrm{ is a concept of \mLP} \}
\end{equation}
is the {\em greatest} fuzzy $\Phi$-bisimulation between $\mI$ and $\mIp$.$\,$\modifiedB{\footnote{\modifiedB{The order is specified as follows. Given $Z_1,Z_2: \Delta^\mI \times \Delta^\mIp \to [0,1]$, $Z_2$ is {\em greater than or equal to} $Z_1$ if $Z_1(x,y) \leq Z_2(x,y)$ for all $\tuple{x,y} \in \Delta^\mI \times \Delta^\mIp$.}}}
\end{itemize}
The Hennessy-Milner property can be strengthened by replacing \mLP in~\eqref{eq: HGDKW} with a certain sublanguage \mLPp of~\mLP. As a restriction on $\mI$ and $\mIp$ that guarantees these two properties, one may require that both $\mI$ and $\mIp$ are {\em image-finite}.\footnote{All highlighted terms in this introduction will be specified later in the paper.} It can be weakened to make the assertions more general. 

In~\cite{Fan15} Fan introduced fuzzy bisimulations for some G\"odel modal logics, which are fuzzy modal logics using the G\"odel semantics.\modifiedA{\footnote{\modifiedA{Her notion of fuzzy bisimulation is discussed in Remark~\ref{remark: JHFKS}.}}} The logics considered in~\cite{Fan15} include the fuzzy monomodal logic $K$ and its extension with converse. She proved that fuzzy bisimulations for these logics have the two above mentioned properties for the case when $\mI$ and $\mIp$ are image-finite\footnote{called degree-finite in~\cite{Fan15} for the case when the considered logic allows converse}. 

Inspired by the results of~\cite{Fan15}, in this paper we define fuzzy bisimulations in a uniform way for the whole class of fuzzy DLs \mLP. We prove the invariance of concepts under such bisimulations and the Hennessy-Milner property of such bisimulations, using restrictions weaker than image-finiteness for $\mI$ and $\mIp$. Namely, we prove the former (resp.\ latter) property for the class of {\em witnessed} (resp.\ witnessed and {\em modally saturated}) interpretations. In addition, we also identify a (tight) sublanguage \mLPp of \mLP that can replace \mLP in \eqref{eq: HGDKW} to make the Hennessy-Milner property stronger. 

If there exists a fuzzy $\Phi$-bisimulation $Z$ between $\mI$ and $\mI'$ such that $Z(a^\mI,a^\mIp)=1$ for all named individuals $a$, then we say that $\mI$ and $\mI'$ are $\Phi$-bisimilar. 
A fuzzy TBox $\mT$ is said to be invariant under $\Phi$-bisimilarity if, for every fuzzy interpretations $\mI$ and $\mI'$ that are witnessed and $\Phi$-bisimilar to each other, $\mI \models \mT$ iff $\mI' \models \mT$. The notion of invariance of fuzzy ABoxes under $\Phi$-bisimilarity is defined analogously. 
In this paper, we also provide results on invariance of fuzzy TBoxes and fuzzy ABoxes under $\Phi$-bisimilarity.  

\subsubsection{Crisp Bisimulation and Strong Bisimilarity for Fuzzy DLs with Involutive Negation}

Given a fuzzy DL \mLP under the G{\"o}del semantics, let's define the notion of crisp $\Phi$-bisimulation by using the same conditions of fuzzy $\Phi$-bisimulation except that the range of $Z$ is $\{0,1\}$ (instead of $[0,1]$). Then, what properties do crisp $\Phi$-bisimulations have? The answer is related to the fuzzy DL \mLPn that extends \mLP with involutive negation, which we denote by $\ineg$. The G{\"o}del negation is denoted by $\lnot$ and is non-involutive, as \mbox{\(\lnot p \defeq \textrm{(if $p = 0$ then 1 else $0$)}\)} and $\lnot\lnot p \neq p$ for $0 < p < 1$, while $\ineg$ is involutive in the sense that \mbox{$\ineg\!p \defeq 1 - p$} and $\ineg\ineg\!p \equiv p$.  

In~\cite{Fan15} Fan also studied crisp bisimulations for the logic that extends the fuzzy monomodal logic $K$ with involutive negation and its further extension with converse. She formulated and proved a theorem on the Hennessy-Milner property of crisp bisimulations for these logics. It uses the Baaz projection operator $\triangle$ defined as \mbox{\(\triangle p \,=\, \lnot\!\ineg p \,=\, \textrm{(if $p = 1$ then 1 else 0)}\)}. 

Inspired by those results of Fan~\cite{Fan15}, extending our results mentioned in Section~\ref{sssection: fuz-bis} we formulate and prove theorems on crisp $\Phi$-bisimulations for the whole class of fuzzy DLs \mLPn. 
They are as follows:
\begin{itemize}
\item {\em Invariance of concepts:} If fuzzy interpretations $\mI$ and $\mI'$ are witnessed w.r.t.\ \mLPn 
and $Z$ is a crisp $\Phi$-bisimulation between $\mI$ and $\mIp$, then for every $x \in \Delta^\mI$, $x' \in \Delta^\mIp$ and every concept $C$ of \mLPn, 
$Z(x,x') = 1$ implies $C^\mI(x) = C^\mIp(x')$.

\item {\em The Hennessy-Milner property:} If fuzzy interpretations $\mI$ and $\mI'$ are witnessed and modally saturated w.r.t.~\DLPp and $Z : \Delta^\mI \times \Delta^\mIp \to \{0,1\}$ is the function specified by ($Z(x,x') = 1$ iff $C^\mI(x) = C^\mIp(x)$ for all concepts $C$ of \DLPp), then $Z$ is the greatest crisp $\Phi$-bisimulation between $\mI$ and~$\mI'$.
\end{itemize}
Here, \DLPp is a sublanguage of \mLPn that uses $\triangle$ instead of $\lnot$ and $\ineg$ and excludes certain constructors. Modal saturatedness w.r.t.\ \DLPp is also defined appropriately by us in order to make the Hennessy-Milner property as strong as possible.  

If there exists a crisp $\Phi$-bisimulation $Z$ between $\mI$ and $\mI'$ such that $Z(a^\mI,a^\mIp)=1$ for all named individuals~$a$, then we say that $\mI$ and $\mI'$ are strongly $\Phi$-bisimilar. 
In this paper, we also provide results on invariance of fuzzy TBoxes and fuzzy ABoxes under strong $\Phi$-bisimilarity. 

\subsubsection{Separating the Expressive Powers of Fuzzy DLs}

Bisimulations have been widely used to analyze the expressive powers of modal and description logics (see, e.g., \cite{BRV2001,BSDL-INS}). In this paper, as a special point that relies on fuzzy bisimulations, we prove that involutive negation and the Baaz projection operator cannot be expressed in fuzzy DLs under the G\"odel semantics by using the other constructors. In particular, for any $\Phi$ considered in this paper, the fuzzy DLs \mLPn and \DLP are strictly more expressive than the fuzzy DL \mLP in defining  concepts. We also provide similar results concerning the expressive power in defining fuzzy TBoxes or fuzzy ABoxes for some cases of~$\Phi$. 

\subsubsection{Minimizing Fuzzy Interpretations}

Given an equivalence relation on the domain of a structure, one can try to minimize the structure by grouping the elements that are in the same equivalence class. The resultant is usually called the quotient structure w.r.t.\ that equivalence relation. The problems are: how to specify the contents of the quotient structure and whether this latter is \modifiedB{a minimal structure} equivalent to the original one w.r.t.\ some basic aspects. Minimization is useful not only for saving memory but also for speeding up computations on the structure. 

Let $\simPIn$ denote the binary relation on $\Delta^\mI$ such that $x \simPIn x'$ iff $Z(x,x') = 1$, where $Z$ is the greatest crisp $\Phi$-bisimulation between $\mI$ and itself. The relation $\simPIn$ is called the strong $\Phi$-bisimilarity relation of $\mI$. It is an equivalence relation. 
In this paper, we introduce the quotient fuzzy interpretation of $\mI$ w.r.t.~$\simPIn$ for the case when $\Phi \subseteq \{I,O,U\}$, and prove that under some light assumptions it is a \modifiedB{minimal} fuzzy interpretation equivalent to~$\mI$ w.r.t.\ some aspects like validity of fuzzy axioms/assertions of \mLPn. 


\subsection{Related Work}

Bisimulation and bisimilarity arose from research on modal logic~\cite{vBenthem76,vBenthem83,vBenthem84} and state transition systems~\cite{Park81,HennessyM85}. Since then, they have been widely studied for variants and extensions of modal logic, including dynamic logic, temporal logic, hybrid logic and description logic (see, e.g., \cite{KurtoninaR99,Rijke00,BRV2001,ArecesBM01,GorankoOtto06,Sangiorgi09,GratieFM12,Piro2012,BSDL-INS,BSpDL}). They have been used for analyzing the expressive powers of the concerned logics, minimizing state transition systems, and concept learning in DLs~\cite{LbRoughification,TranNH15,DivroodiHNN18}.
Regarding bisimulation and bisimilarity formulated for fuzzy structures (including fuzzy transition systems, fuzzy automata, fuzzy Kripke models and fuzzy interpretations in DLs), apart from the already mentioned work~\cite{Fan15} of Fan, other most notable related works are~\cite{CaoCK11,CiricIDB12,EleftheriouKN12,Nguyen-TFS2019}. 

In~\cite{CaoCK11} Cao et al.\ studied (crisp) bisimulations for fuzzy transition systems (FTSs), which may be infinite. They gave three kinds of (crisp) bisimulation for FTSs, leading to two kinds of bisimilarity, which coincide when restricted to image-finite FTSs. The notion of image-finiteness can be generalized to being witnessed~\cite{Hajek05}. The first kind of bisimilarity introduced in~\cite{CaoCK11} is defined so that some properties can be proved without requiring the considered FTSs to be witnessed. The second kind of bisimilarity introduced in~\cite{CaoCK11} for FTSs is called strong bisimilarity. Cao et al.~\cite{CaoCK11} provided some results on composition operations, subsystems, quotients and homomorphisms of FTSs, which are 
related to bisimulation.

In~\cite{CiricIDB12} {\'C}iri{\'c} et al.\ introduced (fuzzy) bisimulations for fuzzy automata. Such a bisimulation is a fuzzy relation between the sets of states of the two considered automata. There are four kinds of bisimulation defined in \cite{CiricIDB12}: forward, backward, forward-backward and backward-forward. The first kind is the usual one that researchers would have in mind as the default. Backward bisimulations are a kind of forward bisimulations between reversed automata. The two remaining kinds are mixtures of forward simulation and backward simulation. Apart from a result on invariance of languages under bisimulations, other main results of~\cite{CiricIDB12} concern characterizations of bisimulations via factor fuzzy automata, which are similar to quotient structures but defined by using a fuzzy equivalence relation (instead of a crisp equivalence relation). 

In~\cite{EleftheriouKN12} Eleftheriou et al.\ presented (weak) bisimulation and bisimilarity for Heyting-valued modal logics and proved the Hennessy-Milner property of those notions. A Heyting-valued modal logic uses a Heyting algebra as the space of truth values. There is a close relationship between Heyting-valued modal logics and fuzzy modal logics under the G\"odel semantics~\cite{Fan15}, as every linear Heyting algebra is a G\"odel algebra~\cite{EleftheriouKN12} and every G\"odel algebra is a Heyting algebra with the Dummett condition~\cite{CattaneoCGK04}. As discussed by Fan in~\cite{Fan15}, there is a relationship between fuzzy bisimulations for G\"odel modal logics and weak bisimulations for Heyting-valued modal logics~\cite{EleftheriouKN12}, especially for the case when the underlying Heyting algebra is linear. 

In~\cite{Nguyen-TFS2019} Nguyen studied bisimilarity for fuzzy DLs under the Zadeh semantics. The logics studied in~\cite{Nguyen-TFS2019} are similar to the fuzzy DLs \mLP studied in the current paper except that:
\begin{itemize}
	\item among the fuzzy values from [0,1] only 0 ($\bot$) and 1 ($\top$) can be used to construct concepts,
	\item the feature $Q$ (qualified number restrictions using any bound) is considered instead of~$Q_n$ and~$N_n$, 
	\item and most importantly, the Zadeh semantics is used instead of the G\"odel semantics.  
\end{itemize}
Nguyen~\cite{Nguyen-TFS2019} defined bisimilarity for fuzzy DLs under the Zadeh semantics by using cut-based simulations, which are related to directed simulations~\cite{KurtoninaR97,BSDL-P-LOGCOM}. He provided results on preservation of information by such simulations, the Hennessy-Milner property of such simulations, and conditional invariance of fuzzy TBoxes/ABoxes under bisimilarity between witnessed interpretations, all for fuzzy DLs under the Zadeh semantics.

\modifiedA{It is also worth mentioning the work~\cite{LutzPW11} by Lutz et al.\ on characterizations of concepts and TBoxes w.r.t.\ first-order logic, like van Benthem's characterization of modal formulas as the bisimulation invariant fragment of first-order logic~\cite{GorankoOtto06}. The work~\cite{LutzPW11} is based on notions such as bisimulation, equisimulation, disjoint union and direct product. It also studies TBox rewritability. Extending our results on invariance of concepts and fuzzy TBoxes/ABoxes by relating them to fuzzy first-order logic in the style of~\cite{LutzPW11} would be interesting but is beyond the scope of the current paper.}


\subsection{Motivations}

Fuzzy transition systems, fuzzy automata and fuzzy Kripke models are structures not oriented towards modeling domains that use terminological knowledge to describe individuals. Although bisimulations have been formulated and studied for them~\cite{CaoCK11,CiricIDB12,EleftheriouKN12,Fan15}, it is desirable to extend, generalize or modify the notions of bisimulation to deal with fuzzy interpretations in DLs. The reason is that DLs have their own area of applications, with natural features, including number restrictions, TBoxes and ABoxes, which are not common in other formalisms. 

Fuzzy/crisp bisimulations have not been formulated and studied for fuzzy DLs under the G{\"o}dle semantics. Extending the notions of bisimulation formulated for G{\"o}dle modal logics~\cite{Fan15} to fuzzy DLs is not a trivial task, especially in coping with (qualified or unqualified) number restrictions. To deal with number restrictions, the approach of using relational composition as in~\cite{CiricIDB12,Fan15} for defining conditions of bisimulation is not suitable, and in the current paper we have to use ``elementary'' conditions to define bisimulations. Consequently, as demonstrated by Example~\ref{example: HFKSB}, our notion of fuzzy bisimulation is different in nature from the ones in \cite{CiricIDB12,Fan15} for non-witnessed structures. Restricting to simple logics like the ones studied in~\cite{Fan15}, this difference does not matter much, as invariance results and the Hennessy-Milner property are usually formulated and proved only for witnessed (or image-finite) structures. 
Our notion of fuzzy bisimulation is also different from the notion of (crisp) bisimulation defined in~\cite{CaoCK11} for fuzzy transition systems and the notion of cut-based (weak) bisimulation defined in~\cite{EleftheriouKN12} for Heyting-valued modal logics. The approach of~\cite{EleftheriouKN12} uses \modifiedA{a family} of crisp relations, where each of the \modifiedA{relations} is specified by a cut-value (see~\cite[Section~V]{Fan15} for a discussion).

Although bisimilarity for fuzzy DLs under the Zadeh semantics has been studied in~\cite{Nguyen-TFS2019}, the Zadeh semantics for fuzzy DLs is essentially different from the G{\"o}del semantics (see \cite[Section~VII]{Nguyen-TFS2019} for a discussion). Nguyen~\cite{Nguyen-TFS2019} justified that both fuzzy bisimulation and crisp bisimulation for fuzzy DLs under the Zadeh semantics seem undefinable. To define bisimilarity for fuzzy DLs under that semantics, he had to use cut-based simulations. 

The aim of this paper is twofold. First, we formulate fuzzy/crisp bisimulations and study their fundamental properties like invariance and the Hennessy-Milner property. Second, we study some of their possible applications. Regarding the first problem, in comparison with~\cite{Fan15}, the current paper makes a significant extension with several dimensions: 
\begin{itemize}
\item DLs are variants of multimodal logics (while the logics considered in~\cite{Fan15} are monomodal).

\item We study bisimulations in a uniform way for a large class of DLs, which allow PDL-like role constructors and features among inverse roles, (qualified/unqualified) number restrictions, nominals, the universal role and the concept constructor representing local reflexivity of a role.

\item Apart from invariance of concepts, we also study invariance of fuzzy TBoxes and fuzzy ABoxes.

\item We formulate and prove the Hennessy-Milner property for the class of witnessed and modally saturated interpretations, which is more general than the class of image-finite interpretations.
\end{itemize}
Regarding applications of bisimulation and bisimilarity, our study leads to new results on:
\begin{itemize}
\item separating the expressive powers of fuzzy DLs, 
\item minimizing fuzzy interpretations while preserving validity of fuzzy axioms/assertions.
\end{itemize}

Another potential application of bisimilarity in fuzzy DLs is concept learning for the domains in which individuals are described not only by fuzzy attributes but also by fuzzy relations between individuals. The point is that bisimilarity is a natural notion of indiscernibility for such domains. Our study provides theoretical results and forms a starting point for concept learning in fuzzy DLs under the G{\"o}del semantics. 

\subsection{The Structure of the Paper}

The remainder of this paper is structured as follows.  
In Section~\ref{section: prel}, we formally specify the considered fuzzy DLs and their G\"odel semantics. In Section~\ref{sec: fus-bis}, we define and study fuzzy bisimulations for those fuzzy DLs and bisimilarity relations based on such bisimulations. In Section~\ref{section: crisp bis}, we provide notions and results on crisp bisimulation and strong bisimilarity for fuzzy DLs extended with involutive negation or the Baaz projection operator. In Section~\ref{section: exp-powers}, we give results on separating the expressive powers of fuzzy DLs by using fuzzy bisimulations. In Section~\ref{section: minimization}, we provide results on using strong bisimilarity to minimize fuzzy interpretations while preserving validity of fuzzy axioms/assertions. Conclusions are given in Section~\ref{sec: conc}. 

This work revises and extends the conference papers~\cite{HaNNT18,cBSfDL2}, which do not contain proofs. Apart from proofs, the current paper extends~\cite{HaNNT18,cBSfDL2} with the results of Section~\ref{section: exp-powers} and Corollary~\ref{cor: HFJHW}. 


\section{Preliminaries}
\label{section: prel}

In this section, we recall the G\"odel fuzzy operators, fuzzy DLs under the G\"odel semantics and define related notions that are needed for this paper. 

\subsection{The G\"odel Fuzzy Operators}

The family of G\"odel fuzzy operators are defined as follows, where $p, q \in [0,1]$:
\begin{eqnarray*}
	p \fand q & = & \min\{p,q\} \\
	p \fOr q & = & \max\{p,q\} \\
	\fneg p & = & (\textrm{if $p = 0$ then 1 else 0}) \\
	(p \fto q) & = & (\textrm{if $p \leq q$ then 1 else $q$}) \\
	(p \fequiv q) & = & (p \fto q) \fand (q \fto p). 
\end{eqnarray*}

Note that $\fand$ and $\fOr$ are associative and commutative. 
Also note that $(p \fequiv q) = 1$ if $p = q$, and $(p \fequiv q) = \min\{p,q\}$ otherwise. Clearly, $\fequiv$ is commutative. 
Assume that the decreasing order of the fuzzy operators w.r.t.\ the binding strength is $\fneg$, $\fand$, $\fOr$, $\fto$, $\fequiv$. 
The following lemma can easily be checked. 

\begin{lemma}
The following assertions hold for all $x,x',y,y',z \in [0,1]$.
\begin{eqnarray}
x \leq x' \textrm{ and } y \leq y' & \!\!\textrm{implies}\!\! & x \fand y \leq x' \fand y' \label{fop: HSDJW 1}\\
x' \leq x \textrm{ and } y \leq y' & \!\!\textrm{implies}\!\! & x \fto y \leq x' \fto y' \label{fop: HSDJW 2}\\
x \fand y \leq z & \!\textrm{iff}\! & x \leq y \fto z \label{fop: HSDJW 3} \\
x \fand (y \fequiv z) & \leq & y \fequiv x \fand z \label{fop: HSDJW 4} \\
x \fand (y \fequiv z) & \leq & (x \fto y) \fto z \label{fop: HSDJW 4b} \\
x \fto (y \fequiv z) & \leq & x \fand y \fto z \label{fop: HSDJW 5} \\
x \fto (y \fto z) & \leq & y \fto (x \fto z) \label{fop: HSDJW 5b} \\
(x \fequiv x') \fand (y \fequiv y') & \leq & x \fand y \fequiv x' \fand y'. \label{fop: HSDJW 6}
\end{eqnarray}
\end{lemma}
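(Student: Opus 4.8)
The plan is to prove \eqref{fop: HSDJW 1} and the residuation law \eqref{fop: HSDJW 3} directly, and then to derive all of the remaining six inequalities from these two together with a couple of one-line identities, so that essentially the whole lemma rests on the Gödel adjunction. Inequality \eqref{fop: HSDJW 1} is just monotonicity of $\min$: from $x \fand y \leq x \leq x'$ and $x \fand y \leq y \leq y'$ one reads off $x \fand y \leq x' \fand y'$. For \eqref{fop: HSDJW 3} I would split on whether $y \leq z$. If $y \leq z$ then $y \fto z = 1$ and both sides hold (the left because $x \fand y \leq y \leq z$, the right because $x \leq 1$); if $y > z$ then $y \fto z = z$, and since $y > z$ we have $x \fand y \leq z$ exactly when $x \leq z$, i.e.\ exactly when $x \leq y \fto z$.

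Before handling the rest I would record three cheap consequences that do most of the work: the \emph{modus ponens} identity $x \fand (x \fto y) = x \fand y$ (split $x \leq y$ vs.\ $x > y$); the \emph{currying} identity $x \fto (y \fto z) = (x \fand y) \fto z$, obtained by applying \eqref{fop: HSDJW 3} twice against an arbitrary $a$, which reduces both sides to $\min\{a,x,y\} \leq z$; and the bounds $(y \fequiv z) \leq (y \fto z)$ and $(y \fequiv z) \leq (z \fto y)$, immediate from $\fequiv \,=\, (\cdot \fto \cdot)\fand(\cdot \fto \cdot)$ and $\fand = \min$. Currying plus commutativity of $\fand$ gives \eqref{fop: HSDJW 5b} as an equality; combining currying with monotonicity of $\fto$ in its second argument and the bound $(y \fequiv z) \leq (y \fto z)$ gives \eqref{fop: HSDJW 5} at once. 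Monotonicity \eqref{fop: HSDJW 2} I would read straight off \eqref{fop: HSDJW 3}: putting $a = (x \fto y)$ yields $\min\{a,x\} \leq y$, whence $\min\{a,x'\} \leq \min\{a,x\} \leq y \leq y'$, so $a \leq (x' \fto y')$.

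The three remaining items \eqref{fop: HSDJW 4}, \eqref{fop: HSDJW 4b} and \eqref{fop: HSDJW 6} I would all dispatch by the same recipe: use \eqref{fop: HSDJW 3} to turn a goal of shape $A \leq (B \fto C)$ into $A \fand B \leq C$, collapse $A \fand B$ with the modus ponens identity and the $\fequiv$-bounds, and finish with a trivial $\min$-inequality. For \eqref{fop: HSDJW 4b} this reduces the goal to $\min\{(x \fto y),\, x,\, (y \fequiv z)\} \leq z$; since $(x \fto y) \fand x = x \fand y$ and $\min\{y,(y\fequiv z)\} \leq \min\{y,(y\fto z)\} = \min\{y,z\}$, the left side is $\leq z$. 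For \eqref{fop: HSDJW 6}, because each $\fequiv$ is a $\fand$ of two implications, it suffices to bound the two implication-factors of the right-hand side separately; after residuation the first becomes $\min\{(x\fto x'),(y\fto y'),x,y\} \leq \min\{x',y'\}$, which follows from $\min\{(x\fto x'),x\} = \min\{x,x'\} \leq x'$ and $\min\{(y\fto y'),y\} = \min\{y,y'\} \leq y'$. Inequality \eqref{fop: HSDJW 4} is handled identically, splitting its right-hand $\fequiv$ into its two factors.

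I do not expect a deep obstacle here; the content is elementary. The only point requiring care is the bookkeeping in \eqref{fop: HSDJW 4} and \eqref{fop: HSDJW 6}, where an equivalence sits on both sides (or nested inside), forcing one to break the target equivalence into its two implication factors and run the residuation-plus-modus-ponens reduction on each. A brute-force case analysis on the linear order of the variables in $[0,1]$ would also work but multiplies the cases; the residuation route keeps the argument uniform, and the one genuinely new ingredient beyond \eqref{fop: HSDJW 3} is the modus ponens identity.
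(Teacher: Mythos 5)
Your proposal is correct. Note, however, that the paper itself gives no proof of this lemma at all: it is stated with the remark ``The following lemma can easily be checked,'' i.e., the intended justification is a routine case-by-case verification using the explicit formulas for $\fand$, $\fto$ and $\fequiv$ on $[0,1]$. Your route is genuinely different and more structured: you establish \eqref{fop: HSDJW 1} and the residuation law \eqref{fop: HSDJW 3} directly, then derive everything else from residuation together with the modus ponens identity $x \fand (x \fto y) = x \fand y$, the currying identity, and the bounds $(y \fequiv z) \leq (y \fto z)$, $(y \fequiv z) \leq (z \fto y)$. Each of your reductions checks out: \eqref{fop: HSDJW 2} follows from \eqref{fop: HSDJW 3} exactly as you say; \eqref{fop: HSDJW 5b} becomes an equality via currying; and the residuate-then-collapse recipe does dispatch \eqref{fop: HSDJW 4}, \eqref{fop: HSDJW 4b}, \eqref{fop: HSDJW 5} and \eqref{fop: HSDJW 6} (for \eqref{fop: HSDJW 4} and \eqref{fop: HSDJW 6} one must, as you note, split the target $\fequiv$ into its two implication factors and bound each). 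What your approach buys is uniformity and generality: the argument uses only the adjunction $x \fand y \leq z \Leftrightarrow x \leq y \fto z$ plus idempotence of $\fand$, so it is valid in any Heyting algebra (in particular any G\"odel algebra), not just in $[0,1]$ with $\min$, whereas the brute-force check implicit in the paper is tied to the linear order. One cosmetic point: you invoke monotonicity of $\fto$ in its second argument when proving \eqref{fop: HSDJW 5} but only derive \eqref{fop: HSDJW 2} (of which it is the $x' = x$ instance) afterwards; since \eqref{fop: HSDJW 2} is obtained from \eqref{fop: HSDJW 3} alone, this is not circular, but you should reorder the derivations so that \eqref{fop: HSDJW 2} precedes \eqref{fop: HSDJW 5}.
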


For a finite set $\Gamma = \{p_1,\ldots,p_n\} \subset [0,1]$ with $n \geq 0$, we define: 
\begin{eqnarray*}
\modifiedA{\textstyle\bigotimes\Gamma} & = & p_1 \fand \cdots \fand p_n \fand 1 \\
\modifiedA{\textstyle\bigoplus\Gamma} & = & p_1 \fOr \cdots \fOr p_n \fOr 0.
\end{eqnarray*}

Given $R: \Delta \times \Delta' \to [0,1]$, the {\em inverse} $R^-$ of $R$ is the function of type $\Delta' \times \Delta \to [0,1]$ such that $R^-(x,y) = R(y,x)$ for all $x \in \Delta'$ and $y \in \Delta$. 
Given $R: \Delta \times \Delta' \to [0,1]$ and $S: \Delta' \times \Delta'' \to [0,1]$, the {\em composition} $R \circ S$ is the function of type $\Delta \times \Delta'' \to [0,1]$ defined as follows:
\[ (R \circ S)(x,y) = \sup \{R(x,z) \fand S(z,y) \mid z \in \Delta' \}. \]

Given $R,S: \Delta \times \Delta' \to [0,1]$, if $R(x,y) \leq S(x,y)$ for all $\tuple{x,y} \in \Delta \times \Delta'$, then we write $R \leq S$ and say that $S$ is {\em greater than or equal to} $R$. 
%
%
If $\mZ$ is a set of functions of type $\Delta \times \Delta' \to [0,1]$, then by $\sup\mZ$ we denote the function of the same type defined as follows:
\[ (\sup\mZ)(x,y) = \sup\{Z(x,y) \mid Z \in \mZ\}. \]


\subsection{Fuzzy Description Logics under the G\"odel Semantics}

By $\Phi$ we denote a set of symbols among $I$, $O$, $U$, $\Self$, $Q_n$ and $N_n$ (with $n \in \NN \setminus \{0\}$), which stand for inverse roles, nominals, the universal role, local reflexivity of a role, qualified number restrictions and unqualified number restrictions, respectively, with $n$ being the bound used in the number restriction. In this subsection, we first define the syntax of roles and concepts in the fuzzy DL \mLP, where $\mL$ extends the DL \ALCreg with fuzzy truth values and \mLP extends $\mL$ with the features from $\Phi$. We then define fuzzy interpretations and the G\"odel semantics of \mLP.

Our logic language uses a set $\CN$ of {\em concept names}, a set $\RN$ of role names, and a set $\IN$ of individual names. 
%
A {\em basic role} w.r.t.~$\Phi$ is either a role name or the inverse $r^-$ of a role name~$r$ (when $I \in \Phi$).

{\em Roles} and {\em concepts} of \mLP are defined as follows:
\begin{itemize}
	\item if $r \in \RN$, then $r$ is a role of \mLP, 
	\item if $R$, $S$ are roles of \mLP and $C$ is a concept of \mLP, then $R \circ S$, $R \mor S$, $R^*$ and $C?$ are roles of \mLP,
	\item if $I \in \Phi$ and $R$ is a role of \mLP, then $R^-$ is a role of \mLP,
	\item if $U \in \Phi$, then $U$ is a role of \mLP, called the {\em universal role} (we assume that $U \notin \RN$), 
	
	
	\item if $p \in [0,1]$, then $p$ is a concept of \mLP,
	\item if $A \in \CN$, then $A$ is a concept of \mLP,
	\item if $C$, $D$ are concepts of \mLP and $R$ is a role of \mLP, then:
	\begin{itemize}
		\item $C \mand D$, $C \to D$, $\lnot C$, $C \mor D$, $\V R.C$, $\E R.C$ are concepts of \mLP, 
		\item if $O \in \Phi$ and $a \in \IN$, then $\{a\}$ is a concept of \mLP,
		\item if $\Self \in \Phi$ and $r \in \RN$, then $\E r.\Self$ is a~concept of \mLP, 
		\item if $Q_n \in \Phi$ and $R$ is a basic role w.r.t.~$\Phi$, then $\geq\! n\,R.C$ and $<\!n\,R.C$ are concepts of \mLP, 
		\item if $N_n \in \Phi$ and $R$ is a basic role w.r.t.~$\Phi$, then $\geq\! n\,R$ and $<\!n\,R$ are concepts of \mLP. 
	\end{itemize}
\end{itemize}

\modifiedA{We also use $\bot$ to denote the concept $0$ and $\top$ to denote the concept $1$.}

By \mLPp we denote the largest sublanguage of \mLP that disallows the role constructors $R \circ S$, $R \mor S$, $R^*$, $C?$ and the concept constructors $\lnot C$, $C \mor D$, $\V R.C$, $<\!n\,R.C$, $<\!n\,R$ 
\modifiedA{and, in the case when $\Phi \cap \{Q_n \mid n \in \NN \setminus \{0\}\} = \emptyset$, uses $\to$ only in the form $C \to p$ or $p \to C$, where $p \in [0,1]$.}

\newcommand{\hasParent}{\mathit{hasParent}}
\newcommand{\Male}{\mathit{Male}}
\newcommand{\Female}{\mathit{Female}}
\newcommand{\confucius}{\mathit{confucius}}

\modifiedA{The role constructor $C?$ is called the test operator. 
We give below an example of a concept with the test operator, where $R^+$ denotes $R \circ R^*$:
\[ 
	\E (\modifiedB{\Male? \circ \hasParent})^+.\{\confucius\}
\]
This concept represents the set of \modifiedB{descendants of Confucius in the male line}. 
The formal semantics of concepts is specified in Definition~\ref{def: f-i}. 
}

\newcommand{\Person}{\mathit{Person}}
\newcommand{\Group}{\mathit{Group}}
\newcommand{\Post}{\mathit{Post}}
\newcommand{\Hobby}{\mathit{Hobby}}
\newcommand{\Topic}{\mathit{Topic}}

\newcommand{\hasCloseFriend}{\mathit{hasCloseFriend}}
\newcommand{\posts}{\mathit{posts}}
\newcommand{\postedBy}{\mathit{postedBy}}
\newcommand{\likes}{\mathit{likes}}
\newcommand{\likedBy}{\mathit{likedBy}}
\newcommand{\shares}{\mathit{shares}}
\newcommand{\sharedBy}{\mathit{sharedBy}}
\newcommand{\relatedTo}{\mathit{relatedTo}}
\newcommand{\interestedIn}{\mathit{interestedIn}}
\newcommand{\isMemberOf}{\mathit{isMemberOf}}
\newcommand{\hasMember}{\mathit{hasMember}}

\newcommand{\travelling}{\mathit{traveling}}
\newcommand{\shopping}{\mathit{shopping}}
\newcommand{\camping}{\mathit{camping}}
\newcommand{\fashion}{\mathit{fashion}}
\newcommand{\politics}{\mathit{politics}}
\newcommand{\arts}{\mathit{arts}}
\newcommand{\movies}{\mathit{movies}}

\begin{example}\label{example: JHDKS}
We can represent analytical data and knowledge about a social network by using:
\begin{itemize}
\item concept names: $\Person$, $\Male$, $\Female$, $\Group$, $\Post$, $\Hobby$, $\Topic$, \ldots
\item role names: 
	\begin{itemize}
	\item $\hasCloseFriend$ (a person has another person as a close friend), 
	\item $\posts$ (a person created a post), 
	\item $\postedBy$ (a post was created by a person), 
	\item $\likes$ (a person likes a post), 
	\item $\likedBy$ (a post is liked by a person), 
	\item $\shares$ (a person shares a post), 
	\item $\sharedBy$ (a post is shared by a person), 
	\item $\relatedTo$ (a hobby or a topic is related to another one), 
	\item $\interestedIn$ (a person has a hobby or is interested in a topic), 
	\item $\isMemberOf$ (a person is a member of a group), 
	\item $\hasMember$ (a group has a person as a member), \ldots
	\end{itemize}
	
\item individual names: 
	\begin{itemize}
	\item $\travelling$, $\shopping$, $\camping$, \ldots (as hobbies), 
	\item $\fashion$, $\politics$, $\arts$, $\movies$, \ldots (as topics), \ldots	
	\end{itemize}
\end{itemize}
The role names $\hasCloseFriend$, $\relatedTo$ and $\interestedIn$ are fuzzy predicates (i.e., may be graded). 
Assume that the constructors like $\E R.C$ or $\geq\!n\,R.C$ have a greater binding strength than the constructors $\mand$, $\mor$ and $\to$. 
Here are examples of complex concepts: 
\begin{itemize}
\item $\geq\!3\,\shares.(\Post \mand \E\relatedTo.\{\fashion\})$: this represents the fuzzy set of people who share at least 3 posts related to fashion,
\item $\E\interestedIn.\{\fashion\}\, \mand \geq\!5\,\hasCloseFriend.\E\interestedIn.\{\shopping\}$: this represents the fuzzy set of people interested in fashion and having at least 5 close friends who are interested in shopping, 
\item $0.5 \to \E\interestedIn.\{\camping\}$: roughly speaking, this concept stands for the fuzzy set of people interested in camping to a degree greater than or equal to~0.5, but we should understand it by taking into account the meaning of the G\"odel implication.
\myend  
\end{itemize}
\end{example}

We use letters $A$ and $B$ to denote {\em atomic concepts} (which are concept names), 
$C$ and $D$ to denote arbitrary concepts, 
$r$ and $s$ to denote {\em atomic roles} (which are role names), 
$R$ and $S$ to denote arbitrary roles, 
$a$ and $b$ to denote individual names. 

Given a finite set $\Gamma = \{C_1,\ldots,C_n\}$ of concepts, \modifiedA{we define:}
\begin{eqnarray*}
\modifiedA{\textstyle\bigsqcap\Gamma} & \modifiedA{=} & \modifiedA{C_1 \mand \ldots \mand C_n \mand 1,} \\
\modifiedA{\textstyle\bigsqcup\Gamma} & \modifiedA{=} & \modifiedA{C_1 \mor \ldots \mor C_n \mor 0.}
\end{eqnarray*}

\begin{definition}\label{def: f-i}
A {\em (fuzzy) interpretation} is a pair $\mI = \langle \Delta^\mI, \cdot^\mI \rangle$, where $\Delta^\mI$ is a~non-empty set, called the {\em domain}, and $\cdot^\mI$ is the {\em interpretation function}, which maps every individual name $a$ to an element $a^\mI \in \Delta^\mI$, every concept name $A$ to a function $A^\mI : \Delta^\mI \to [0,1]$, and every role name $r$ to a function \mbox{$r^\mI : \Delta^\mI \times \Delta^\mI \to [0,1]$}. 
The function $\cdot^\mI$ is extended to complex roles and concepts as follows~(cf.~\cite{BobilloDGS09}), where the extrema are taken in the complete lattice $[0, 1]$:
\begin{eqnarray*}
		U^\mI(x,y) & = & 1 \\
		(R^-)^\mI(x,y) & = & R^\mI(y,x) \\
		(C?)^\mI(x,y) & = & \textrm{(if $x = y$ then $C^\mI(x)$ else 0)} \\
		(R \circ S)^\mI(x,y) & = & \sup\{R^\mI(x,z) \fand S^\mI(z,y) \mid z \in \Delta^\mI \} \\
		(R \mor S)^\mI(x,y) & = & R^\mI(x,y) \fOr S^\mI(x,y) \\
		(R^*)^\mI(x,y) & = & \sup \{\modifiedA{\textstyle\bigotimes}\{R^\mI(x_i,x_{i+1}) \mid 0 \leq i < n\} \mid 
		n \geq 0,\ x_0,\ldots,x_n \in \Delta^\mI,\ x_0 = x,\ x_n = y\} \\[1ex]
		p^\mI(x) & = & p \\
		\{a\}^\mI(x) & = & \textrm{(if $x = a^\mI$ then 1 else 0)}\\
		(\lnot C)^\mI(x) & = & \fneg C^\mI(x) \\
		(C \mand D)^\mI(x) & = & C^\mI(x) \fand D^\mI(x) \\
		(C \mor D)^\mI(x) & = & C^\mI(x) \fOr D^\mI(x) \\
		(C \to D)^\mI(x) & = & (C^\mI(x) \fto D^\mI(x)) \\
		(\E r.\Self)^\mI(x) & = & r^\mI(x,x) \\
		(\E R.C)^\mI(x) & = & \sup \{R^\mI(x,y) \fand C^\mI(y) \mid y \in \Delta^\mI\} \\
		(\V R.C)^\mI(x) & = & \inf \{R^\mI(x,y) \fto C^\mI(y) \mid y \in \Delta^\mI\} \\
		(\geq n\,R.C)^\mI(x) & = & \sup \{\modifiedA{\textstyle\bigotimes}\{R^\mI(x,y_i) \fand C^\mI(y_i) \mid 1 \leq i \leq n\} \mid 
		y_1,\ldots,y_n \in \Delta^\mI,\ y_i \neq y_j \textrm{ if } i \neq j\} \\
		(< n\,R.C)^\mI(x) & = & \inf \{(\modifiedA{\textstyle\bigotimes}\{R^\mI(x,y_i) \fand C^\mI(y_i) \mid 1 \leq i \leq n\} \fto \\
		& & \quad\;\;\;\; \modifiedA{\textstyle\bigoplus}\{y_j = y_k \mid 1 \leq j < k \leq n\}) \mid y_1,\ldots,y_n \in \Delta^\mI\} \\ 
		(\geq n\,R)^\mI(x) & = & (\geq n\,R.1)^\mI(x) \\
		(< n\,R)^\mI(x) & = & (< n\,R.1)^\mI(x).
\end{eqnarray*}

\vspace{-1.8em}
		
\myend
\end{definition}

For definitions of the Zadeh, {\L}ukasiewicz and Product semantics for fuzzy DLs, we refer the reader to~\cite{BobilloDGS09}. 

\modifiedA{An interpretation $\mI$ is {\em crisp} if $\{0,1\}$ is the range of the functions $A^\mI$ and $r^\mI$ for all $A \in \CN$ and $r \in \RN$.} 

\begin{remark}\label{remark: JFLWB}
Observe that \mbox{$(<\! n\,R.C)^\mI(x)$} is either 1 or~0. Namely, \mbox{$(<\!n\,R.C)^\mI(x) = 1$} if, for every set $\{y_1$, \ldots, $y_n\}$ of $n$ pairwise distinct elements of $\Delta^\mI$, there exists \mbox{$1 \leq i \leq n$} such that \mbox{$R^\mI(x,y_i) \fand C^\mI(y_i) = 0$}. Otherwise, $(<\!n\,R.C)^\mI(x) = 0$. 
A similar observation holds for \mbox{$(<\! n\,R)^\mI(x)$}.\myend
\end{remark}

\begin{example}\label{example: HDKSL}
	Let $\RN = \{r\}$, $\CN = \{A\}$ and $\IN = \emptyset$. Consider the fuzzy interpretation $\mI$ illustrated and specified below:
	\begin{center}		
		\begin{tikzpicture}
		\node (I) {};
		\node (u) [node distance=0.2cm, below of=I] {$u:A_0$};
		\node (ub) [node distance=1.5cm, below of=u] {$v_2:A_{\,0.9}$};
		\node (v) [node distance=1.8cm, left of=ub] {$v_1:A_{\,0.5}$};
		\node (w) [node distance=1.8cm, right of=ub] {$v_3:A_{\,0.6}$};
		\draw[->] (u) to node [left]{\footnotesize{0.9}} (v);
		\draw[->] (u) to node [right]{\footnotesize{0.8}} (ub);
		\draw[->] (u) to node [right]{\footnotesize{0.7}} (w);
		\end{tikzpicture}
	\end{center}
	\begin{itemize}
		\item $\Delta^\mI = \{u,v_1,v_2,v_3\}$,
		\item $A^\mI(u) = 0$, $A^\mI(v_1) = 0.5$, $A^\mI(v_2) = 0.9$, $A^\mI(v_3) = 0.6$,
		\item $r^\mI(u,v_1) = 0.9$, $r^\mI(u,v_2) = 0.8$, $r^\mI(u,v_3) = 0.7$, 
		and $r^\mI(x,y) = 0$ for the other pairs $\tuple{x,y}$.
	\end{itemize}
	
	\noindent
	We have that:
	\begin{itemize}
		\item $(\V r.A)^\mI(u) = 0.5$, $(\E r.A)^\mI(u) = 0.8$,
		$(<\!2\,r.A)^\mI(u) = 0$, $(\geq\!2\,r.A)^\mI(u) = 0.6$,
		\item for $C = \V (r \mor r^-)^*.A$ and $1 \leq i \leq 3$: $C^\mI(v_i) = 0$, 
		\item for $C = \E (r \mor r^-)^*.A\,$:  
		$C^\mI(v_1) = 0.8$, $C^\mI(v_2) = 0.9$ and $C^\mI(v_3) = 0.7$.
		\myend
	\end{itemize}
\end{example}

A fuzzy interpretation $\mI$ is {\em witnessed w.r.t.\ \mLP}~(cf.~\cite{Hajek05}) if any infinite set under the infimum (resp.~supremum) operator in Definition~\ref{def: f-i} has \modifiedA{a} smallest (resp.\ biggest) element. 
The notion of being {\em witnessed w.r.t.\ \mLPp} is defined similarly under the assumption that only roles and concepts of \mLPp are allowed. 
A fuzzy interpretation $\mI$ is {\em finite} if $\Delta^\mI$, $\CN$, $\RN$ and $\IN$ are finite, and is {\em image-finite} w.r.t.\ $\Phi$ if, for every $x \in \Delta^\mI$ and every basic role $R$ w.r.t.~$\Phi$, $\{y \in \Delta^\mI \mid R^\mI(x,y) > 0\}$ is finite. 
Observe that every finite fuzzy interpretation is witnessed w.r.t.\ \mLP and, if $U \notin \Phi$, then every image-finite fuzzy interpretation w.r.t.\ $\Phi$ is witnessed w.r.t.~\mLPp.  

A {\em fuzzy assertion} in \mLP is an expression of the form $a \doteq b$, $a \not\doteq b$, $C(a) \bowtie p$ or $R(a,b) \bowtie p$, where $C$ is a concept of \mLP, $R$ is a role of \mLP, $\bowtie\ \in \{\geq, >, \leq, <\}$ and $p \in [0,1]$. A~{\em fuzzy ABox} in \mLP is a finite set of fuzzy assertions in \mLP. 

A {\em fuzzy GCI} (general concept inclusion) in \mLP is an expression of the form $(C \sqsubseteq D) \rhd p$, where $C$ and $D$ are concepts of \mLP, $\rhd \in \{\geq, > \}$ and $p \in (0,1]$. A {\em fuzzy TBox} in \mLP is a finite set of fuzzy GCIs in~\mLP. 

Given a fuzzy interpretation $\mI$ and a fuzzy assertion or GCI $\varphi$, we define the relation $\mI \models \varphi$ ($\mI$ {\em validates}~$\varphi$) as follows:
\[
\begin{array}{lcl}
\mI \models a \doteq b & \textrm{iff} & a^\mI = b^\mI, \\[0.5ex] 
\mI \models a \not\doteq b & \textrm{iff} & a^\mI \neq b^\mI, \\[0.5ex] 
\mI \models C(a) \bowtie p & \textrm{iff} & C^\mI(a^\mI) \bowtie p, \\[0.5ex] 
\mI \models R(a,b) \bowtie p & \textrm{iff} & R^\mI(a^\mI,b^\mI) \bowtie p, \\[0.5ex] 
\mI \models (C \sqsubseteq D) \rhd p & \textrm{iff} & (C \to D)^\mI(x) \rhd p \textrm{ for all } x \in \Delta^\mI.
\end{array}
\]

A fuzzy interpretation $\mI$ is a {\em model} of a fuzzy ABox $\mA$, denoted by $\mI \models \mA$, if $\mI \models \varphi$ for all $\varphi \in \mA$. Similarly, $\mI$ is a model of a fuzzy TBox $\mT$, denoted by $\mI \models \mT$, if $\mI \models \varphi$ for all $\varphi \in \mT$.

\begin{example}
Let's continue Example~\ref{example: JHDKS}. Suppose that $\mI$ is a fuzzy interpretation that validates the following fuzzy \modifiedA{GCIs}:
\begin{eqnarray}
\geq\!3\,\shares.(\Post \mand \E\relatedTo.\{\fashion\}) \sqsubseteq \E\interestedIn.\{\fashion\} & \geq & 0.5 \label{eq: HFJSK 1} \\
\E\interestedIn.\{\fashion\} \sqsubseteq \E\interestedIn.\{\shopping\} & \geq & 0.4 \label{eq: HFJSK 2} \\
(0.5 \to \E\interestedIn.\{\camping\}) \sqsubseteq \E\interestedIn.\{\travelling\} & \geq & 0.6 \label{eq: HFJSK 3} \\
\E\interestedIn.\{\camping\} \sqsubseteq \E\interestedIn.\{\travelling\} & \geq & 0.6 \label{eq: HFJSK 4} 
\end{eqnarray}
The fuzzy \modifiedA{GCI}~\eqref{eq: HFJSK 1} states that, for every $x \in \Delta^\mI$, 
\begin{itemize}
\item either $(\geq\!3\,\shares.(\Post \mand \E\relatedTo.\{\fashion\}))^\mI(x) \leq (\E\interestedIn.\{\fashion\})^\mI(x)$,
\item or $(\geq\!3\,\shares.(\Post \mand \E\relatedTo.\{\fashion\}))^\mI(x) > (\E\interestedIn.\{\fashion\})^\mI(x) \geq 0.5$. 
\end{itemize}
Similarly, the fuzzy \modifiedA{GCI}~\eqref{eq: HFJSK 2} states that, for every $x \in \Delta^\mI$, 
\begin{itemize}
\item either $(\E\interestedIn.\{\fashion\})^\mI(x) \leq (\E\interestedIn.\{\shopping\})^\mI(x)$,
\item or $(\E\interestedIn.\{\fashion\})^\mI(x) > (\E\interestedIn.\{\shopping\})^\mI(x) \geq 0.4$. 
\end{itemize}
Together they imply that, for every $x \in \Delta^\mI$,  
\begin{itemize}
\item either $(\E\interestedIn.\{\shopping\})^\mI(x) \geq (\geq\!3\,\shares.(\Post \mand \E\relatedTo.\{\fashion\}))^\mI(x)$,	
\item or $(\E\interestedIn.\{\shopping\})^\mI(x) \geq 0.4$. 
\end{itemize}
The fuzzy \modifiedA{GCI}~\eqref{eq: HFJSK 3} states that, for every $x \in \Delta^\mI$, 
\begin{itemize}
\item either $(\E\interestedIn.\{\camping\})^\mI(x) \geq 0.5$ and $(\E\interestedIn.\{\travelling\})^\mI(x) \geq 0.6$, 
\item or $(\E\interestedIn.\{\camping\})^\mI(x) \leq (\E\interestedIn.\{\travelling\})^\mI(x)$. 
\end{itemize}
The fuzzy \modifiedA{GCI}~\eqref{eq: HFJSK 4} states that, for every $x \in \Delta^\mI$, 
\begin{itemize}
\item either $(\E\interestedIn.\{\camping\})^\mI(x) \leq (\E\interestedIn.\{\travelling\})^\mI(x)$,  
\item or $(\E\interestedIn.\{\camping\})^\mI(x) > (\E\interestedIn.\{\travelling\})^\mI(x) \geq 0.6$. 
\end{itemize}
Thus, the fuzzy \modifiedA{GCI}~\eqref{eq: HFJSK 4} subsumes the fuzzy \modifiedA{GCI}~\eqref{eq: HFJSK 3}.
\myend
\end{example} 

Two concepts $C$ and $D$ are {\em equivalent}, denoted by $C \equiv D$, if $C^\mI = D^\mI$ for every fuzzy interpretation~$\mI$. 
Two roles $R$ and $S$ are {\em equivalent}, denoted by $R \equiv S$, if $R^\mI = S^\mI$ for every fuzzy interpretation~$\mI$. 

We say that a role $R$ is in {\em inverse normal form} if the inverse constructor is applied in $R$ only to role names. 
In this paper, we assume that roles are presented in inverse normal form because every role can be translated to an equivalent role in inverse normal form using the following rules:
\[
\begin{array}{rclrcl}
U^- & \equiv & U &
(R \circ S)^- & \equiv & S^- \circ R^-\\

(R^-)^- & \equiv & R &
(R \sqcup S)^- & \equiv & R^- \sqcup S^-\\

(C?)^- & \equiv & C?\qquad\qquad &
(R^*)^- & \equiv & (R^-)^*.
\end{array}
\]

\begin{remark}\label{remark: OFHSJ}
	The concept constructors $\lnot C$ and $C \mor D$ can be excluded from \mLP because	
	\begin{eqnarray*}
		\lnot C & \equiv & (C \to 0) \\
		C \mor D & \equiv & ((C \to D) \to D) \mand ((D \to C) \to C).
	\end{eqnarray*}
	
\vspace{-1.8em}

\myend
\end{remark}


\section{Fuzzy Bisimulations}
\label{sec: fus-bis}

In this section, we define and study fuzzy bisimulations for fuzzy DLs under the G{\"o}del semantics, as well as bisimilarity relations based on such bisimulations. 
\modifiedA{As mentioned in the Introduction, the notion of {\em fuzzy $\Phi$-bisimulation} (specified by Definition~\ref{def: DHGAK} given below) is designed to satisfy the invariance of concepts and the Hennessy-Milner property. The former property states that, if $Z : \Delta^\mI \times \Delta^\mIp \to [0,1]$ is a fuzzy $\Phi$-bisimulation between fuzzy interpretations $\mI$ and $\mIp$, then for every $x \in \Delta^\mI$, $x' \in \Delta^\mIp$ and every concept $C$ of \mLP, 
\mbox{$Z(x,x') \leq (C^\mI(x) \fequiv C^\mIp(x'))$}.}

\newcommand{\eqFBlast}{\eqref{eq: FB 7n}\xspace}

\begin{definition}\label{def: DHGAK}
	Let $\Phi \subseteq \{I,O,U,\Self,Q_n,N_n \mid n \in \NN \setminus \{0\}\}$ be a set of features and $\mI$, $\mI'$ fuzzy interpretations.
	A function $Z : \Delta^\mI \times \Delta^\mIp \to [0,1]$ is called a {\em fuzzy $\Phi$-bisimulation} (under the G\"odel semantics) between $\mI$ and $\mI'$ if the following conditions hold for every $x \in \Delta^\mI$, $x' \in \Delta^\mIp$, $A \in \CN$, $a \in \IN$, $r \in \RN$ and every basic role $R$ w.r.t.~$\Phi$:
	\begin{eqnarray}
		&& Z(x,x') \leq (A^\mI(x) \fequiv A^\mIp(x')) \label{eq: FB 2} \\[0.5ex]
		&& \V y \in \Delta^\mI\, \E y' \in \Delta^\mIp\ Z(x,x') \fand R^\mI(x,y) \leq Z(y,y') \fand R^\mIp(x',y') \label{eq: FB 3} \\[0.5ex]
		&& \V y' \in \Delta^\mIp\, \E y \in \Delta^\mI\ Z(x,x') \fand R^\mIp(x',y') \leq Z(y,y') \fand R^\mI(x,y); \label{eq: FB 4}
	\end{eqnarray}
	if $O \in \Phi$, then
	\begin{eqnarray}
	Z(x,x') \leq (x = a^\mI \Leftrightarrow x' = a^\mIp); \label{eq: FB 5}
	\end{eqnarray}
	if $U \in \Phi$, then
	\begin{eqnarray}
		&&\V y \in \Delta^\mI\ \E y' \in \Delta^\mIp\ Z(x,x') \leq Z(y,y') \label{eq: FB 8} \\
		&&\V y' \in \Delta^\mIp\ \E y \in \Delta^\mI\ Z(x,x') \leq Z(y,y'); \label{eq: FB 9}
	\end{eqnarray}
	if $\Self \in \Phi$, then
	\begin{eqnarray}
		Z(x,x') \leq (r^\mI(x,x) \fequiv r^\mIp(x',x')); \label{eq: FB 10}
	\end{eqnarray}
if $Q_n \in \Phi$, then 
\begin{eqnarray}
\parbox{14.5cm}{if $Z(x,x') > 0$ and $y_1,\ldots,y_n$ are pairwise distinct elements of $\Delta^\mI$ such that \mbox{$R^\mI(x,y_j) > 0$} for all $1 \leq j \leq n$, then there exist pairwise distinct elements $y'_1,\ldots,y'_n$ of $\Delta^\mIp$ such that, for every $1 \leq i \leq n$, there exists $1 \leq j \leq n$ such that $Z(x,x') \fand \modified{R^\mI(x,y_1) \fand\cdots\fand R^\mI(x,y_n)} \leq Z(y_j,y'_i) \fand R^\mIp(x',y'_i)$,} \label{eq: FB 6} \\[0.5em]
\parbox{14.5cm}{if $Z(x,x') > 0$ and $y'_1,\ldots,y'_n$ are pairwise distinct elements of $\Delta^\mIp$ such that \mbox{$R^\mIp(x',y'_j) > 0$} for all $1 \leq j \leq n$, then there exist pairwise distinct elements $y_1,\ldots,y_n$ of $\Delta^\mI$ such that, for every $1 \leq i \leq n$, there exists $1 \leq j \leq n$ such that $Z(x,x') \fand \modified{R^\mIp(x',y'_1) \fand\cdots\fand R^\mIp(x',y'_n)} \leq Z(y_i,y'_j) \fand R^\mI(x,y_i)$;} \label{eq: FB 7}
\end{eqnarray}
if $N_n \in \Phi$, then 
\begin{eqnarray}
\parbox{14.5cm}{if $Z(x,x') > 0$ and $y_1,\ldots,y_n$ are pairwise distinct elements of $\Delta^\mI$ such that \mbox{$R^\mI(x,y_j) > 0$} for all $1 \leq j \leq n$, then there exist pairwise distinct elements $y'_1,\ldots,y'_n$ of $\Delta^\mIp$ such that, for every $1 \leq i \leq n$, \modified{$Z(x,x') \fand R^\mI(x,y_1) \fand\cdots\fand R^\mI(x,y_n) \leq R^\mIp(x',y'_i)$},} \label{eq: FB 6n} \\[0.5em]
\parbox{14.5cm}{if $Z(x,x') > 0$ and $y'_1,\ldots,y'_n$ are pairwise distinct elements of $\Delta^\mIp$ such that \mbox{$R^\mIp(x',y'_j) > 0$} for all $1 \leq j \leq n$, then there exist pairwise distinct elements $y_1,\ldots,y_n$ of $\Delta^\mI$ such that, for every $1 \leq i \leq n$, \modified{$Z(x,x') \fand R^\mIp(x',y'_1) \fand\cdots\fand R^\mIp(x',y'_n) \leq R^\mI(x,y_i)$}.} \label{eq: FB 7n}
\end{eqnarray}
For example, if $\Phi = \{I,Q_2\}$, then only Conditions \eqref{eq: FB 2}-\eqref{eq: FB 4}, \eqref{eq: FB 6} and \eqref{eq: FB 7} with $n = 2$ are essential.
\myend
\end{definition}

Conditions~\eqref{eq: FB 6}, \eqref{eq: FB 7}, \eqref{eq: FB 6n} and \eqref{eq: FB 7n} (for the cases with $Q_n$ and $N_n$) contain corrections w.r.t.~\cite{fss/NguyenHNNT20}. 

\begin{example}\label{example: HDJAA 2}
	Let $\RN = \{r\}$, $\CN = \{A\}$, $\IN = \emptyset$ and $\Phi = \emptyset$. Consider the fuzzy interpretations $\mI$ and $\mI'$ illustrated below (and specified similarly as in Example~\ref{example: HDKSL}).
	\begin{center}
	\begin{tikzpicture}
	\node (x0) {};
	\node (x) [node distance=3.0cm, right of=x0] {};
	\node (I) [node distance=0.0cm, below of=x] {$\mI$};
	\node (I1) [node distance=5.0cm, right of=I] {$\mI'$};
	\node (u) [node distance=0.7cm, below of=I] {$u:A_0$};
	\node (ub) [node distance=1.5cm, below of=u] {};
	\node (v) [node distance=1.0cm, left of=ub] {$v:A_{\,0.8}$};
	\node (w) [node distance=1.0cm, right of=ub] {$w:A_{\,0.9}$};
	\node (up) [node distance=0.7cm, below of=I1] {$u':A_0$};
	\node (ubp) [node distance=1.5cm, below of=up] {};
	\node (vp) [node distance=1.0cm, left of=ubp] {$v':A_{\,0.8}$};
	\node (wp) [node distance=1.0cm, right of=ubp] {$w':A_{\,0.9}$};
	\draw[->] (u) to node [left]{\footnotesize{0.7}} (v);
	\draw[->] (u) to node [right]{\footnotesize{1}} (w);
	\draw[->] (up) to node [left]{\footnotesize{1}} (vp);
	\draw[->] (up) to node [right]{\footnotesize{0.9}} (wp);
	\end{tikzpicture}
	\end{center}	
	
	
	\noindent
	If $Z$ is a fuzzy $\Phi$-bisimulation between $\mI$ and $\mI'$, then: 
	\begin{itemize}
		\item $Z(v,w') \leq 0.8$ and $Z(w,v') \leq 0.8$ due to~\eqref{eq: FB 2}, 
		\item $Z(u,u') \leq 0.8$ due to~\eqref{eq: FB 4} for $x = u$, $x' = u'$ and $y' = v'$, 
		\item $Z(u,v') = Z(u,w') = Z(v,u') = Z(w,u') = 0$ due to~\eqref{eq: FB 2}. 
	\end{itemize}
	It can be \modifiedA{checked} that the function $Z : \Delta^\mI \times \Delta^\mIp \to [0,1]$ specified by
	\begin{itemize}
		\item $Z(v,v') = Z(w,w') = 1$, 
		\item $Z(v,w') = Z(w,v') = Z(u,u') = 0.8$,
		\item $Z(u,v') = Z(u,w') = Z(v,u') = Z(w,u') = 0$
	\end{itemize}
	is a fuzzy $\Phi$-bisimulation between $\mI$ and $\mI'$, and hence is the greatest fuzzy $\Phi$-bisimulation between $\mI$ and~$\mI'$.  
\myend
\end{example}

\begin{proposition}\label{prop: HFHSJ}
	Let $\mI$, $\mIp$ and $\mI''$ be fuzzy interpretations.
	\begin{enumerate}
		\item The function $Z : \Delta^\mI \times \Delta^\mI \to [0,1]$ specified by \[ Z(x,x') = (\textrm{if $x = x'$ then 1 else 0}) \] is a fuzzy $\Phi$-bisimulation between $\mI$ and itself.
		\item If $Z$ is a fuzzy $\Phi$-bisimulation between $\mI$ and $\mIp$, then $Z^-$ is a fuzzy $\Phi$-bisimulation between $\mIp$ and~$\mI$.
		\item If $Z_1$ is a fuzzy $\Phi$-bisimulation between $\mI$ and $\mIp$, and $Z_2$ is a fuzzy $\Phi$-bisimulation between $\mIp$ and $\mI''$, then $Z_1 \circ Z_2$ is a fuzzy $\Phi$-bisimulation between $\mI$ and $\mI''$.
		\item If $\mZ$ is a finite set of fuzzy $\Phi$-bisimulations between $\mI$ and $\mIp$, then $\sup\mZ$ is also a fuzzy $\Phi$-bisimulation between $\mI$ and $\mIp$.
	\end{enumerate}   
\end{proposition}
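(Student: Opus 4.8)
The plan is to verify, for each of the four items, that the proposed function satisfies every condition of Definition~\ref{def: DHGAK} that is active for the given~$\Phi$. It helps to group the conditions: \eqref{eq: FB 2}, \eqref{eq: FB 5} and \eqref{eq: FB 10} are \emph{local} (pointwise in $\tuple{x,x'}$); \eqref{eq: FB 3}, \eqref{eq: FB 4}, \eqref{eq: FB 8}, \eqref{eq: FB 9} are \emph{back-and-forth} conditions; and \eqref{eq: FB 6}--\eqref{eq: FB 7n} are the \emph{number-restriction} conditions. The algebraic tools I would use throughout are the monotonicity laws \eqref{fop: HSDJW 1}--\eqref{fop: HSDJW 2} and \eqref{fop: HSDJW 6}, the commutativity of $\fequiv$, the transitivity law $(p \fequiv q) \fand (q \fequiv r) \leq (p \fequiv r)$ (easily checked from the definition of $\fequiv$), and the fact that $\fand = \min$ distributes over arbitrary suprema in $[0,1]$, i.e.\ $\min\{\sup_i a_i,\, b\} = \sup_i \min\{a_i, b\}$. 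For item~1 (the identity $Z$), note $Z(x,x') = 1$ only when $x = x'$, so each local condition reduces to an instance like $1 \leq (A^\mI(x) \fequiv A^\mI(x)) = 1$, and in every back-and-forth or number-restriction condition one takes the witness(es) on the right to be the \emph{same} element(s) as on the left, making the two sides literally equal. For item~2, I would observe that passing from $Z$ to $Z^-$ (with $\mI$ and $\mIp$ interchanged) exchanges the paired conditions \eqref{eq: FB 3}$\leftrightarrow$\eqref{eq: FB 4}, \eqref{eq: FB 8}$\leftrightarrow$\eqref{eq: FB 9}, \eqref{eq: FB 6}$\leftrightarrow$\eqref{eq: FB 7} and \eqref{eq: FB 6n}$\leftrightarrow$\eqref{eq: FB 7n}, while the self-dual conditions \eqref{eq: FB 2}, \eqref{eq: FB 5}, \eqref{eq: FB 10} are preserved because $\fequiv$ is commutative and $\Leftrightarrow$ is symmetric. (When $I \in \Phi$ one uses that the family of basic roles is closed under taking inverses, so the role ranging in each condition is unaffected.)

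For item~4, the key point is that $\mZ$ is \emph{finite}, so $(\sup\mZ)(x,x') = Z_{i^*}(x,x')$ for some index $i^* = i^*(x,x')$; that is, the maximum is attained. For each condition I would evaluate its left-hand side at this attaining $Z_{i^*}$, apply the corresponding condition for $Z_{i^*}$ to obtain the required witness(es), and lift the inequality to $\sup\mZ$ using $Z_{i^*} \leq \sup\mZ$ pointwise together with monotonicity of $\fand$. Finiteness is essential: it is what lets the left-hand side be realized by a single member of $\mZ$, so that a witness provided by that member carries over.

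Item~3 (closure under composition) is where the real work lies. Writing $Z = Z_1 \circ Z_2$ with $Z(x,x'') = \sup_{z}\, Z_1(x,z) \fand Z_2(z,x'')$, I would first use distributivity of $\min$ over $\sup$ to rewrite each condition's left-hand side as a supremum over $z \in \Delta^\mIp$, and then, for each fixed $z$, chain a witness through $\mIp$: apply the relevant condition for $Z_1$ at $\tuple{x,z}$ to get an intermediate witness in $\Delta^\mIp$, apply the condition for $Z_2$ at $\tuple{z,x''}$ to push it into $\Delta^{\mI''}$, and recombine via $Z_1(u,v)\fand Z_2(v,w) \leq (Z_1\circ Z_2)(u,w)$. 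The local conditions \eqref{eq: FB 2}, \eqref{eq: FB 5}, \eqref{eq: FB 10} follow cleanly this way (using transitivity of $\fequiv$ and of $\Leftrightarrow$). The first genuine difficulty is that, for the back-and-forth conditions, the definition demands a \emph{single} target witness valid against the whole supremum over $z$, whereas the naive chaining produces a witness $y''_z$ depending on $z$. I would handle this by observing that the target-side value $(Z_1\circ Z_2)(y,\cdot)$ is itself a supremum over $\Delta^\mIp$ that absorbs the per-$z$ contributions: the matchings supplied by $Z_1$ and $Z_2$ force $(Z_1\circ Z_2)(y,y''_z) \geq Z_1(x,z)\fand Z_2(z,x'')$ for the chained witness, so collecting over $z$ recovers the required bound at a single target. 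Making this selection precise (in particular when the set of chained witnesses is infinite) is one of the two technical cruxes.

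The second, and I expect most delicate, point is the number-restriction conditions \eqref{eq: FB 6}--\eqref{eq: FB 7n}. Here one must transport an $n$-tuple of \emph{pairwise distinct} elements first through $Z_1$ into $\Delta^\mIp$ and then through $Z_2$ into $\Delta^{\mI''}$, maintaining pairwise distinctness at each stage and correctly propagating the conjunction $R^\mI(x,y_1)\fand\cdots\fand R^\mI(x,y_n)$ that appears on the left of \eqref{eq: FB 6} and \eqref{eq: FB 6n}. The plan is to pick a $z$ with $Z_1(x,z) > 0$ and $Z_2(z,x'') > 0$ (available since $Z(x,x'') > 0$), feed the $n$-tuple through the $Z_1$-condition to get pairwise distinct intermediate elements, check that these still have positive $R^\mIp(z,\cdot)$-value so that the $Z_2$-condition applies, and then feed them through $Z_2$. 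I expect the bookkeeping of preserving distinctness and the $n$-fold conjunction across the two composition steps to be the main obstacle, and it is precisely the place where the corrected formulations noted immediately after Definition~\ref{def: DHGAK} are needed.
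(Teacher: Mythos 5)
Your plans for items 1, 2 and 4 are sound: taking identical witnesses for the identity function, the symmetry argument exchanging the paired conditions \eqref{eq: FB 3}$\leftrightarrow$\eqref{eq: FB 4}, \eqref{eq: FB 8}$\leftrightarrow$\eqref{eq: FB 9}, \eqref{eq: FB 6}$\leftrightarrow$\eqref{eq: FB 7}, \eqref{eq: FB 6n}$\leftrightarrow$\eqref{eq: FB 7n}, and, for item 4, the observation that finiteness of $\mZ$ makes the supremum attained so that a witness supplied by the attaining member lifts to $\sup\mZ$. Note that the paper offers no proof at all (it declares the proposition ``straightforward''), so there is no official argument to compare against; the comparison below is on the mathematics.

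The genuine gap is in item 3, at exactly the point you flag as the first ``technical crux'', and your proposed resolution does not close it. You claim that the target-side supremum ``absorbs the per-$z$ contributions'', so that ``collecting over $z$ recovers the required bound at a single target''. It does not: the chained witness $y''_z$ depends on $z$, and when the supremum defining $(Z_1 \circ Z_2)(x,x'')$ is not attained, there need not exist any single element of $\Delta^{\mI''}$ absorbing all contributions. This failure is not a matter of making a selection precise; the step is false without an additional hypothesis. Take $\CN=\{A\}$, $\RN=\{r\}$, $\IN=\emptyset$, $\Phi=\emptyset$, and let: $\mI$ have $\Delta^\mI=\{x,y\}$ with $r^\mI(x,y)=1$, $A^\mI(x)=0$, $A^\mI(y)=1$; $\mIp$ have elements $z_n, w_{n,m}$ (for $n,m\geq 2$) with $r^\mIp(z_n,w_{n,m})=1-1/m$, $A^\mIp(z_n)=0$, $A^\mIp(w_{n,m})=1$; and $\mI''$ have elements $x'', y''_m$ (for $m\geq 2$) with $r^{\mI''}(x'',y''_m)=1-1/m$, $A^{\mI''}(x'')=0$, $A^{\mI''}(y''_m)=1$ (all values not listed being $0$). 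Define $Z_1(x,z_n)=1-1/n$, $Z_1(y,w_{n,m})=1$, else $0$; and $Z_2(z_n,x'')=1$, $Z_2(w_{n,m},y''_{m'})=1$, else $0$. One checks directly that both are fuzzy $\emptyset$-bisimulations: for $Z_1$, Condition~\eqref{eq: FB 3} at $(x,z_n)$ is witnessed by $w_{n,n}$ and \eqref{eq: FB 4} by $y$; for $Z_2$, Condition~\eqref{eq: FB 3} at $(z_n,x'')$ applied to $w_{n,m}$ is witnessed by $y''_m$, and \eqref{eq: FB 4} applied to $y''_m$ by $w_{n,m}$; all remaining pairs carry value $0$ or have no successors. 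Yet $(Z_1\circ Z_2)(x,x'')=\sup_n\,(1-1/n)=1$, while for every $y''\in\Delta^{\mI''}$ we have $(Z_1\circ Z_2)(y,y'')\fand r^{\mI''}(x'',y'')\leq 1-1/m<1$ (when $y''=y''_m$) or $=0$ (when $y''=x''$), so Condition~\eqref{eq: FB 3} fails for $Z_1\circ Z_2$ at the pair $\tuple{x,x''}$ with $y$. Consequently your chaining argument can be completed only when the suprema defining $Z_1\circ Z_2$ are attained (e.g., when $\Delta^\mIp$ is finite, or under witnessed-interpretation-style assumptions): then one chains through an attaining $z^*$ and the back-and-forth and number-restriction conditions follow as you describe, with the distinctness bookkeeping for \eqref{eq: FB 6}--\eqref{eq: FB 7n} being routine. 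As stated, for arbitrary fuzzy interpretations, item 3 cannot be proved at all, so any correct write-up must either add such an attainment hypothesis or weaken the claim; the same caveat applies to the paper's own unproved assertion.
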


The proof of this proposition is straightforward.

\begin{remark}
	It seems that the assertion~4 of Proposition~\ref{prop: HFHSJ} cannot be strengthened by allowing $\mZ$ to be infinite. So, the greatest fuzzy $\Phi$-bisimulation between $\mI$ and $\mIp$ may not exist. 
	As stated later by Theorem~\ref{theorem: fG H-M}, if $\mI$ and $\mI'$ are witnessed and modally saturated w.r.t.~\mLPp (see Definition~\ref{def: modally saturated}), then the greatest fuzzy $\Phi$-bisimulation between $\mI$ and $\mIp$ exists.
\myend
\end{remark} 

Let $\mI$ and $\mIp$ be fuzzy interpretations. For $x \in \Delta^\mI$ and $x' \in \Delta^\mIp$, we write $x \simP x'$ to denote that there exists a fuzzy  $\Phi$-bisimulation $Z$ between $\mI$ and $\mI'$ such that $Z(x,x') = 1$. If $x \simP x'$, then we say that $x$ and $x'$ are {\em $\Phi$-bisimilar}. 
Let $\simPI$ be the binary relation on $\Delta^\mI$ such that, for $x,x' \in \Delta^\mI$, \mbox{$x \simPI x'$} iff $x \simP x'$. By Proposition~\ref{prop: HFHSJ}, $\simPI$ is an equivalence relation. We call it the {\em $\Phi$-bisimilarity} relation of~$\mI$. 
If $\IN \neq \emptyset$ and there exists a fuzzy $\Phi$-bisimulation $Z$ between $\mI$ and $\mI'$ such that $Z(a^\mI,a^\mIp)=1$ for all $a \in \IN$, then we say that $\mI$ and $\mI'$ are {\em $\Phi$-bisimilar} and write $\mI \simP \mIp$. 

{\markModificationA
\begin{remark}\label{remark: JHFKS}
As mentioned earlier, in~\cite{Fan15} Fan introduced fuzzy bisimulations for the fuzzy monomodal logic $K$ and its extension with converse under the G\"odel semantics. Her definition of bisimulations when reformulated for \mLP with $\Phi \subseteq \{I\}$ can be stated as follows. 
Given $\Phi = \emptyset$ or $\Phi = \{I\}$, a function \mbox{$Z : \Delta^\mI \times \Delta^\mIp \to [0,1]$} is a {\em fuzzy $\Phi$-bisimulation} between fuzzy interpretations $\mI$ and $\mI'$ if the following conditions hold for every $x \in \Delta^\mI$, every $x' \in \Delta^\mIp$ and every basic role $R$ w.r.t.~$\Phi$:
\begin{eqnarray}
Z(x,x') & \leq & \inf \{A^\mI(x) \fequiv A^\mIp(x') \mid A \in \CN\} \label{eq: Fan FB 2} \\[0.5ex]
Z^- \circ R^\mI & \leq & R^\mIp \circ Z^- \label{eq: Fan FB 3} \\[0.5ex]
Z \circ R^\mIp & \leq & R^\mI \circ Z. \label{eq: Fan FB 4}
\end{eqnarray}
 
Conditions~\eqref{eq: Fan FB 2}--\eqref{eq: Fan FB 4} correspond to Conditions~\eqref{eq: FB 2}--\eqref{eq: FB 4}, respectively. In particular, if $Z$ is a fuzzy $\Phi$-bisimulation between $\mI$ and $\mI'$ according to Definition~\ref{def: DHGAK}, then it is also a fuzzy $\Phi$-bisimulation between $\mI$ and $\mI'$ according to the modified definition. Conversely, when $\mI$ and $\mI'$ are image-finite w.r.t.~$\Phi$, every fuzzy $\Phi$-bisimulation between $\mI$ and $\mI'$ according to the modified definition is also a fuzzy $\Phi$-bisimulation between $\mI$ and $\mI'$ according to Definition~\ref{def: DHGAK}. Note that $\Phi$ is assumed here to be either $\emptyset$ or $\{I\}$.

For the case when $\Phi = \{I\}$, the above modified definition of fuzzy $\Phi$-bisimulations is equivalent to the one obtained from it by replacing \eqref{eq: Fan FB 3} and \eqref{eq: Fan FB 4} with the following, where $r$ is universally quantified over $\RN$:
\begin{eqnarray}
Z^- \circ r^\mI & = & r^\mIp \circ Z^- \label{eq: FanI FB 3} \\[0.5ex]
Z \circ r^\mIp & = & r^\mI \circ Z. \label{eq: FanI FB 4}
\end{eqnarray} 
This latter form reflects Fan's definition given in~\cite{Fan15}. 

Observe that Conditions~\eqref{eq: Fan FB 3}--\eqref{eq: FanI FB 4} are more compact than Conditions~\eqref{eq: FB 3} and \eqref{eq: FB 4}. However, it is hard to follow this style when extending the notion of fuzzy $\Phi$-bisimulation for dealing \modifiedB{with number restrictions} (i.e., when $\Phi$ contains $Q_n$ or $N_n$ for some $n$). 

\modifiedB{Under restrictions to image-finite structures and the logics considered in~\cite{Fan15}, our Theorems~\ref{theorem: UFNSJ}, \ref{theorem: fG H-M}, \ref{theorem: UFNSJ2} and~\ref{theorem: fG H-M 2} coincide with the results of~\cite{Fan15}.}
\myend
\end{remark}
}

\begin{example}\label{example: HFKSB}
	Let $\RN = \{r\}$, $\CN = \{A\}$, $\IN = \{a\}$ and $\Phi = \emptyset$. Consider the fuzzy interpretations $\mI$ and $\mI'$ illustrated below and specified similarly as in Example~\ref{example: HDKSL}, with $a^\mI = u$, $a^\mIp = u'$ and $\Delta^\mIp = \{u', v'_i \mid i \in \NN \setminus \{0\}\}$:
	\begin{center}
		\begin{tikzpicture}
		\node (x0) {};
		\node (x) [node distance=3.0cm, right of=x0] {};
		\node (I) [node distance=0.0cm, below of=x] {$\mI$};
		\node (I1) [node distance=6.0cm, right of=I] {$\mI'$};
		\node (u) [node distance=0.7cm, below of=I] {$u:A_0$};
		\node (v) [node distance=1.5cm, below of=u] {$v:A_1$};
		\node (up) [node distance=0.7cm, below of=I1] {$u':A_0$};
		\node (ubp) [node distance=1.5cm, below of=up] {};
		\node (v1) [node distance=3.0cm, left of=ubp] {$v'_1:A_1$};
		\node (v2) [node distance=0.5cm, left of=ubp] {$v'_2:A_1$};
		\node (v3) [node distance=1.0cm, right of=ubp] {\ldots};
		\node (vn) [node distance=3.0cm, right of=ubp] {$v'_n:A_1$};
		\node (vnp) [node distance=4.5cm, right of=ubp] {\ldots};
		\draw[->] (u) to node [left]{\footnotesize{1}} (v);
		\draw[->] (up) to node [above=2pt, pos=0.60]{$\frac{1}{2}$} (v1);
		\draw[->] (up) to node [left]{$\frac{2}{3}$} (v2);
		\draw[->] (up) to node [above=2pt, pos=0.65]{$\frac{n}{n+1}$} (vn);
		\end{tikzpicture}
	\end{center}
	The fuzzy interpretation $\mIp$ is similar to a fuzzy transition system given in~\cite[Fig.~2]{CaoCK11}. It is not witnessed w.r.t.\ \mLP. 
	We have that $v \simP v'_i$ for every $i$, but $u \not\simP u'$. Hence, $\mI$ and $\mIp$ are not $\Phi$-bisimilar. Treating $\mI$ and $\mIp$ as Kripke models in the fuzzy monomodal logic $K$, it can be checked that $Z = \{\tuple{u,u'}, \tuple{v,v'_i} \mid i \in \NN\setminus \{0\}\}$ is the greatest fuzzy bisimulation between $\mI$ and $\mIp$ according to Fan's definition of fuzzy bisimulation~\cite{Fan15}. This shows that our notion of fuzzy bisimulation is different in nature from the ones in \cite{CiricIDB12,Fan15} for non-witnessed structures. 
	\myend
\end{example}


\subsection{Invariance Results}
\label{section: invariance}

In this subsection, we prove an important property of fuzzy bisimulations under the G{\"odel} semantics. It states that, if fuzzy interpretations $\mI$ and $\mI'$ are witnessed w.r.t.\ \mLP and $Z$ is a fuzzy $\Phi$-bisimulation between them, then \mbox{$Z(x,x') \leq (C^\mI(x) \fequiv C^\mIp(x'))$} for every $x \in \Delta^\mI$, $x' \in \Delta^\mIp$ and every concept $C$ of \mLP. This is a part of Lemma~\ref{lemma: GDHAW}. We also present results on invariance of concepts (Theorem~\ref{theorem: UFNSJ}) as well as conditional invariance of fuzzy TBoxes/ABoxes (Theorems \ref{theorem: UDKMS} and \ref{theorem: IFDMS}) under bisimilarity in fuzzy DLs that use the G\"odel semantics. 

\newcommand{\TextLemmaGDHAW}{
	Let $\mI$ and $\mI'$ be fuzzy interpretations that are witnessed w.r.t.\ \mLP and $Z$ a fuzzy $\Phi$-bisimulation between $\mI$ and $ \mI'$. Then, the following properties hold for every concept $C$ of \mLP, every role $R$ of \mLP, every $x \in \Delta^\mI$ and every $x' \in \Delta^\mIp$:
	\begin{eqnarray}
	&& Z(x,x') \leq (C^\mI(x) \fequiv C^\mIp(x')) \label{eq: GDHAW 1} \\[0.5ex]
	&& \V y \in \Delta^\mI\ \E y' \in \Delta^\mIp\ Z(x,x') \fand R^\mI(x,y) \leq Z(y,y') \fand R^\mIp(x',y') \label{eq: GDHAW 2} \\[0.5ex]
	&& \V y' \in \Delta^\mIp\ \E y \in \Delta^\mI\ Z(x,x') \fand R^\mIp(x',y') \leq Z(y,y') \fand R^\mI(x,y). \label{eq: GDHAW 3} 
	\end{eqnarray}
}

\newcommand{\TextLemmaGDHAWp}{
	Let $\mI$ and $\mI'$ be fuzzy interpretations that are witnessed w.r.t.\ \mLP and $Z$ a fuzzy $\Phi$-bisimulation between $\mI$ and $ \mI'$. Then, the following properties hold for every concept $C$ of \mLP, every role $R$ of \mLP, every $x \in \Delta^\mI$ and every $x' \in \Delta^\mIp$:
	\begin{eqnarray*}
	\eqref{eq: GDHAW 1} && Z(x,x') \leq (C^\mI(x) \fequiv C^\mIp(x')) \\[0.5ex]
	\eqref{eq: GDHAW 2} && \V y \in \Delta^\mI\ \E y' \in \Delta^\mIp\ Z(x,x') \fand R^\mI(x,y) \leq Z(y,y') \fand R^\mIp(x',y') \\[0.5ex]
	\eqref{eq: GDHAW 3} && \V y' \in \Delta^\mIp\ \E y \in \Delta^\mI\ Z(x,x') \fand R^\mIp(x',y') \leq Z(y,y') \fand R^\mI(x,y). 
	\end{eqnarray*}
}

\begin{lemma} \label{lemma: GDHAW}
\TextLemmaGDHAW
\end{lemma}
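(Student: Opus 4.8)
The plan is to establish the three assertions \eqref{eq: GDHAW 1}, \eqref{eq: GDHAW 2} and \eqref{eq: GDHAW 3} together, by a single mutual induction on the structure of the concept $C$ (for \eqref{eq: GDHAW 1}) and of the role $R$ (for \eqref{eq: GDHAW 2} and \eqref{eq: GDHAW 3}). The induction must be mutual, since the concept cases $\E R.C$, $\V R.C$ and the number restrictions invoke the role properties for $R$, whereas the role case $C?$ invokes the concept property for $C$. The base cases are exactly the defining clauses of a fuzzy $\Phi$-bisimulation: \eqref{eq: GDHAW 1} for an atomic concept is Condition~\eqref{eq: FB 2}, while for a truth constant $p$ it is immediate since $p^\mI(x)=p=p^\mIp(x')$ forces the equivalence to be $1$; and \eqref{eq: GDHAW 2}, \eqref{eq: GDHAW 3} for a basic role are Conditions~\eqref{eq: FB 3} and~\eqref{eq: FB 4}. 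Throughout I would repeatedly use the reduction that, for $a,a'\in[0,1]$, one has $Z(x,x')\leq(a\fequiv a')$ if and only if $Z(x,x')\fand a\leq a'$ and $Z(x,x')\fand a'\leq a$; this lets me split each inductive step into a \emph{forth} and a \emph{back} inequality.

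For the role constructors I would argue as follows. The universal role $U$ is handled directly by Conditions~\eqref{eq: FB 8} and~\eqref{eq: FB 9}. For $R\mor S$ the two halves of the supremum are matched separately using the hypotheses for $R$ and for $S$; for $R\circ S$ I would chain a witness for $R$ with a witness for $S$, relying on witnessedness to realize the defining supremum and on the monotonicity law~\eqref{fop: HSDJW 1}. The test $C?$ follows at once from \eqref{eq: GDHAW 1} for $C$, since $(C?)^\mI$ is supported on the diagonal. The reflexive-transitive closure $R^*$ is the delicate role case: to prove \eqref{eq: GDHAW 2} I would use witnessedness to fix a finite path $x=x_0,\dots,x_k=y$ in $\mI$ realizing $(R^*)^\mI(x,y)$, then apply \eqref{eq: GDHAW 2} for $R$ inductively along this path to build a matching path $x'=x'_0,\dots,x'_k=y'$ in $\mIp$, and finally bound $(R^*)^\mIp(x',y')$ from below by the conjunction of the transferred values; \eqref{eq: GDHAW 3} is symmetric.

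For the concept constructors, $\mand$ follows directly from the congruence law~\eqref{fop: HSDJW 6} together with idempotence of $\fand$, while $\lnot$ and $\mor$ can be removed beforehand via Remark~\ref{remark: OFHSJ}, leaving $\to$ to be handled by the laws~\eqref{fop: HSDJW 4}--\eqref{fop: HSDJW 5b}. The nominal $\{a\}$ uses Condition~\eqref{eq: FB 5} and $\E r.\Self$ uses Condition~\eqref{eq: FB 10}. The heart of this part is the quantifier pair. For $\E R.C$, applying the forth/back reduction, I would prove $Z(x,x')\fand(\E R.C)^\mI(x)\leq(\E R.C)^\mIp(x')$ by using witnessedness to realize the supremum at some $y$, property \eqref{eq: GDHAW 2} for $R$ to obtain a matching $y'$, the hypothesis \eqref{eq: GDHAW 1} for $C$ to transfer $C^\mI(y)$ to $C^\mIp(y')$ (via the inequality $p\fand(p\fequiv q)\leq q$, a special case of~\eqref{fop: HSDJW 4b}), and then the definition of the supremum on the $\mIp$ side; the reverse inequality uses \eqref{eq: GDHAW 3} dually. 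The case $\V R.C$ is symmetric, built on \eqref{eq: GDHAW 3} and the $\fto$-based laws~\eqref{fop: HSDJW 5} and~\eqref{fop: HSDJW 5b}.

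I expect the main obstacle to be the number restriction cases $\geq n\,R.C$, $<\!n\,R.C$, $\geq n\,R$ and $<\!n\,R$, governed by Conditions~\eqref{eq: FB 6}--\eqref{eq: FB 7n}. These conditions are ``elementary'', phrased over tuples of pairwise distinct elements rather than through relational composition, which matches the supremum/infimum over such tuples in the semantics but demands careful bookkeeping. For $\geq n\,R.C$, after using witnessedness to realize the defining supremum by a pairwise distinct tuple $y_1,\dots,y_n$ of positive value, Condition~\eqref{eq: FB 6} supplies a pairwise distinct tuple $y'_1,\dots,y'_n$ on the $\mIp$ side together with, for each target index $i$, a source index $j$; the crucial point is that the realizing value, being the full conjunction $\bigotimes_k\bigl(R^\mI(x,y_k)\fand C^\mI(y_k)\bigr)$, dominates the single factor $C^\mI(y_j)$, so that combining with \eqref{fop: HSDJW 4b} and the hypothesis \eqref{eq: GDHAW 1} for $C$ transfers each $R^\mIp(x',y'_i)\fand C^\mIp(y'_i)$, and taking the conjunction over $i$ yields the forth inequality; \eqref{eq: FB 7} gives the back inequality, and $N_n$ is analogous via \eqref{eq: FB 6n} and~\eqref{eq: FB 7n}. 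For the $<\!n$ restrictions I would invoke Remark~\ref{remark: JFLWB}, by which the target value is crisp, so the claim reduces to showing that $Z(x,x')>0$ forces the two $\{0,1\}$-values to coincide; this in turn follows from the same $Q_n$ (resp.\ $N_n$) conditions applied to a realizing tuple of strictly positive value. Since these are precisely the clauses flagged as corrected relative to~\cite{fss/NguyenHNNT20}, the preservation of pairwise distinctness and the correct use of the index map $i\mapsto j$ are the points that warrant the most care.
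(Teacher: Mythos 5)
Your proposal is correct and follows essentially the same route as the paper's own proof: a mutual structural induction on $C$ and $R$ whose cases are discharged by exactly the ingredients you cite --- Conditions~\eqref{eq: FB 2}--\eqref{eq: FB 7n}, witnessedness to realize the suprema for $\E R.C$, $R \circ S$, $R^*$ and $\geq\!n\,R.C$, the G\"odel-operator laws, Remark~\ref{remark: OFHSJ} to drop $\lnot$ and $\mor$, and Remark~\ref{remark: JFLWB} plus the tuple conditions with their index maps for the number restrictions. The only cosmetic difference is that you split each $\fequiv$-goal into the two inequalities $Z(x,x') \fand C^\mI(x) \leq C^\mIp(x')$ and $Z(x,x') \fand C^\mIp(x') \leq C^\mI(x)$, whereas the paper assumes without loss of generality that $C^\mI(x) > C^\mIp(x')$ and derives $Z(x,x') \leq (C^\mI(x) \fto C^\mIp(x'))$; the two formulations are interchangeable by~\eqref{fop: HSDJW 3}.
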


\modifiedA{See the Appendix for the proof of this lemma.}

The following lemma differs from Lemma~\ref{lemma: GDHAW} in that \mLPp is used instead of \mLP. 
Its proof is a shortened version of the one of Lemma~\ref{lemma: GDHAW}, as~\eqref{eq: GDHAW 2} (resp.\ \eqref{eq: GDHAW 3}) is the same as~\eqref{eq: FB 3} and~\eqref{eq: FB 8} (resp.\ \eqref{eq: FB 4} and \eqref{eq: FB 9}) when $R$ is a role of \mLPp, and we can ignore the cases when $C$ is $\V R.D$, \mbox{$<\!n\,R.D$} or \mbox{$<\!n\,R$}.

\begin{lemma} \label{lemma: GDHAW2}
	Let $\mI$ and $\mI'$ be fuzzy interpretations that are witnessed w.r.t.\ \mLPp and $Z$ a fuzzy $\Phi$-bisimulation between $\mI$ and $ \mI'$. Then, for every concept $C$ of \mLPp, every $x \in \Delta^\mI$ and every $x' \in \Delta^\mIp$, 
	\[ Z(x,x') \leq (C^\mI(x) \fequiv C^\mIp(x')). \]
\end{lemma}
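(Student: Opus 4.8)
The plan is to prove the inequality by induction on the structure of the concept $C$ of \mLPp, mirroring the proof of Lemma~\ref{lemma: GDHAW} but exploiting two simplifications that the restricted language affords. First, every role of \mLPp is a basic role (or the universal role $U$), so the auxiliary back-and-forth properties \eqref{eq: GDHAW 2} and \eqref{eq: GDHAW 3} never have to be re-established for compound roles: for a basic role $R$ they are literally Conditions~\eqref{eq: FB 3} and~\eqref{eq: FB 4} of Definition~\ref{def: DHGAK}, and for $R = U$ they are Conditions~\eqref{eq: FB 8} and~\eqref{eq: FB 9}. Second, the concept constructors $\V R.D$, $<\!n\,R.D$ and $<\!n\,R$ are absent from \mLPp, so those cases of the induction are dropped outright. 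What remains to carry is the single inequality \mbox{$Z(x,x') \leq (C^\mI(x) \fequiv C^\mIp(x'))$}.

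For the base and propositional cases I would argue as follows. If $C$ is a constant $p$, both sides equal $p$ and the equivalence is $1$; if $C = A$ this is Condition~\eqref{eq: FB 2}; if $C = \{a\}$ it follows from~\eqref{eq: FB 5} (when the crisp biconditional fails, that condition forces $Z(x,x') = 0$); and if $C = \E r.\Self$ it is~\eqref{eq: FB 10}. For $C = D \mand E$ and $C = D \to E$ (including the restricted forms $D \to p$ and $p \to D$) I would use that G\"odel conjunction is $\min$ and hence idempotent, writing $Z(x,x') = Z(x,x') \fand Z(x,x')$, combining the two induction hypotheses, and then invoking congruence of the operators with respect to $\fequiv$: property~\eqref{fop: HSDJW 6} handles $\mand$ directly, and the analogous congruence for $\fto$ is derived from residuation~\eqref{fop: HSDJW 3} together with the elementary fact $(p \fequiv q) \fand p \leq q$.

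The substantive cases are the existential and number-restriction constructors, where the basic-role conditions do the real work. For $C = \E R.D$ I would prove the two residuated halves of the equivalence separately. To show $Z(x,x') \fand (\E R.D)^\mI(x) \leq (\E R.D)^\mIp(x')$, I move $Z(x,x')$ inside the supremum using that $\min$ distributes over $\sup$, apply Condition~\eqref{eq: FB 3} (or~\eqref{eq: FB 8} when $R = U$) to each $y$ to obtain a witness $y'$, and then transfer the value of $D$ across the bisimulation via the induction hypothesis and $(p \fequiv q) \fand p \leq q$; the symmetric half uses~\eqref{eq: FB 4}/\eqref{eq: FB 9}. For $C = \geq\!n\,R.D$ and $C = \geq\!n\,R$ the same scheme is run over the $n$-fold $\bigotimes$-products, using the corrected Conditions~\eqref{eq: FB 6}--\eqref{eq: FB 7} and~\eqref{eq: FB 6n}--\eqref{eq: FB 7n}.

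I expect these number-restriction cases to be the main obstacle, though the difficulty is bookkeeping rather than conceptual. From a tuple of pairwise distinct witnesses $y_1,\ldots,y_n$ one must produce pairwise distinct $y'_1,\ldots,y'_n$ on the other side, and the matching supplied by~\eqref{eq: FB 6} has the asymmetric ``for every $i$ there exists $j$'' shape, so one has to push the single product $\bigl(Z(x,x') \fand \bigotimes_k R^\mI(x,y_k)\bigr)$ through each matched inequality while using $\bigotimes_k D^\mI(y_k) \leq D^\mI(y_j)$ and the induction hypothesis to recover $R^\mIp(x',y'_i) \fand D^\mIp(y'_i)$, before taking the $\min$ over $i$ and concluding that the resulting distinct tuple witnesses a large enough value of $(\geq\!n\,R.D)^\mIp(x')$. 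The witnessed hypothesis is inherited from Lemma~\ref{lemma: GDHAW}, but I would note that in every \mLPp case the only domain extrema are suprema, which are bounded term-by-term (resp.\ tuple-by-tuple), so for this invariance direction no supremum actually needs to be attained.
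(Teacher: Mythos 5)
Your proof is correct, and its skeleton is exactly the reduction the paper itself makes: the paper proves this lemma as a shortened version of the proof of Lemma~\ref{lemma: GDHAW}, observing, just as you do, that for roles of \mLPp the auxiliary properties \eqref{eq: GDHAW 2} and \eqref{eq: GDHAW 3} are nothing other than Conditions \eqref{eq: FB 3}/\eqref{eq: FB 8} and \eqref{eq: FB 4}/\eqref{eq: FB 9}, and that the cases $\V R.D$, $<\!n\,R.D$ and $<\!n\,R$ disappear. Where you genuinely diverge is in how the surviving inductive cases are executed. The paper's inherited argument fixes, for $\E R.D$, $\geq\!n\,R.D$ and $\geq\!n\,R$, an element attaining the supremum --- this is precisely where the witnessed hypothesis enters --- and then, after a without-loss-of-generality split on whether $C^\mI(x) > C^\mIp(x')$, chains the inequalities \eqref{fop: HSDJW 1}--\eqref{fop: HSDJW 6}. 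You instead prove the two residuated halves $Z(x,x') \fand C^\mI(x) \leq C^\mIp(x')$ and $Z(x,x') \fand C^\mIp(x') \leq C^\mI(x)$ separately, distribute $\fand$ over the supremum, and bound it term-by-term (tuple-by-tuple for the number restrictions, where you correctly cope with the asymmetric ``for every $i$ there exists $j$'' matching of \eqref{eq: FB 6}--\eqref{eq: FB 7n} by routing the single product $Z(x,x') \fand R^\mI(x,y_1) \fand\cdots\fand R^\mI(x,y_n)$ through each matched inequality and using $D^\mI(y_1)\fand\cdots\fand D^\mI(y_n) \leq D^\mI(y_j)$); your congruence treatment of $\mand$ and $\to$, via idempotence of $\fand$, property \eqref{fop: HSDJW 6}, residuation \eqref{fop: HSDJW 3} and the elementary fact $(p \fequiv q) \fand p \leq q$, is likewise sound (the $\fto$-congruence does follow by the residuation argument you indicate). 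The payoff of your route is that the witnessed hypothesis is never invoked, so your argument in fact establishes the lemma for arbitrary fuzzy interpretations --- a mild but genuine strengthening, possible exactly because \mLPp has no constructor whose semantics is an infimum over the domain; what the paper's route buys is economy (a two-line delta against Lemma~\ref{lemma: GDHAW}) and uniformity with the full language, where that proof really does need witnessing, e.g.\ to realize the infimum in the $\V R.D$ case on the $\mIp$ side.
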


A concept $C$ of \mLP is said to be {\em invariant under $\Phi$-bisimilarity} (between witnessed interpretations) if, for any fuzzy interpretations $\mI$ and $\mI'$ that are witnessed  w.r.t.\ \mLP and any $x \in \Delta^\mI$ and $x' \in \Delta^\mIp$, if $x \simP x'$, then $C^\mI(x) = C^\mIp(x')$.
The following theorem immediately follows from the assertion~\eqref{eq: GDHAW 1} of Lemma~\ref{lemma: GDHAW}. 

\begin{theorem}\label{theorem: UFNSJ}
All concepts of \mLP are invariant under $\Phi$-bisimilarity. 
\end{theorem}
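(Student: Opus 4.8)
The plan is to derive the statement directly from assertion~\eqref{eq: GDHAW 1} of Lemma~\ref{lemma: GDHAW}, so that the proof amounts to an unfolding of definitions. First I would fix an arbitrary concept $C$ of \mLP together with fuzzy interpretations $\mI$ and $\mI'$ that are witnessed w.r.t.\ \mLP, and elements $x \in \Delta^\mI$, $x' \in \Delta^\mIp$ with $x \simP x'$. By the definition of the relation $\simP$, the hypothesis $x \simP x'$ supplies a fuzzy $\Phi$-bisimulation $Z$ between $\mI$ and $\mI'$ with $Z(x,x') = 1$.

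Next I would invoke Lemma~\ref{lemma: GDHAW}, whose hypotheses (that $\mI$ and $\mI'$ are witnessed w.r.t.\ \mLP and that $Z$ is a fuzzy $\Phi$-bisimulation between them) are exactly what has just been arranged. Assertion~\eqref{eq: GDHAW 1} then yields
\[
1 \,=\, Z(x,x') \,\leq\, (C^\mI(x) \fequiv C^\mIp(x')),
\]
whence $(C^\mI(x) \fequiv C^\mIp(x')) = 1$, since the G\"odel equivalence operator takes values in $[0,1]$.

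Finally I would read off equality of the two truth values from the defining clauses of $\fequiv$. Recall that $(p \fequiv q) = 1$ when $p = q$, and $(p \fequiv q) = \min\{p,q\}$ otherwise; in the latter case the value can equal $1$ only if $p = q = 1$, contradicting $p \neq q$. Hence $(p \fequiv q) = 1$ forces $p = q$, so the displayed inequality gives $C^\mI(x) = C^\mIp(x')$, which is precisely the invariance of $C$ under $\Phi$-bisimilarity. As $C$ was arbitrary, all concepts of \mLP enjoy this property.

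There is no genuine obstacle at this level: all of the substantive work has been pushed into Lemma~\ref{lemma: GDHAW}, whose proof (deferred to the Appendix) proceeds by structural induction on $C$ and is where the conditions of Definition~\ref{def: DHGAK}, the witnessedness assumption, and the algebraic inequalities of the G\"odel operators are actually used. The only point requiring a moment's care in the present argument is the elementary fact that $(p \fequiv q) = 1$ implies $p = q$, which I would justify exactly as above directly from the definition of $\fequiv$.
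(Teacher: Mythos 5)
Your proof is correct and takes exactly the route the paper does: the paper states that the theorem ``immediately follows from the assertion~\eqref{eq: GDHAW 1} of Lemma~\ref{lemma: GDHAW}'', and your argument simply spells out that unfolding (extracting $Z$ with $Z(x,x')=1$ from $x \simP x'$, applying the lemma, and noting that $(p \fequiv q) = 1$ forces $p = q$). No discrepancy to report.
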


A fuzzy TBox $\mT$ is said to be {\em invariant under $\Phi$-bisimilarity} (between witnessed interpretations) if, for every fuzzy interpretations $\mI$ and $\mI'$ that are witnessed w.r.t.\ \mLP and $\Phi$-bisimilar to each other, $\mI \models \mT$ iff $\mI' \models \mT$. The notion of invariance of fuzzy ABoxes under $\Phi$-bisimilarity (between witnessed interpretations) is defined analogously. 

\begin{theorem}\label{theorem: UDKMS}
	If $U \in \Phi$, then all fuzzy TBoxes in \mLP are invariant under $\Phi$-bisimilarity. 
\end{theorem}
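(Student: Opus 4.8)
The plan is to reduce the statement to a single fuzzy GCI and then combine the universal-role conditions of a bisimulation with the invariance of the compound concept $C \to D$ that was already established in Lemma~\ref{lemma: GDHAW}. Since a fuzzy TBox is a finite set of fuzzy GCIs and $\mI \models \mT$ means $\mI \models \varphi$ for every $\varphi \in \mT$, it suffices to prove that each individual fuzzy GCI $(C \sqsubseteq D) \rhd p$ is invariant under $\Phi$-bisimilarity between witnessed interpretations. Moreover $\simP$ is symmetric: if $Z$ witnesses $\mI \simP \mIp$ then, by Proposition~\ref{prop: HFHSJ}, $Z^-$ is a fuzzy $\Phi$-bisimulation between $\mIp$ and $\mI$ with $Z^-(a^\mIp,a^\mI) = Z(a^\mI,a^\mIp) = 1$ for all $a$, so $\mIp \simP \mI$. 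Hence I would prove only one implication, say $\mI \models (C \sqsubseteq D) \rhd p \Rightarrow \mIp \models (C \sqsubseteq D) \rhd p$, the converse following by interchanging $\mI$ and $\mIp$.

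Fix a fuzzy $\Phi$-bisimulation $Z$ witnessing $\mI \simP \mIp$, so $Z(a^\mI,a^\mIp) = 1$ for all $a \in \IN$; recall that $\IN \neq \emptyset$ by the definition of $\simP$ for interpretations, so I may fix some $a \in \IN$ and set $x_0 = a^\mI$, $x_0' = a^\mIp$ with $Z(x_0,x_0') = 1$. The crucial step, and the only place where the hypothesis $U \in \Phi$ enters, is to show that $Z$ is \emph{total from the right with value~$1$}: for every $x' \in \Delta^\mIp$ there is $x \in \Delta^\mI$ with $Z(x,x') = 1$. This is immediate from Condition~\eqref{eq: FB 9} applied with $x = x_0$ and $x' = x_0'$: since $Z(x_0,x_0') = 1$, for each $y' \in \Delta^\mIp$ there is $y \in \Delta^\mI$ with $1 = Z(x_0,x_0') \leq Z(y,y')$, forcing $Z(y,y') = 1$.

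To conclude, let $x' \in \Delta^\mIp$ be arbitrary and choose $x \in \Delta^\mI$ with $Z(x,x') = 1$. As $C$ and $D$ are concepts of \mLP, so is $C \to D$, and since $\mI$ and $\mIp$ are witnessed w.r.t.\ \mLP, Lemma~\ref{lemma: GDHAW}, assertion~\eqref{eq: GDHAW 1}, applied to the concept $C \to D$ gives
\[ 1 = Z(x,x') \leq \bigl((C \to D)^\mI(x) \fequiv (C \to D)^\mIp(x')\bigr). \]
By the definition of the G\"odel equivalence operator, $(p \fequiv q) = 1$ forces $p = q$, whence $(C \to D)^\mI(x) = (C \to D)^\mIp(x')$. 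From $\mI \models (C \sqsubseteq D) \rhd p$ we have $(C \to D)^\mI(x) \rhd p$, and therefore $(C \to D)^\mIp(x') \rhd p$. As $x'$ was arbitrary, $\mIp \models (C \sqsubseteq D) \rhd p$, which finishes this direction.

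I expect the totality step to be the main obstacle, and it is precisely what makes the hypothesis $U \in \Phi$ indispensable. A fuzzy GCI quantifies universally over the \emph{entire} domain, so transferring it from $\mI$ to $\mIp$ requires every point of $\Delta^\mIp$ to be tied to some point of $\mI$ by $Z$-value~$1$. The role-matching Conditions~\eqref{eq: FB 3}--\eqref{eq: FB 4} only propagate bisimilarity along basic roles, so without the universal role a point of $\mIp$ unreachable from the named individuals could escape being matched, and validity of the GCI there would not be controlled by $\mI$. This is exactly why the theorem is restricted to the case $U \in \Phi$.
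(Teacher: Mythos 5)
Your proof is correct and follows essentially the same route as the paper's: fix a GCI, use Condition~\eqref{eq: FB 9} at a named individual (where $Z$ takes value $1$) to match every element of $\Delta^\mIp$ with some element of $\Delta^\mI$ at $Z$-value $1$, then transfer $(C \to D)^\mI(x) \rhd p$ via invariance of the concept $C \to D$. The only cosmetic differences are that you invoke Lemma~\ref{lemma: GDHAW} directly where the paper cites Theorem~\ref{theorem: UFNSJ} (its immediate corollary), and that you make explicit the symmetry reduction via Proposition~\ref{prop: HFHSJ}, which the paper leaves implicit.
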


\begin{proof}
	Suppose $U \in \Phi$ and let $\mI$ and $\mI'$ be fuzzy interpretations that are witnessed w.r.t.\ \mLP and $\Phi$-bisimilar to each other. Thus, it is implicitly assumed that $\IN \neq \emptyset$. Let $\mT$ be a fuzzy TBox in \mLP and suppose $\mI \models \mT$. Let $(C \sqsubseteq D) \rhd p$ be a fuzzy GCI from $\mT$. We need to show that $\mI' \models (C \sqsubseteq D) \rhd p$. Let $y' \in \Delta^\mIp$. We show that $(C \to D)^\mIp(y') \rhd p$. Let $Z$ be a fuzzy $\Phi$-bisimulation between $\mI$ and $\mIp$ such that $Z(a^\mI,a^\mIp) = 1$ for all $a \in \IN$. Choose any $a$ from $\IN$. Since $Z(a^\mI,a^\mIp) = 1$, by~\eqref{eq: FB 9}, there exists $y \in \Delta^\mI$ such that $Z(a^\mI,a^\mIp) \leq Z(y,y')$. Thus, $Z(y,y') = 1$. Since $\mI \models ((C \sqsubseteq D) \rhd p)$, we have $(C \to D)^\mI(y) \rhd p$. Since $Z(y,y') = 1$, by Theorem~\ref{theorem: UFNSJ}, it follows that \mbox{$(C \to D)^\mIp(y') \rhd p$}.
	\myend
\end{proof}

\begin{theorem}\label{theorem: IFDMS}
	Let $\mA$ be a fuzzy ABox in \mLP. If $O \in \Phi$ or $\mA$ consists of only fuzzy assertions of the form $C(a) \bowtie p$, then $\mA$ is invariant under $\Phi$-bisimilarity. 
\end{theorem}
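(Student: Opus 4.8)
The plan is to reduce the claim to the invariance of concepts already established in Theorem~\ref{theorem: UFNSJ}. Since $\mI$ and $\mI'$ are $\Phi$-bisimilar, fix a fuzzy $\Phi$-bisimulation $Z$ between them with $Z(a^\mI,a^\mIp)=1$ for every $a \in \IN$; in particular $a^\mI \simP a^\mIp$ for each named individual $a$. Because $\mI \models \mA$ holds iff $\mI \models \varphi$ for every $\varphi \in \mA$ (and likewise for $\mI'$), it suffices to show that each fuzzy assertion $\varphi \in \mA$ is individually invariant, i.e.\ that $\mI \models \varphi$ iff $\mI' \models \varphi$.

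First I would dispose of the assertions of the form $C(a) \bowtie p$; this already covers the case in which $\mA$ contains only such assertions, and it needs no assumption on $\Phi$. Since $\mI$ and $\mI'$ are witnessed w.r.t.\ \mLP and $a^\mI \simP a^\mIp$, Theorem~\ref{theorem: UFNSJ} yields $C^\mI(a^\mI) = C^\mIp(a^\mIp)$; hence $C^\mI(a^\mI) \bowtie p$ iff $C^\mIp(a^\mIp) \bowtie p$, which is exactly the required equivalence $\mI \models C(a) \bowtie p$ iff $\mI' \models C(a) \bowtie p$.

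It remains to treat the assertions $R(a,b) \bowtie p$, $a \doteq b$ and $a \not\doteq b$ under the hypothesis $O \in \Phi$, and the point of that hypothesis is precisely that it makes the nominal $\{b\}$ a legitimate concept of \mLP, so that each remaining assertion can be encoded as the value of a concept at $a^\mI$. For the role assertion I would use the concept $\E R.\{b\}$: since $\{b\}^\mI$ is the indicator of $b^\mI$ and $\fand = \min$, the semantics of $\E R.C$ gives $(\E R.\{b\})^\mI(a^\mI) = R^\mI(a^\mI,b^\mI)$ for an arbitrary (possibly complex) role $R$. Applying Theorem~\ref{theorem: UFNSJ} to $\E R.\{b\}$ at $a^\mI \simP a^\mIp$ then gives $R^\mI(a^\mI,b^\mI) = R^\mIp(a^\mIp,b^\mIp)$, so the condition $\bowtie p$ transfers. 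For $a \doteq b$ and $a \not\doteq b$ I would use the concept $\{b\}$ itself: as $\{b\}^\mI(a^\mI) = 1$ if $a^\mI = b^\mI$ and $0$ otherwise, Theorem~\ref{theorem: UFNSJ} gives $\{b\}^\mI(a^\mI) = \{b\}^\mIp(a^\mIp)$, whence $a^\mI = b^\mI$ iff $a^\mIp = b^\mIp$; this settles the equality assertion and, by negation, the inequality assertion.

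The only genuinely delicate step is the single identity $(\E R.\{b\})^\mI(a^\mI) = R^\mI(a^\mI,b^\mI)$, and it is immediate from the definition of $(\E R.C)^\mI$ once the nominal $\{b\}$ is available; everything else is a direct appeal to Theorem~\ref{theorem: UFNSJ}. I expect no real obstacle beyond bookkeeping: the witnessing assumptions are exactly those required by Theorem~\ref{theorem: UFNSJ}, and no additional conditions on $\mI$, $\mI'$ or $Z$ are needed.
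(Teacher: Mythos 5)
Your proof is correct, and for the non-trivial cases it takes a genuinely different route from the paper's. Both proofs handle $C(a) \bowtie p$ identically via Theorem~\ref{theorem: UFNSJ}, but they diverge afterwards. The paper treats $a \doteq b$ and $a \ndoteq b$ by invoking Condition~\eqref{eq: FB 5} on the bisimulation directly, and it treats $R(a,b) \bowtie p$ by a two-step argument: first it uses the assertion~\eqref{eq: GDHAW 2} of Lemma~\ref{lemma: GDHAW} to produce a witness $y' \in \Delta^\mIp$ with $Z(b^\mI,y') \rhd p$ and $R^\mIp(a^\mIp,y') \rhd p$, then it uses the nominal $\{b\}$ and the assertion~\eqref{eq: GDHAW 1} to force $y' = b^\mIp$ (modulo the edge case $\rhd$ being $\geq$ with $p = 0$), and finally it needs a separate contradiction argument for $\lhd \in \{\leq,<\}$. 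You instead reduce everything to concept invariance: the identity $(\E R.\{b\})^\mI(a^\mI) = R^\mI(a^\mI,b^\mI)$, which indeed follows immediately from $\fand = \min$ and the crisp semantics of nominals, turns the role assertion into a concept assertion, and Theorem~\ref{theorem: UFNSJ} then yields the exact equality $R^\mI(a^\mI,b^\mI) = R^\mIp(a^\mIp,b^\mIp)$, handling all four relations $\geq, >, \leq, <$ uniformly and with no edge cases; similarly the concept $\{b\}$ settles $\doteq$ and $\ndoteq$. Structurally, your argument needs only the assertion~\eqref{eq: GDHAW 1} of Lemma~\ref{lemma: GDHAW} (through Theorem~\ref{theorem: UFNSJ}), whereas the paper also leans on the assertion~\eqref{eq: GDHAW 2}; what your encoding buys is brevity and uniformity, while the paper's direct manipulation of the bisimulation conditions avoids introducing auxiliary complex concepts and makes visible exactly which bisimulation clauses are responsible for each kind of assertion. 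There is no gap in your proposal: the witnessedness hypotheses match those of Theorem~\ref{theorem: UFNSJ}, the concepts $\{b\}$ and $\E R.\{b\}$ are legitimate concepts of \mLP precisely because $O \in \Phi$ in the cases where you use them, and no circularity arises since Theorem~\ref{theorem: UFNSJ} is established independently.
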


\begin{proof}
	Suppose that $O \in \Phi$ or $\mA$ consists of only fuzzy assertions of the form $C(a) \bowtie p$. 
	Let $\mI$ and $\mI'$ be fuzzy interpretations that are witnessed w.r.t.\ \mLP. Suppose that $\mI \simP \mI'$ and $\mI \models \mA$. Let $\varphi \in \mA$. It is sufficient to show that $\mI' \models \varphi$. Let $Z$ be a fuzzy $\Phi$-bisimulation between $\mI$ and $\mIp$ such that $Z(a^\mI,a^\mIp) = 1$ for all $a \in \IN$. 
	\begin{itemize}
		\item Case $\varphi = (a \doteq b)\,$: Since $\mI \models \mA$, $a^\mI = b^\mI$. Since $\mI \simP \mI'$, $a^\mI \simP a^{\mI'}$. Since $a^\mI = b^\mI$, by Condition~\eqref{eq: FB 5}, it follows that $a^{\mI'} = b^{\mI'}$. Therefore, $\mI' \models \varphi$. 
		
		\item Case $\varphi = (a \not\doteq b)$ is similar to the above one.
		
		\item Case $\varphi = (C(a) \bowtie  p)\,$: Since $\mI \models \mA$, $C^\mI(a^\mI) \bowtie p$. Since $\mI \simP \mI'$, $a^\mI \simP a^{\mI'}$. By Theorem~\ref{theorem: UFNSJ}, it follows that $C^{\mI'}(a^{\mI'}) = C^\mI(a^\mI) \bowtie p$. Hence, $\mI' \models \varphi$. 
		
		\item Case $\varphi = (R(a,b) \rhd p)$, with $\rhd \in \{\geq,>\}$: Since $\mI \models \mA$, $R^\mI(a^\mI,b^\mI) \rhd p$. By~\eqref{eq: GDHAW 2}, there exists $y' \in \Delta^\mIp$ such that $Z(b^\mI,y') \rhd p$ and $R^\mIp(a^\mIp,y') \rhd p$. Consider $C = \{b\}$. Since $Z(b^\mI,y') \rhd p$ and $C^\mI(b^\mI) = 1 \rhd p$, by Lemma~\ref{lemma: GDHAW}, $C^{\mI'}(y') \rhd p$. If $\rhd$ is $>$ or $p > 0$, then we must have $y' = b^\mIp$. Hence, $\mI' \models \varphi$. 
		
		\item Case $\varphi = (R(a,b) \lhd p)$, with $\lhd \in \{\leq,<\}$: For a contradiction, suppose $\mI' \not\models \varphi$. Thus, $R^\mIp(a^\mIp,b^\mIp) \rhd p$, where $\rhd$ is \mbox{$\not\!\!\lhd$}. Similarly to the above case, we can derive that $\mI \models (R(a,b) \rhd p)$, which contradicts $\mI \models \varphi$.
		\myend
	\end{itemize}
\end{proof}


\subsection{The Hennessy-Milner Property}
\label{sec: HM-prop}

In this subsection, we present and prove the Hennessy-Milner property of fuzzy bisimulations for fuzzy DLs under the G{\"o}del semantics (Theorem~\ref{theorem: fG H-M}). It uses the notion of modal saturatedness defined below, which is a technical notion related to compactness that can replace image-finiteness to make the Hennessy-Milner property stronger (see~\cite[Section~2.5]{BRV2001} for a further discussion).

\begin{definition}\label{def: modally saturated}
	A fuzzy interpretation $\mI$ is said to be {\em modally saturated} w.r.t.~\mLPp (and the G\"odel semantics) if the following conditions hold:
	\begin{itemize}
		\item for every $p \in (0,1]$, every $x \in \Delta^\mI$, every basic role $R$ w.r.t.~$\Phi$ and every infinite set $\Gamma$ of concepts in~\mLPp, if for every finite subset $\Lambda$ of $\Gamma$ there exists $y \in \Delta^\mI$ such that $R^\mI(x,y) \fand C^\mI(y) \geq p$ for all $C \in \Lambda$, then there exists $y \in \Delta^\mI$ such that $R^\mI(x,y) \fand C^\mI(y) \geq p$ for all $C \in \Gamma$; 
		
		\item if $Q_n \in \Phi$, then for every $p \in (0,1]$, every $x \in \Delta^\mI$, every basic role $R$ w.r.t.~$\Phi$ and every infinite set $\Gamma$ of concepts in~\mLPp, if for every finite subset $\Lambda$ of $\Gamma$ there exist $n$ pairwise distinct $y_1,\ldots,y_n \in \Delta^\mI$ such that $R^\mI(x,y_i) \fand C^\mI(y_i) \geq p$ for all $1 \leq i \leq n$ and $C \in \Lambda$, then there exist $n$ pairwise distinct $y_1,\ldots,y_n \in \Delta^\mI$ such that $R^\mI(x,y_i) \fand C^\mI(y_i) \geq p$ for all $1 \leq i \leq n$ and $C \in \Gamma$; 
		
		\item if $U \in \Phi$, then for every $p \in (0,1]$ and every infinite set $\Gamma$ of concepts in~\mLPp, if for every finite subset $\Lambda$ of $\Gamma$ there exists $y \in \Delta^\mI$ such that $C^\mI(y) \geq p$ for all $C \in \Lambda$, then there exists $y \in \Delta^\mI$ such that $C^\mI(y) \geq p$ for all $C \in \Gamma$.
		\myend
	\end{itemize}
\end{definition}

Our notion of modal saturatedness is an adaptation of the ones in~\cite{Fine75,BRV2001,BSDL-P-LOGCOM,Nguyen-TFS2019}.
Observe that every finite fuzzy interpretation is modally saturated w.r.t.\ \mLPp for any $\Phi$. 
%
\modifiedA{
	On the relationship between image-finiteness, modal saturatedness and being witnessed, note that:
	\begin{itemize}
		\item if $U \notin \Phi$, then every image-finite fuzzy interpretation w.r.t.\ $\Phi$ is witnessed and modally saturated w.r.t.~\mLPp, 
		\item there exist interpretations that are witnessed and modally saturated w.r.t.~\mLPp but not image-finite w.r.t.\ $\Phi$ (one can take a finite interpretation with an individual $x$ and its successor $y$ and clone $y$ together with its relationship to $x$ infinitely many times), 
		\item there exist interpretations that are witnessed but not modally saturated w.r.t.~\mLPp (see Examples~\ref{example: KLLWA} and~\ref{example: KLLWA 2} \modifiedB{given below}), 
		\item there exist interpretations that are modally saturated but not witnessed w.r.t.~\mLPp (the fuzzy interpretation $\mI'$ given in Example~\ref{example: HFKSB} is modally saturated but not witnessed w.r.t.~\mLPp for the case $\Phi = \emptyset$).
	\end{itemize} 
} 

\begin{example}\label{example: KLLWA}
\markModificationA
We give here an interpretation that is crisp (and hence also witnessed w.r.t.~\mLP) but not modally saturated w.r.t.~\mLPp. Let $\CN = \{A_i \mid i \in \NN\}$, $\RN = \{r\}$ and $\IN = \emptyset$. Let $\mI$ be the interpretation specified as follows:
\begin{itemize}
\item $\Delta^\mI = \{u, v_i \mid i \in \NN\}$, 
\item $r^\mI(u,v_i) = 1$ for $i \in \NN$, and $r^\mI(x,y) = 0$ for the other pairs $\tuple{x,y}$, 
\item $A_i^\mI(v_i) = 0$, and $A_i^\mI(x) = 1$ if $x \neq v_i$, for $i \in \NN$.
\end{itemize}
The first condition of Definition~\ref{def: modally saturated} does not hold for $p = 1$, $x = u$, $R = r$ and $\Gamma = \{A_i \mid i \in \NN\}$. Therefore, $\mI$ is not modally saturated w.r.t.~\mLPp. 
\myend
\end{example}

\begin{example}\label{example: KLLWA 2}
\markModificationA
We give here another interpretation that is not modally saturated w.r.t.~\mLPp, using $\CN = \emptyset$, $\RN = \{r\}$, $\IN = \emptyset$ and $\Phi \supseteq \{N_n \mid n \in \NN \setminus \{0\}\}$. Let $\mI$ be the interpretation specified as follows:
	\begin{itemize}
		\item $\Delta^\mI = \{u, v_i, w_{i,j} \mid i,j \in \NN, j < i\}$, 
		\item $r^\mI(x,y) = 1$ for $\tuple{x,y} \in \{\tuple{u,v_i}, \tuple{v_i,w_{i,j}} \mid i,j \in \NN, j < i\}$, and $r^\mI(x,y) = 0$ for the other pairs $\tuple{x,y}$.  
	\end{itemize}
	The first condition of Definition~\ref{def: modally saturated} does not hold for $p = 1$, $x = u$, $R = r$ and $\Gamma = \{\geq\!n\,r \mid$ $n \in \NN \setminus \{0\}\}$. Therefore, $\mI$ is not modally saturated w.r.t.~\mLPp. 
	\myend
\end{example}

\newcommand{\TextTheoremfGHM}{
	Let $\mI$ and $\mI'$ be fuzzy interpretations that are witnessed and modally saturated w.r.t.~\mLPp. 
	Let $Z : \Delta^\mI \times \Delta^\mIp \to [0,1]$ be specified by 
	\[ Z(x,x') = \inf\{C^\mI(x) \fequiv C^\mIp(x') \mid \textrm{$C$ is a concept of \mLPp}\}. \] 
	Then, $Z$ is the greatest fuzzy $\Phi$-bisimulation between $\mI$ and~$\mI'$.	
}

\begin{theorem} \label{theorem: fG H-M}
\TextTheoremfGHM
\end{theorem}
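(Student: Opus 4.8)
The plan is to prove this Hennessy--Milner property in two stages. First I would verify that the function $Z$ defined by $Z(x,x') = \inf\{C^\mI(x) \fequiv C^\mIp(x') \mid C \textrm{ a concept of \mLPp}\}$ is itself a fuzzy $\Phi$-bisimulation between $\mI$ and $\mI'$; then I would argue it is the \emph{greatest} one. The second part is comparatively easy: if $Z'$ is any fuzzy $\Phi$-bisimulation between $\mI$ and $\mI'$, then by Lemma~\ref{lemma: GDHAW2} (invariance for \mLPp) we have $Z'(x,x') \leq (C^\mI(x) \fequiv C^\mIp(x'))$ for every concept $C$ of \mLPp; taking the infimum over all such $C$ gives $Z'(x,x') \leq Z(x,x')$ for every $\tuple{x,x'}$, which is exactly $Z' \leq Z$. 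So the crux is showing that $Z$ satisfies all the conditions of Definition~\ref{def: DHGAK}.

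For the verification that $Z$ is a bisimulation, I would check each condition in turn. Condition~\eqref{eq: FB 2} is immediate, since each concept name $A$ is itself a concept of \mLPp, so $Z(x,x') \leq (A^\mI(x) \fequiv A^\mIp(x'))$ directly by definition of the infimum. Conditions~\eqref{eq: FB 5} (for $O$) and~\eqref{eq: FB 10} (for $\Self$) should follow similarly, since $\{a\}$ and $\E r.\Self$ are concepts of \mLPp and the G\"odel equivalence of their values controls the required inequalities. The genuinely interesting conditions are the ``forth/back'' clauses~\eqref{eq: FB 3}--\eqref{eq: FB 4} (and their $U$-analogues~\eqref{eq: FB 8}--\eqref{eq: FB 9}, and the number-restriction clauses~\eqref{eq: FB 6}--\eqref{eq: FB 7n}). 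For~\eqref{eq: FB 3}, fixing $x,x',y$ and writing $p = Z(x,x') \fand R^\mI(x,y)$, I must produce a witness $y' \in \Delta^\mIp$ with $p \leq Z(y,y') \fand R^\mIp(x',y')$. The standard strategy is a compactness/saturation argument: consider the set $\Gamma$ of all concepts $C$ of \mLPp such that $C^\mI(y) \geq p$ (in the appropriate sense relative to the G\"odel $\fequiv$), show that every finite subset $\Lambda$ of $\Gamma$ is ``witnessed'' by some $y'_\Lambda$ with $R^\mIp(x',y'_\Lambda) \fand C^\mIp(y'_\Lambda) \geq p$ for $C \in \Lambda$ (using that $\bigsqcap\Lambda$ is again a concept of \mLPp together with~\eqref{eq: FB 3} read contrapositively, i.e.\ that $x'$ must have an $R^\mIp$-successor agreeing with $y$ on $\bigsqcap\Lambda$), and then invoke the \emph{first} clause of Definition~\ref{def: modally saturated} to obtain a single $y'$ witnessing all of $\Gamma$ at once. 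One then checks that this $y'$ satisfies $Z(y,y') \geq p$ and $R^\mIp(x',y') \geq p$, using the witnessedness of $\mI'$ to turn suprema into maxima where needed.

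The main obstacle I anticipate is organizing this saturation argument uniformly across the different feature clauses while keeping track of the subtle behaviour of the G\"odel operators. Two points need care. First, the restriction to the sublanguage \mLPp is essential: \mLPp excludes $\V R.C$, $<\!n\,R$, disjunction, $\lnot$ and general implication precisely so that the relevant infinite conjunctions $\bigsqcap\Gamma$ stay inside the language and so that modal saturatedness (stated for \mLPp) applies; I must be disciplined about only using concepts of \mLPp throughout. Second, the number-restriction conditions~\eqref{eq: FB 6} and~\eqref{eq: FB 6n} demand $n$ \emph{pairwise distinct} witnesses simultaneously, so the naive one-point saturation must be replaced by the \emph{second} clause of Definition~\ref{def: modally saturated} (the $Q_n$-clause), which was designed exactly to supply $n$ distinct simultaneous witnesses; matching up the distinctness bookkeeping with the combinatorial ``for every $i$ there exists $j$'' structure of~\eqref{eq: FB 6} is where the proof is most delicate. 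The handling of the G\"odel implication threshold (the ``if $p \leq q$ then $1$ else $q$'' behaviour) means I cannot simply manipulate inequalities as in a Boolean setting; I expect to lean on the algebraic lemmas~\eqref{fop: HSDJW 1}--\eqref{fop: HSDJW 6} repeatedly to push the value $p = Z(x,x') \fand R^\mI(x,y)$ through the $\fequiv$ estimates, and the witnessedness hypothesis to ensure the relevant extrema are attained.
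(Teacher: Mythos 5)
Your plan is essentially the paper's proof: the reduction of ``greatest'' to Lemma~\ref{lemma: GDHAW2} is exactly the paper's first step, and your verification that $Z$ satisfies Definition~\ref{def: DHGAK} uses the same ingredients (finite conjunctions $\bigsqcap\Lambda$ staying inside \mLPp, witnessedness of $\mIp$ to turn suprema into attained maxima, and the feature-specific clauses of Definition~\ref{def: modally saturated}). The only organizational difference is that you use saturation positively, with $\Gamma$ the set of all concepts holding at $y$ to degree $\geq p$, while the paper argues by contradiction, taking $\Gamma$ to consist of one ``pinning'' concept $D_{y'} = (C_{y'} \to C_{y'}^\mI(y)) \mand (C_{y'}^\mI(y) \to C_{y'})$ per failed candidate witness; these are the same argument. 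Two precisions for your version. First, the finite-subset step must come from the definition of $Z$ applied to the concept $\E R.\bigsqcap\Lambda$ (plus witnessedness of $\mIp$), not from ``\eqref{eq: FB 3} read contrapositively'' -- \eqref{eq: FB 3} is what you are proving, so citing it there would be circular, though your gloss shows you intend the correct mechanism. Second, deducing $Z(y,y') \geq p$ from ``$R^\mIp(x',y') \fand C^\mIp(y') \geq p$ for all $C \in \Gamma$'' is not automatic for concepts $C$ with $C^\mI(y) < p$: you must note that $(C \to C^\mI(y))$ and $(C^\mI(y) \to C)$ are concepts of \mLPp (implication to and from a constant is always allowed there), evaluate to $1$ at $y$, hence lie in $\Gamma$, and together force $C^\mIp(y') = C^\mI(y)$ exactly. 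Your parenthetical ``in the appropriate sense relative to the G\"odel $\fequiv$'' hides precisely this point, which is what the paper's concepts $D_{y'}$ accomplish.

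There is one concrete misstep: you propose to prove Condition~\eqref{eq: FB 6n} (the $N_n$ clause) with the second clause of Definition~\ref{def: modally saturated}. That clause is part of modal saturatedness only under the proviso ``if $Q_n \in \Phi$'', so it is unavailable when $\Phi$ contains $N_n$ but no $Q_n$, and your argument would fail in that case. Fortunately \eqref{eq: FB 6n} imposes no requirement of the form $Z(y_j,y'_i) \geq p$ on the witnesses, so no saturation is needed at all: since $N_n \in \Phi$, the concept $\geq\!n\,R$ belongs to \mLPp, hence with $p = Z(x,x') \fand R^\mI(x,y_1) \fand\cdots\fand R^\mI(x,y_n)$ we get $p \leq Z(x,x') \leq ((\geq\!n\,R)^\mI(x) \fequiv (\geq\!n\,R)^\mIp(x'))$ and $(\geq\!n\,R)^\mI(x) \geq p$, which forces $(\geq\!n\,R)^\mIp(x') \geq p$; witnessedness of $\mIp$ then yields $n$ pairwise distinct $y'_1,\ldots,y'_n$ with $R^\mIp(x',y'_i) \geq p$, as required (and \eqref{eq: FB 7n} is symmetric). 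This is exactly the paper's short treatment of that case. Your use of the $Q_n$-clause for \eqref{eq: FB 6} is correct and is where it is genuinely needed; when filling in that case, note also that the disjunctions used to combine distinguishing concepts across $y_1,\ldots,y_n$ remain expressible in \mLPp only because full implication is admitted in \mLPp when $Q_n \in \Phi$ (Remark~\ref{remark: OFHSJ}).
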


\modifiedA{See the Appendix for the proof of this theorem.}

Given fuzzy interpretations $\mI$, $\mIp$ and $x \in \Delta^\mI$, $x' \in \Delta^\mIp$, we write $x \equivP x'$ to denote that $C^\mI(x) = C^\mIp(x')$ for every concept $C$ of~\mLP. Similarly, we write $x \equivPp x'$ to denote that $C^\mI(x) = C^\mIp(x')$ for every concept $C$ of~\mLPp. 

\begin{corollary} \label{cor: fG H-M 1}
	Let $\mI$, $\mI'$ be fuzzy interpretations and let $x \in \Delta^\mI$, $x' \in \Delta^\mIp$.
	\begin{enumerate}
		\item\label{cor: fG H-M 1-1} If $\mI$ and $\mI'$ are witnessed and modally saturated w.r.t.~\mLPp, then
		\[ x \simP x'\ \ \textrm{iff}\ \ x \equivPp x'. \]
		\item\label{cor: fG H-M 1-2} If $\mI$ and $\mI'$ are image-finite w.r.t.~$\Phi$ and $U \notin \Phi$, then
		\[ x \simP x'\ \ \textrm{iff}\ \ x \equivPp x'. \] 
		\item\label{cor: fG H-M 1-3} If $\mI$ and $\mI'$ are witnessed w.r.t.~\mLP and modally saturated w.r.t.~\mLPp, then
		\[ x \equivP x'\ \ \textrm{iff}\ \ x \simP x'\ \ \textrm{iff}\ \ x \equivPp x'. \]
	\end{enumerate}	
\end{corollary}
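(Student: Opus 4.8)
The plan is to derive the corollary entirely from the two invariance lemmas and the Hennessy--Milner theorem; no new machinery is required, and the work is essentially one of organising implications and matching each claim to the appropriate earlier result. For part~\ref{cor: fG H-M 1-1} I would argue the two directions separately. For ``$\Rightarrow$'', assume $x \simP x'$, so there is a fuzzy $\Phi$-bisimulation $Z$ between $\mI$ and $\mIp$ with $Z(x,x') = 1$. Since $\mI$ and $\mIp$ are witnessed w.r.t.\ \mLPp, Lemma~\ref{lemma: GDHAW2} gives $Z(x,x') \leq (C^\mI(x) \fequiv C^\mIp(x'))$ for every concept $C$ of \mLPp; substituting $Z(x,x') = 1$ forces $(C^\mI(x) \fequiv C^\mIp(x')) = 1$, i.e.\ $C^\mI(x) = C^\mIp(x')$, for all such $C$, which is exactly $x \equivPp x'$. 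For ``$\Leftarrow$'', assume $x \equivPp x'$. Because $\mI$ and $\mIp$ are additionally modally saturated w.r.t.\ \mLPp, Theorem~\ref{theorem: fG H-M} guarantees that the function defined by $Z(y,y') = \inf\{C^\mI(y) \fequiv C^\mIp(y') \mid \textrm{$C$ is a concept of \mLPp}\}$ is a fuzzy $\Phi$-bisimulation. Evaluating it at $\tuple{x,x'}$ and using $x \equivPp x'$ makes every term of the infimum equal to $1$, so $Z(x,x') = 1$, which witnesses $x \simP x'$.

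Part~\ref{cor: fG H-M 1-2} I would reduce directly to part~\ref{cor: fG H-M 1-1}: the excerpt already records that when $U \notin \Phi$ every fuzzy interpretation image-finite w.r.t.\ $\Phi$ is witnessed and modally saturated w.r.t.\ \mLPp, so both hypotheses of part~\ref{cor: fG H-M 1-1} are automatically met and the stated equivalence is inherited.

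For part~\ref{cor: fG H-M 1-3} I would close a single cycle of implications through all three assertions. The implication $x \equivP x' \Rightarrow x \equivPp x'$ is immediate since \mLPp is a sublanguage of \mLP. Next, under the hypotheses here $\mI$ and $\mIp$ are witnessed w.r.t.\ \mLP, hence also witnessed w.r.t.\ \mLPp (every extremum relevant for \mLPp already occurs among those for \mLP), and they are modally saturated w.r.t.\ \mLPp; thus part~\ref{cor: fG H-M 1-1} applies and delivers $x \equivPp x' \Rightarrow x \simP x'$. Finally, to return to the start, I would show $x \simP x' \Rightarrow x \equivP x'$: given a fuzzy $\Phi$-bisimulation $Z$ with $Z(x,x') = 1$, Lemma~\ref{lemma: GDHAW} (applicable because $\mI$ and $\mIp$ are witnessed w.r.t.\ \mLP) yields $1 = Z(x,x') \leq (C^\mI(x) \fequiv C^\mIp(x'))$ for every concept $C$ of \mLP, so $C^\mI(x) = C^\mIp(x')$ for all such $C$, i.e.\ $x \equivP x'$. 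These three implications form a cycle through $x \equivP x'$, $x \equivPp x'$ and $x \simP x'$, giving the displayed chain of equivalences.

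I do not expect a genuine obstacle here; the only points needing care are invoking the right invariance lemma for the right language (Lemma~\ref{lemma: GDHAW2} for the sublanguage \mLPp in part~\ref{cor: fG H-M 1-1}, Lemma~\ref{lemma: GDHAW} for the full language \mLP in part~\ref{cor: fG H-M 1-3}), noting that witnessedness w.r.t.\ \mLP entails witnessedness w.r.t.\ \mLPp, and appealing to the previously stated fact that image-finiteness together with $U \notin \Phi$ supplies the witnessedness and saturatedness needed to make part~\ref{cor: fG H-M 1-2} a corollary of part~\ref{cor: fG H-M 1-1}.
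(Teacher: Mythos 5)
Your proposal is correct and follows essentially the same route as the paper: part~\ref{cor: fG H-M 1-1} from Lemma~\ref{lemma: GDHAW2} (one direction) and Theorem~\ref{theorem: fG H-M} (the other), part~\ref{cor: fG H-M 1-2} by reduction to part~\ref{cor: fG H-M 1-1} via the stated fact that image-finiteness with $U \notin \Phi$ yields witnessedness and modal saturatedness w.r.t.~\mLPp, and part~\ref{cor: fG H-M 1-3} by the cycle of implications using Lemma~\ref{lemma: GDHAW}, the sublanguage inclusion, and part~\ref{cor: fG H-M 1-1}. The paper states these derivations only as ``directly follows from''; your write-up simply makes the same dependencies explicit, including the needed observation that witnessedness w.r.t.~\mLP entails witnessedness w.r.t.~\mLPp.
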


The assertion~\ref{cor: fG H-M 1-1} (resp.~\ref{cor: fG H-M 1-3}) directly follows from Theorem~\ref{theorem: fG H-M} and Lemma~\ref{lemma: GDHAW2} (resp.~\ref{lemma: GDHAW}). 
The assertion~\ref{cor: fG H-M 1-2} directly follows from the assertion~\ref{cor: fG H-M 1-1}. 
The following corollary directly follows from Theorem~\ref{theorem: fG H-M} and Lemma~\ref{lemma: GDHAW}. 

\begin{corollary} \label{cor: fG H-M 4}
Suppose $\IN \neq \emptyset$ and let $\mI$ and $\mI'$ be fuzzy interpretations that are witnessed w.r.t.~\mLP and modally saturated w.r.t.~\mLPp. Then, $\mI$ and $\mIp$ are $\Phi$-bisimilar iff $a^\mI \equivPp a^\mIp$ for all $a \in \IN$.
\end{corollary}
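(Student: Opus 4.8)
The plan is to derive both directions directly from the two results quoted just before the statement, after recording two elementary facts: the G\"odel equivalence operator satisfies $(p \fequiv q) = 1$ if and only if $p = q$ (since $(p \fequiv q) = \min\{p,q\} < 1$ whenever $p \neq q$), and being witnessed w.r.t.~\mLP implies being witnessed w.r.t.~\mLPp because \mLPp is a sublanguage of \mLP. The latter guarantees that Theorem~\ref{theorem: fG H-M} is applicable under the stated hypotheses.

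For the forward direction, I would assume $\mI \simP \mIp$, so there is a fuzzy $\Phi$-bisimulation $Z$ between $\mI$ and $\mIp$ with $Z(a^\mI,a^\mIp) = 1$ for all $a \in \IN$. Applying Lemma~\ref{lemma: GDHAW} (valid since $\mI$ and $\mIp$ are witnessed w.r.t.~\mLP), for every concept $C$ of \mLP — in particular every concept of the sublanguage \mLPp — and every $a \in \IN$ we obtain $Z(a^\mI,a^\mIp) \leq (C^\mI(a^\mI) \fequiv C^\mIp(a^\mIp))$. As the left-hand side equals $1$, the equivalence equals $1$, and by the recorded fact $C^\mI(a^\mI) = C^\mIp(a^\mIp)$. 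Letting $C$ range over all concepts of \mLPp yields exactly $a^\mI \equivPp a^\mIp$ for every $a \in \IN$.

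For the backward direction, I would assume $a^\mI \equivPp a^\mIp$ for all $a \in \IN$ and invoke Theorem~\ref{theorem: fG H-M} to get that the function $Z$ defined by $Z(x,x') = \inf\{C^\mI(x) \fequiv C^\mIp(x') \mid C \textrm{ a concept of \mLPp}\}$ is the greatest fuzzy $\Phi$-bisimulation between $\mI$ and $\mIp$; for the corollary I only need that it is \emph{a} fuzzy $\Phi$-bisimulation. It then remains to verify the condition at the named individuals: for each $a \in \IN$, the assumption gives $C^\mI(a^\mI) = C^\mIp(a^\mIp)$ for every concept $C$ of \mLPp, so each term $C^\mI(a^\mI) \fequiv C^\mIp(a^\mIp)$ equals $1$ and hence $Z(a^\mI,a^\mIp) = 1$. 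This $Z$ witnesses $\mI \simP \mIp$.

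Since both directions merely reassemble already-established machinery, I do not anticipate a genuine obstacle. The only points needing care are the elementary equivalence $(p \fequiv q) = 1 \Leftrightarrow p = q$ and the fact that witnessedness transfers across the sublanguage inclusion (\mLPp is a sublanguage of \mLP), both of which are immediate; in particular the whole content of the corollary is carried by Theorem~\ref{theorem: fG H-M} and Lemma~\ref{lemma: GDHAW}, so the proof is short.
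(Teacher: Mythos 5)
Your proof is correct and is exactly the argument the paper intends: the paper states that the corollary ``directly follows from Theorem~\ref{theorem: fG H-M} and Lemma~\ref{lemma: GDHAW}'', and you have filled in precisely that derivation---the forward direction via the invariance bound of Lemma~\ref{lemma: GDHAW} together with the fact that $(p \fequiv q) = 1$ iff $p = q$, and the backward direction via the greatest fuzzy $\Phi$-bisimulation of Theorem~\ref{theorem: fG H-M}, whose applicability you correctly justify by noting that witnessedness w.r.t.\ \mLP implies witnessedness w.r.t.\ the sublanguage \mLPp.
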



\section{Crisp Bisimulations for Fuzzy DLs with Involutive Negation under the G\"odel Semantics}
\label{section: crisp bis}
	
In this section, we consider fuzzy DLs extended with involutive negation or the Baaz projection operator under the G\"odel semantics, which are specified below. We provide notions and results on crisp bisimulation and strong bisimilarity for such logics.

We denote involutive negation by $\ineg$ (as $\lnot$ and $\sim$ are used to denote the G\"odel negation and bisimilarity, respectively). 
Let \mLPn be the fuzzy DL that extends \mLP with involutive negation. That is, in the inductive definition, if $C$ is a concept of \mLPn, then \mbox{$\ineg\!C$} is also a concept of \mLPn. The meaning of \mbox{$\ineg\!C$} in a fuzzy interpretation $\mI$ is specified as follows: 
\[ (\ineg\!C)^\mI(x) = 1 - C^\mI(x) \textrm{ for $x \in \Delta^\mI$. } \]

We will use the projection operator $\triangle: [0,1] \to \{0,1\}$ defined by Baaz~\cite{Baaz96} as follows:
\[ \triangle x = \textrm{(if $x = 1$ then 1 else 0)} \]
It is called the Baaz Delta in~\cite{Fan15}. 
We treat $\triangle$ also as a concept constructor with the meaning specified by $(\triangle C)^\mI(x) = \triangle(C^\mI(x))$. It is easily seen that \mbox{$(\triangle C)^\mI = (\lnot\!\ineg\!C)^\mI$}. Thus, we will treat $\triangle$ as an abbreviation for \mbox{$\lnot\!\ineg$}. 
By \DLP we denote the largest sublanguage of \mLPn that uses $\ineg$ only in the form \mbox{$\lnot\!\ineg\!C$} (i.e., $\triangle C$).
By \DLPp we denote the largest sublanguage of \mLPn that:
\begin{itemize}
	\item disallows the role constructors $R \circ S$, $R \mor S$, $R^*$, $C?$ and the concept constructors $C \mor D$, $\V R.C$, $<\!n\,R.C$, $<\!n\,R$, 
	\item \modifiedA{uses $\to$ only in the form $C \to p$ or $p \to C$, where $p \in [0,1]$,} 
	\item uses $\lnot$ and $\ineg$ only in the form \mbox{$\lnot\!\ineg\!C$} (i.e., $\triangle C$).
\end{itemize} 

\subsection{Crisp Bisimulations and Invariance Results}
	
Let $\Phi \subseteq \{I$, $O$, $U$, $\Self$, $Q_n$, $N_n \mid$ $n \in \NN \setminus \{0\}\}$ be a set of features and $\mI$, $\mI'$ fuzzy interpretations. A function \mbox{$Z : \Delta^\mI \times \Delta^\mIp \to \{0,1\}$} is called a {\em crisp $\Phi$-bisimulation} (under the G\"odel semantics) between $\mI$ and $\mI'$ if it satisfies the conditions of being a fuzzy $\Phi$-bisimulation (as in Definition~\ref{def: DHGAK}). Notice that such a function can be treated as the (crisp) relation $\{\tuple{x,x'} \in \Delta^\mI \times \Delta^{\mIp} \mid Z(x,x') = 1\}$. 
	
Using the fact that $Z$ is crisp, Conditions~\eqref{eq: FB 2}--\eqFBlast can be rewritten appropriately. For example, \eqref{eq: FB 2} is equivalent to:
\[ \textrm{if $Z(x,x') = 1$, then $A^\mI(x) = A^\mIp(x')$.} \]
	
Note that a crisp $\Phi$-bisimulation is a special fuzzy $\Phi$-bisimulation. 

\begin{example}\label{example: HDKSL2-2}
	Let $\RN = \{r\}$, $\CN = \{A\}$ and $\IN = \{a\}$. Consider the fuzzy interpretations $\mI$ and $\mIp$ illustrated below and specified similarly as in Example~\ref{example: HDKSL}, with $a^\mI = u$ and $a^\mIp = u'$:
	\begin{center}		
		\begin{tikzpicture}
		\node (I) {$\mI$};
		\node (u) [node distance=0.7cm, below of=I] {$u:A_0$};
		\node (ub) [node distance=1.5cm, below of=u] {$v_2:A_{\,0.8}$};
		\node (v) [node distance=1.8cm, left of=ub] {$v_1:A_{\,0.7}$};
		\node (w) [node distance=1.8cm, right of=ub] {$v_3:A_{\,0.8}$};
		\draw[->] (u) to node [left]{\footnotesize{0.5}} (v);	
		\draw[->] (u) to node [right]{\footnotesize{0.6}} (ub);
		\draw[->] (u) to node [right]{\footnotesize{0.3}} (w);
		\node (vp) [node distance=3cm, right of=w] {$v'_1:A_{\,0.7}$};
		\node (upb) [node distance=1.8cm, right of=vp] {$v'_2:A_{\,0.8}$};
		\node (up) [node distance=1.5cm, above of=upb] {$u':A_0$};
		\node (Ip) [node distance=0.7cm, above of=up] {$\mIp$};
		\draw[->] (up) to node [left]{\footnotesize{0.5}} (vp);
		\draw[->] (up) to node [right]{\footnotesize{0.6}} (upb);
		\end{tikzpicture}
	\end{center}
	
	Let $Z : \Delta^\mI \times \Delta^\mIp \to \{0,1\}$ be the function specified by: $Z(x,x') = 1$ iff  $\tuple{x,x'} \in \{\tuple{u,u'}, \tuple{v_1,v'_1}, \tuple{v_2,v'_2}, \tuple{v_3,v'_2} \}$. 
	It can be checked that, for any $\Phi \subseteq \{O,U,\Self,Q_n,N_n \mid n \in \NN \setminus \{0,3\}\}$, $Z$ is a crisp $\Phi$-bisimulation between $\mI$ and~$\mIp$, and these fuzzy interpretations are strongly $\Phi$-bisimilar. If $I \in \Phi$, then $v_3 \not\simPn v'_2$, and hence $u \not\simPn u'$. If $\{Q_3, N_3\} \cap \Phi \neq \emptyset$, then clearly $u \not\simPn u'$. Therefore, if $\{I, Q_3, N_3\} \cap \Phi \neq \emptyset$, then $\mI$ and $\mIp$ are not strongly $\Phi$-bisimilar. 
	\myend
\end{example}

The following proposition is a counterpart of Proposition~\ref{prop: HFHSJ}. Its proof is also straightforward. 

\begin{proposition}\label{prop: HFHSJ 2}
	Let $\mI$, $\mIp$ and $\mI''$ be fuzzy interpretations.
	\begin{enumerate}
		\item\label{item: HFHSJ2 1} The function $Z : \Delta^\mI \times \Delta^\mI \to \{0,1\}$ specified by \[ Z(x,x') = (\textrm{if $x = x'$ then 1 else 0}) \] is a crisp $\Phi$-bisimulation between $\mI$ and itself.
		\item\label{item: HFHSJ2 2}  If $Z$ is a crisp $\Phi$-bisimulation between $\mI$ and $\mIp$, then $Z^-$ is a crisp $\Phi$-bisimulation between $\mIp$ and~$\mI$.
		\item\label{item: HFHSJ2 3}  If $Z_1$ is a crisp $\Phi$-bisimulation between $\mI$ and $\mIp$, and $Z_2$ is a crisp $\Phi$-bisimulation between $\mIp$ and $\mI''$, then $Z_1 \circ Z_2$ is a crisp $\Phi$-bisimulation between $\mI$ and $\mI''$.
		\item\label{item: HFHSJ2 4}  If $\mZ$ is a (finite or infinite) set of crisp $\Phi$-bisimulations between $\mI$ and $\mIp$, then $\sup\mZ$ is also a crisp $\Phi$-bisimulation between $\mI$ and $\mIp$.
	\end{enumerate}   
\end{proposition}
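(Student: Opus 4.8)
The plan is to obtain parts~1--3 almost for free from their fuzzy counterparts in Proposition~\ref{prop: HFHSJ}, and to treat part~4 --- the genuinely new statement, since it allows $\mZ$ to be infinite --- by a direct argument that exploits crispness. The organizing observation is that, by definition, a crisp $\Phi$-bisimulation is exactly a fuzzy $\Phi$-bisimulation whose range is contained in $\{0,1\}$. Hence for parts~1--3 it suffices to check that (i)~the underlying function is a fuzzy $\Phi$-bisimulation, which is delivered by the respective item of Proposition~\ref{prop: HFHSJ}, and (ii)~the construction preserves crispness.

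The crispness-preservation step (ii) I would dispatch first, as it is elementary: the diagonal function of part~1 is already $\{0,1\}$-valued; if $Z$ is $\{0,1\}$-valued then so is $Z^-$, since $Z^-(x',x) = Z(x,x')$; and for part~3, $(Z_1 \circ Z_2)(x,y) = \sup\{Z_1(x,z) \fand Z_2(z,y) \mid z \in \Delta^\mIp\}$ is a supremum of values $\min(Z_1(x,z),Z_2(z,y)) \in \{0,1\}$, hence itself lies in $\{0,1\}$. Combining (i) and (ii) settles parts~1--3.

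For part~4 I cannot invoke item~4 of Proposition~\ref{prop: HFHSJ}, which is stated only for finite $\mZ$, precisely because an infinite supremum of values in $[0,1]$ need not be attained by any single member. Crispness removes this obstacle. First, $\sup\mZ$ is crisp, because $(\sup\mZ)(x,x') = \sup\{Z(x,x') \mid Z \in \mZ\}$ is a supremum of $\{0,1\}$-values, so it equals $1$ when some $Z \in \mZ$ has $Z(x,x') = 1$ and equals $0$ otherwise. The decisive point is that whenever $(\sup\mZ)(x,x') = 1$ I may \emph{fix one} member $Z \in \mZ$ realizing this equality, and then verify each condition of Definition~\ref{def: DHGAK} for $\sup\mZ$ using the corresponding condition for that single $Z$. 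For instance, for~\eqref{eq: FB 3}: given $y$, condition~\eqref{eq: FB 3} for $Z$ supplies a $y'$ with $Z(y,y') = 1$ and $R^\mIp(x',y') \geq R^\mI(x,y)$; since $(\sup\mZ)(y,y') \geq Z(y,y') = 1$, the inequality~\eqref{eq: FB 3} holds for $\sup\mZ$ as well. I would run this same one-witness reduction through the remaining conditions \eqref{eq: FB 2}, \eqref{eq: FB 4}--\eqref{eq: FB 10} and the number-restriction clauses \eqref{eq: FB 6}--\eqref{eq: FB 7n}: in each case the hypothesis activating the condition forces $(\sup\mZ)(x,x') = 1$, hence $Z(x,x') = 1$ for the chosen $Z$, and the witnesses produced by $Z$ remain witnesses for $\sup\mZ$ because $\sup\mZ \geq Z$ pointwise.

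The main obstacle is making this ``fix one $Z$'' reduction uniform across all the conditions, in particular the number-restriction clauses \eqref{eq: FB 6}--\eqref{eq: FB 7n}, whose premise is phrased with $Z(x,x') > 0$ rather than an equation. Since $Z$ is crisp, $Z(x,x') > 0$ is the same as $Z(x,x') = 1$, so a single member $Z$ with $(\sup\mZ)(x,x') = 1$ again suffices, and its pairwise-distinct witnesses $y'_1,\ldots,y'_n$ transfer verbatim to $\sup\mZ$. Contrasting this with the fuzzy setting makes clear why finiteness was needed there: when ranges are $[0,1]$, the value $(\sup\mZ)(x,x')$ may be a strict supremum realized by no individual member, so no single $Z$ yields the required witnesses --- exactly the difficulty that disappears under crispness.
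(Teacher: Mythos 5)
Your proposal is correct. In fact the paper gives no proof at all of Proposition~\ref{prop: HFHSJ 2} --- it only remarks that the proposition is a counterpart of Proposition~\ref{prop: HFHSJ} and that the proof is ``also straightforward'' --- so your argument supplies exactly the details the paper leaves implicit, and does so along the lines the paper evidently intends. Parts 1--3 follow, as you say, from the fuzzy counterparts once one notes that the diagonal, inverse and composition constructions preserve $\{0,1\}$-valuedness (a crisp $\Phi$-bisimulation being, by definition, a fuzzy one with range in $\{0,1\}$). Your treatment of part~4 identifies the one genuinely new point: a supremum of $\{0,1\}$-values equal to $1$ is attained by some member, so one may fix a single $Z \in \mZ$ with $Z(x,x') = 1 = (\sup\mZ)(x,x')$; then in every condition of Definition~\ref{def: DHGAK} the left-hand side for $\sup\mZ$ coincides with that for $Z$ (being pinned at $1$) while the right-hand sides can only grow under $\sup\mZ \geq Z$, so the witnesses for $Z$ serve for $\sup\mZ$. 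This is precisely why the crisp statement can drop the finiteness restriction of item~4 of Proposition~\ref{prop: HFHSJ}, and your closing contrast with the fuzzy case explains the restriction there correctly. One small imprecision, which does not affect the argument: in your discussion of \eqref{eq: FB 3}, the chosen $Z$ is only forced to give $Z(y,y') = 1$ when $R^\mI(x,y) > 0$; if $R^\mI(x,y) = 0$ the condition for $\sup\mZ$ holds trivially, and your general monotone-transfer principle already covers that case.
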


The notion of being {\em witnessed w.r.t.\ \mLPn} (resp.\ \DLPp) is defined in the usual way. 
The following lemma differs from Lemma~\ref{lemma: GDHAW} in that it is formulated for crisp $\Phi$-bisimulations and concepts of \mLPn.
	
\begin{lemma} \label{lemma: cGDHAW}
Let $\mI$ and $\mI'$ be fuzzy interpretations that are witnessed w.r.t.\ \mLPn and $Z$ a crisp $\Phi$-bisimulation between $\mI$ and $ \mI'$. Then, the following properties hold for every concept $C$ of \mLPn, every role $R$ of \mLPn, every $x \in \Delta^\mI$ and every $x' \in \Delta^\mIp$:
\begin{eqnarray}
&& Z(x,x') \leq (C^\mI(x) \fequiv C^\mIp(x')) \label{eq: cGDHAW 1} \\[0.5ex]
&& \V y \in \Delta^\mI\ \E y' \in \Delta^\mIp\; Z(x,x') \fand R^\mI(x,y) \leq Z(y,y') \fand R^\mIp(x',y') \label{eq: cGDHAW 2} \\[0.5ex]
&& \V y' \in \Delta^\mIp\ \E y \in \Delta^\mI\; Z(x,x') \fand R^\mIp(x',y') \leq Z(y,y') \fand R^\mI(x,y). \label{eq: cGDHAW 3}
\end{eqnarray}
\end{lemma}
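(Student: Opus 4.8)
The plan is to prove the three assertions simultaneously by induction on the structure of the concept $C$ and the role $R$, mirroring the proof of Lemma~\ref{lemma: GDHAW}. Since a crisp $\Phi$-bisimulation is in particular a fuzzy $\Phi$-bisimulation, every case treated in the proof of Lemma~\ref{lemma: GDHAW} applies here word for word: that proof never relies on the range of $Z$ being all of $[0,1]$, so the inductive steps for atomic concepts, for $C \mand D$, $C \to D$, $\lnot C$, $C \mor D$, $\E R.C$, $\V R.C$, the $\Self$, $O$, $U$, $Q_n$ and $N_n$ constructors, and all the role constructors $r$, $R^-$, $R \circ S$, $R \mor S$, $R^*$, $D?$ and $U$ carry over unchanged. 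Two remarks make this transfer legitimate. First, the witnessedness hypothesis is now taken w.r.t.~\mLPn, which is stronger than witnessedness w.r.t.~\mLP and hence still supplies the smallest/biggest witnesses needed in the $\V R.C$, $<\!n\,R.C$ and $<\!n\,R$ steps, even when the subconcepts contain $\ineg$. Second, in the test-operator case $R = D?$ the concept $D$ may now contain $\ineg$, but that case invokes only the induction hypothesis \eqref{eq: cGDHAW 1} for $D$, which is available. Thus the only genuinely new case is the involutive-negation constructor $C = \ineg D$, and that is the step I would spell out.

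For $C = \ineg D$ I would argue as follows. By the induction hypothesis, \eqref{eq: cGDHAW 1} holds for $D$, i.e.\ $Z(x,x') \leq (D^\mI(x) \fequiv D^\mIp(x'))$. Because $Z$ is crisp, $Z(x,x') \in \{0,1\}$. If $Z(x,x') = 0$, then \eqref{eq: cGDHAW 1} for $\ineg D$ holds trivially. If $Z(x,x') = 1$, then $(D^\mI(x) \fequiv D^\mIp(x')) = 1$, which by the definition of the G\"odel equivalence operator forces $D^\mI(x) = D^\mIp(x')$; hence $(\ineg D)^\mI(x) = 1 - D^\mI(x) = 1 - D^\mIp(x') = (\ineg D)^\mIp(x')$, so $((\ineg D)^\mI(x) \fequiv (\ineg D)^\mIp(x')) = 1 = Z(x,x')$. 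Assertions \eqref{eq: cGDHAW 2} and \eqref{eq: cGDHAW 3} concern roles, and since $\ineg$ is only a concept constructor, no new role case arises beyond those already handled above.

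The step I expect to be the crux --- and the whole reason crispness is imposed --- is precisely this $\ineg D$ case. Note that the inequality $Z(x,x') \leq (D^\mI(x) \fequiv D^\mIp(x'))$ does not in general entail $Z(x,x') \leq ((\ineg D)^\mI(x) \fequiv (\ineg D)^\mIp(x'))$ for a fuzzy $Z$: when $D^\mI(x) \neq D^\mIp(x')$ the two right-hand sides equal $\min\{D^\mI(x), D^\mIp(x')\}$ and $\min\{1 - D^\mI(x),\, 1 - D^\mIp(x')\}$ respectively, and these are unrelated in value. Restricting the range of $Z$ to $\{0,1\}$ collapses the difficulty, because the premise $Z(x,x') \leq (D^\mI(x) \fequiv D^\mIp(x'))$ then forces either $Z(x,x') = 0$ (trivial) or exact equality $D^\mI(x) = D^\mIp(x')$, and exact equality is manifestly preserved by the map $p \mapsto 1 - p$. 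This is exactly why the fuzzy analogue (Lemma~\ref{lemma: GDHAW}) cannot be extended to \mLPn and why the crisp setting is the appropriate one here.
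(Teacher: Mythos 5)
Your proposal is correct and follows essentially the same route as the paper's proof: reuse the induction of Lemma~\ref{lemma: GDHAW} verbatim (with \mLP replaced by \mLPn), the only new case being $C = \ineg D$, which you handle exactly as the paper does --- crispness of $Z$ reduces the hypothesis to $D^\mI(x) = D^\mIp(x')$, which is preserved under $p \mapsto 1-p$. Your closing remark on why the fuzzy analogue fails for $\ineg$ is a correct and useful observation, though not part of the paper's proof.
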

	
\begin{proof}
We apply the proof of Lemma~\ref{lemma: GDHAW} with slight modifications. Apart from that the occurrences of \mLP are replaced by \mLPn, the only other change is to consider also the case \mbox{$C =\ \ineg\!D$} when proving~\eqref{eq: cGDHAW 1}. Consider this case. If $Z(x,x') = 0$, then \eqref{eq: cGDHAW 1} clearly holds. So, suppose that $Z(x,x') \neq 0$, i.e., $Z(x,x') = 1$. By the inductive assumption of~\eqref{eq: cGDHAW 1}, \mbox{$Z(x,x') \leq (D^\mI(x) \fequiv D^\mIp(x'))$}. Thus, $D^\mI(x) = D^\mIp(x')$. Consequently, $C^\mI(x) = 1 - D^\mI(x) = 1 - D^\mIp(x') = C^\mIp(x')$, and hence \mbox{$Z(x,x') \leq (C^\mI(x) \fequiv C^\mIp(x'))$}. 
\myend
\end{proof}
	
The following lemma is a counterpart of Lemma~\ref{lemma: GDHAW2}. Its proof can be obtained from the proof of Lemma~\ref{lemma: cGDHAW} (and Lemma~\ref{lemma: GDHAW}) by simplification.
	
\begin{lemma} \label{lemma: cGDHAW2}
Let $\mI$ and $\mI'$ be fuzzy interpretations that are witnessed w.r.t.\ \DLPp and $Z$ a crisp $\Phi$-bisimulation between $\mI$ and $ \mI'$. Then, for every concept $C$ of \DLPp, every $x \in \Delta^\mI$ and every $x' \in \Delta^\mIp$, 
\[ Z(x,x') \leq (C^\mI(x) \fequiv C^\mIp(x')). \]
\end{lemma}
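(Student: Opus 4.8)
The plan is to proceed by induction on the structure of the concept $C$ of \DLPp, mirroring the inductions in the proofs of Lemmas~\ref{lemma: GDHAW} and~\ref{lemma: cGDHAW} but restricted to the few constructors that \DLPp admits. The first step is a reduction exploiting that $Z$ is crisp: the claimed inequality $Z(x,x') \leq (C^\mI(x) \fequiv C^\mIp(x'))$ is trivially true when $Z(x,x') = 0$, and when $Z(x,x') = 1$ it is equivalent to the plain equality $C^\mI(x) = C^\mIp(x')$, since $(p \fequiv q) = 1$ iff $p = q$. So throughout I may assume $Z(x,x') = 1$ and aim to establish $C^\mI(x) = C^\mIp(x')$.

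The base cases are immediate from the elementary bisimulation conditions: for $C = p$ a truth constant both sides equal $p$; for $C = A$ use~\eqref{eq: FB 2}; for $C = \{a\}$ (when $O \in \Phi$) use~\eqref{eq: FB 5}; and for $C = \E r.\Self$ (when $\Self \in \Phi$) use~\eqref{eq: FB 10}. The constructors $C \mand D$, $C \to p$, $p \to C$ and $\triangle C = \lnot\!\ineg C$ are all interpreted pointwise, so in each of these cases the equality follows directly from the induction hypothesis applied to the immediate subconcept(s) at the same pair $\tuple{x,x'}$. This disposes of every propositional feature of \DLPp; the excluded constructors $\V R.C$, $C \mor D$, $<\!n\,R.C$, $<\!n\,R$ and the general form of $\to$ never arise, which is precisely why the argument is shorter than that of Lemma~\ref{lemma: cGDHAW}.

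The genuinely modal cases are the existential ones. For $C = \E R.D$ with $R$ a basic role I prove the two inequalities separately. For $(\E R.D)^\mI(x) \le (\E R.D)^\mIp(x')$, I use that $\mI$ is witnessed to pick $y$ attaining the supremum $\sup\{R^\mI(x,y) \fand D^\mI(y)\}$; if that value is $0$ the inequality is trivial, and otherwise $R^\mI(x,y) > 0$, so~\eqref{eq: FB 3} (with $Z(x,x')=1$) yields $y'$ with $R^\mI(x,y) \le Z(y,y') \fand R^\mIp(x',y')$, forcing $Z(y,y')=1$ by crispness; the induction hypothesis then gives $D^\mI(y)=D^\mIp(y')$ and hence $R^\mIp(x',y') \fand D^\mIp(y') \ge R^\mI(x,y) \fand D^\mI(y)$. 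The reverse inequality is symmetric, using that $\mIp$ is witnessed and~\eqref{eq: FB 4} (equivalently, passing to $Z^-$ via Proposition~\ref{prop: HFHSJ 2}). The case $C = \E U.D$ (when $U \in \Phi$) is identical but uses~\eqref{eq: FB 8} and~\eqref{eq: FB 9} in place of~\eqref{eq: FB 3},~\eqref{eq: FB 4}.

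The hardest cases, and the main obstacle, are the number restrictions $\geq\!n\,R.D$ (when $Q_n \in \Phi$) and $\geq\!n\,R$ (when $N_n \in \Phi$), which must transport an $n$-element family of witnesses rather than a single one. For $\geq\!n\,R.D$ I again argue by two inequalities. Using witnessedness of $\mI$, I fix pairwise distinct $y_1,\dots,y_n$ attaining the supremum, with value $v$; if $v > 0$ then $R^\mI(x,y_j) > 0$ for every $j$, so Condition~\eqref{eq: FB 6} supplies pairwise distinct $y'_1,\dots,y'_n$ such that for each $i$ some $j$ satisfies $\bigotimes_k R^\mI(x,y_k) \le Z(y_j,y'_i) \fand R^\mIp(x',y'_i)$. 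Crispness forces $Z(y_j,y'_i)=1$, the induction hypothesis gives $D^\mI(y_j)=D^\mIp(y'_i)$, and a short computation with the G\"odel conjunction (using $v \le R^\mI(x,y_k)$ and $v \le D^\mI(y_j)$) shows $R^\mIp(x',y'_i) \fand D^\mIp(y'_i) \ge v$ for every $i$; since the $y'_i$ are pairwise distinct this yields $(\geq\!n\,R.D)^\mIp(x') \ge v$. The symmetric inequality uses~\eqref{eq: FB 7}, and the case $\geq\!n\,R$ is the same argument with~\eqref{eq: FB 6n},~\eqref{eq: FB 7n} and $D$ replaced by the constant $1$. The delicate points are the bookkeeping that the $n$ images stay pairwise distinct and the boundary case $v=0$ (handled by the symmetric direction), both treated exactly as in the proof of Lemma~\ref{lemma: GDHAW}; because \DLPp admits no composite roles, the extension of the zig-zag conditions to arbitrary roles (the analogues of~\eqref{eq: GDHAW 2},~\eqref{eq: GDHAW 3}) is never needed, which is the principal simplification over Lemma~\ref{lemma: cGDHAW}.
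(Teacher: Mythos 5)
Your proof is correct, and it accounts for exactly the constructors that \DLPp admits (truth constants, concept names, nominals, $\E r.\Self$, $\mand$, $C \to p$ and $p \to C$, $\triangle$, existential restrictions over basic roles and $U$, and the at-least restrictions). The paper gives no detailed argument for this lemma: it states only that the proof is obtained from the proofs of Lemmas~\ref{lemma: cGDHAW} and~\ref{lemma: GDHAW} by simplification, i.e., one runs the structural induction of Lemma~\ref{lemma: GDHAW} with its fuzzy-operator computations based on \eqref{fop: HSDJW 1}--\eqref{fop: HSDJW 6}, deletes the cases for the excluded constructors, and adds the involutive-negation case as in Lemma~\ref{lemma: cGDHAW}. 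Your route has the same inductive skeleton and invokes the same conditions \eqref{eq: FB 2}--\eqref{eq: FB 7n} together with witnessedness, but it differs in one worthwhile respect: you exploit crispness of $Z$ at the outset, reducing the fuzzy inequality $Z(x,x') \leq (C^\mI(x) \fequiv C^\mIp(x'))$ to the plain equality $C^\mI(x) = C^\mIp(x')$ under the assumption $Z(x,x') = 1$. This turns every inductive step into an equality-transport argument and removes the G\"odel-algebra manipulations entirely (compare the chains of $\fto$/$\fequiv$ estimates in the cases $\E R.D$ and $\geq\!n\,R.D$ of Lemma~\ref{lemma: GDHAW}); what this buys is a self-contained and genuinely more elementary proof, whereas the paper's reference-and-simplify route buys brevity at the cost of importing machinery that crispness renders unnecessary. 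Your treatment of the delicate points is also sound: positivity of the right-hand sides of \eqref{eq: FB 3}, \eqref{eq: FB 6} forces $Z(y,y')=1$; the boundary case where the supremum is $0$ is absorbed by proving the two inequalities separately; and pairwise distinctness of the transported witnesses in the $Q_n$/$N_n$ cases is supplied directly by \eqref{eq: FB 6}--\eqref{eq: FB 7n}.
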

	
Let $\mI$ and $\mIp$ be fuzzy interpretations. For $x \in \Delta^\mI$ and $x' \in \Delta^\mIp$, we write $x \simPn x'$ to denote that there exists a crisp $\Phi$-bisimulation $Z$ between $\mI$ and $\mI'$ such that \mbox{$Z(x,x') = 1$}. If $x \simPn x'$, then we say that $x$ and $x'$ are {\em strongly $\Phi$-bisimilar}. 
If $\IN \neq \emptyset$ and there exists a crisp $\Phi$-bisimulation $Z$ between $\mI$ and $\mI'$ such that $Z(a^\mI,a^\mIp)=1$ for all $a \in \IN$, then we say that $\mI$ and $\mI'$ are {\em strongly $\Phi$-bisimilar} and write $\mI \simPn \mIp$. 
Note that both $\Phi$-bisimilarity and strong $\Phi$-bisimilarity are crisp relations. The difference is that the former is defined by using fuzzy $\Phi$-bisimulation, while the latter is defined by using crisp $\Phi$-bisimulation.  
	
A concept $C$ of \mLPn is said to be {\em invariant under strong $\Phi$-bisimilarity} (between witnessed interpretations) if, for any fuzzy interpretations $\mI$ and $\mI'$ that are witnessed w.r.t.\ \mLPn and any $x \in \Delta^\mI$ and $x' \in \Delta^\mIp$, if \mbox{$x \simPn x'$}, then $C^\mI(x) = C^\mIp(x')$. 
The following theorem is a counterpart of Theorem~\ref{theorem: UFNSJ}. It follows from the assertion~\eqref{eq: cGDHAW 1} of Lemma~\ref{lemma: cGDHAW}. 
	
\begin{theorem}\label{theorem: UFNSJ2}
All concepts of \mLPn are invariant under strong $\Phi$-bisimilarity. 
\end{theorem}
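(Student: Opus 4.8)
The plan is to obtain this theorem as an immediate consequence of the invariance inequality~\eqref{eq: cGDHAW 1} established in Lemma~\ref{lemma: cGDHAW}, in exactly the way Theorem~\ref{theorem: UFNSJ} follows from Lemma~\ref{lemma: GDHAW}. Essentially no new work is needed once that lemma is in hand; the theorem is a direct corollary.

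First I would unfold the two definitions involved. Fix a concept $C$ of \mLPn, fuzzy interpretations $\mI$ and $\mI'$ that are witnessed w.r.t.\ \mLPn, and points $x \in \Delta^\mI$, $x' \in \Delta^\mIp$ with $x \simPn x'$. By the definition of strong $\Phi$-bisimilarity, there exists a crisp $\Phi$-bisimulation $Z$ between $\mI$ and $\mI'$ with $Z(x,x') = 1$. This is precisely the hypothesis needed to feed Lemma~\ref{lemma: cGDHAW}.

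Next I would invoke that lemma: since $\mI$ and $\mI'$ are witnessed w.r.t.\ \mLPn and $Z$ is a crisp $\Phi$-bisimulation, assertion~\eqref{eq: cGDHAW 1} yields $Z(x,x') \leq (C^\mI(x) \fequiv C^\mIp(x'))$. Substituting $Z(x,x') = 1$ forces $(C^\mI(x) \fequiv C^\mIp(x')) = 1$. Finally I would appeal to the characterization of the G\"odel equivalence operator recalled in the preliminaries — namely $(p \fequiv q) = 1$ iff $p = q$ — to conclude $C^\mI(x) = C^\mIp(x')$, which is exactly the required invariance under strong $\Phi$-bisimilarity.

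There is no genuine obstacle at the level of this theorem: the whole substance is hidden in Lemma~\ref{lemma: cGDHAW}, whose proof is the structural induction over all concept constructors of \mLPn, with the role conditions~\eqref{eq: cGDHAW 2}--\eqref{eq: cGDHAW 3} handling the quantifier and number-restriction cases and with the extra case $C = \ineg D$ for involutive negation. If one had to prove everything from scratch, that induction — and in particular verifying that the \emph{crispness} of $Z$ is what makes the $\ineg$ case go through (since $Z(x,x')=1$ reduces the $\fequiv$ inequality to an equality $D^\mI(x)=D^\mIp(x')$, which is preserved by $p \mapsto 1-p$) — would be the hard part. Given the lemma, however, the argument reduces to the three lines sketched above.
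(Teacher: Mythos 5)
Your proposal is correct and is exactly the paper's argument: the paper derives Theorem~\ref{theorem: UFNSJ2} directly from assertion~\eqref{eq: cGDHAW 1} of Lemma~\ref{lemma: cGDHAW}, just as Theorem~\ref{theorem: UFNSJ} is derived from Lemma~\ref{lemma: GDHAW}. Your unfolding of the definitions (existence of a crisp $Z$ with $Z(x,x')=1$, the inequality forcing $C^\mI(x) \fequiv C^\mIp(x') = 1$, hence equality) and your remark that the real work is the induction in the lemma, with crispness making the $\ineg$ case go through, match the paper precisely.
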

	
A fuzzy TBox $\mT$ is said to be {\em invariant under strong $\Phi$-bisimilarity} (between witnessed interpretations) if, for every fuzzy interpretations $\mI$ and $\mI'$ that are witnessed w.r.t.\ \mLPn and strongly $\Phi$-bisimilar to each other, $\mI \models \mT$ iff $\mI' \models \mT$. The notion of invariance of fuzzy ABoxes under strong $\Phi$-bisimilarity (between witnessed interpretations) is defined analogously. 
The following theorem is a counterpart of Theorem~\ref{theorem: UDKMS} \modifiedA{and can be proved similarly.} 

\begin{theorem}\label{theorem: UDKMS2}
If $U \in \Phi$, then all fuzzy TBoxes in \mLPn are invariant under strong $\Phi$-bisimilarity. 
\end{theorem}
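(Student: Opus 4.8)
The plan is to mirror the proof of Theorem~\ref{theorem: UDKMS} essentially verbatim, substituting \mLPn for \mLP throughout, crisp $\Phi$-bisimulation for fuzzy $\Phi$-bisimulation, and the involutive-negation invariance result Theorem~\ref{theorem: UFNSJ2} for Theorem~\ref{theorem: UFNSJ}. First I would assume $U \in \Phi$ and take fuzzy interpretations $\mI$ and $\mIp$ that are witnessed w.r.t.\ \mLPn and strongly $\Phi$-bisimilar to each other; as in the original, this tacitly forces $\IN \neq \emptyset$. Fixing a fuzzy TBox $\mT$ in \mLPn with $\mI \models \mT$, a fuzzy GCI $(C \sqsubseteq D) \rhd p$ from $\mT$, and an arbitrary $y' \in \Delta^\mIp$, the semantics of GCIs reduces the goal $\mIp \models \mT$ to showing $(C \to D)^\mIp(y') \rhd p$.

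To establish this, I would fix a crisp $\Phi$-bisimulation $Z$ between $\mI$ and $\mIp$ witnessing $\mI \simPn \mIp$, so that $Z(a^\mI,a^\mIp) = 1$ for all $a \in \IN$, and pick any $a \in \IN$. The key step is to invoke Condition~\eqref{eq: FB 9}, which is available because $U \in \Phi$: applied with $x = a^\mI$ and $x' = a^\mIp$ it yields some $y \in \Delta^\mI$ with $Z(a^\mI,a^\mIp) \leq Z(y,y')$, hence $Z(y,y') = 1$. Since $\mI \models \mT$ gives $(C \to D)^\mI(y) \rhd p$, and since $C \to D$ is itself a concept of \mLPn, Theorem~\ref{theorem: UFNSJ2} delivers $(C \to D)^\mIp(y') = (C \to D)^\mI(y) \rhd p$, as required.

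I do not expect a genuine obstacle: the argument is structurally identical to that of Theorem~\ref{theorem: UDKMS}, and the only two points that deserve an explicit check are both routine. The first is that Condition~\eqref{eq: FB 9} is inherited by crisp $\Phi$-bisimulations; this holds because a crisp $\Phi$-bisimulation is by definition a fuzzy $\Phi$-bisimulation whose range is restricted to $\{0,1\}$, so it satisfies all conditions of Definition~\ref{def: DHGAK}. The second is that the invariance input must be the \mLPn-version, namely Theorem~\ref{theorem: UFNSJ2} in place of Theorem~\ref{theorem: UFNSJ}, so that the implication concept $C \to D$ built from \mLPn-concepts (possibly containing $\ineg$) is covered by the invariance statement.
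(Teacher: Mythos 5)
Your proposal is correct and coincides with the paper's own proof: the paper proves Theorem~\ref{theorem: UDKMS2} exactly by taking the proof of Theorem~\ref{theorem: UDKMS} and replacing ``witnessed w.r.t.\ \mLP'' with ``witnessed w.r.t.\ \mLPn'', ``fuzzy $\Phi$-bisimulation'' with ``crisp $\Phi$-bisimulation'', and Theorem~\ref{theorem: UFNSJ} with Theorem~\ref{theorem: UFNSJ2}, which is precisely the substitution you carry out, including the key use of Condition~\eqref{eq: FB 9}. Your two explicit checks (that a crisp $\Phi$-bisimulation satisfies all conditions of Definition~\ref{def: DHGAK}, and that the \mLPn-version of invariance is needed because $C \to D$ may contain $\ineg$) are both accurate and match the paper's intent.
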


\comment{
The proof of this theorem is obtained from the proof of Theorem~\ref{theorem: UDKMS} by replacing:
\begin{itemize}
\item ``witnessed w.r.t.\ \mLP'' with ``witnessed w.r.t.\ \mLPn'', 
\item ``fuzzy $\Phi$-bisimulation'' with ``crisp $\Phi$-bisimulation'', 
\item ``Theorem~\ref{theorem: UFNSJ}'' with ``Theorem~\ref{theorem: UFNSJ2}''.
\end{itemize} 
}
	
\begin{definition}
A fuzzy interpretation $\mI$ is {\em connected} w.r.t.\ $\Phi$ if, for every $x \in \Delta^\mI$, there exist $a \in \IN$, $x_0,\ldots,x_n \in \Delta^\mI$ and basic roles $R_1,\ldots,R_n$ w.r.t.~$\Phi$ such that $x_0 = a^\mI$, $x_n = x$ and $R_i^\mI(x_{i-1},x_i) > 0$ for all $1 \leq i \leq n$. 
\myend
\end{definition}

Our notion of connectedness is an adaptation of the one in~\cite{BSDL-P-LOGCOM} and the notion of being unreachable-objects-free~\cite{BSDL-INS}. The following theorem concerns invariance of fuzzy TBoxes without requiring $U \in \Phi$.
	
\begin{theorem}\label{theorem: UFSSK}
Let $\mT$ be a fuzzy TBox in \mLPn and $\mI$, $\mIp$ fuzzy interpretations that are witnessed w.r.t.\ \mLPn and strongly $\Phi$-bisimilar to each other. If $\mI$ and $\mIp$ are connected w.r.t.\ $\Phi$, then $\mI \models \mT$ iff $\mI' \models \mT$.
\end{theorem}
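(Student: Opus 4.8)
The plan is to follow the structure of the proof of Theorem~\ref{theorem: UDKMS}, but to replace the use of the universal-role condition~\eqref{eq: FB 9} (unavailable when $U \notin \Phi$) by a reachability argument driven by connectedness. Since strong $\Phi$-bisimilarity is symmetric (by Proposition~\ref{prop: HFHSJ 2}, as $Z^-$ is again a crisp $\Phi$-bisimulation and $Z^-(a^\mIp,a^\mI) = Z(a^\mI,a^\mIp) = 1$) and both $\mI$ and $\mIp$ are assumed connected, the hypotheses are symmetric in $\mI$ and $\mIp$; hence it suffices to prove that $\mI \models \mT$ implies $\mI' \models \mT$, the converse following by swapping roles. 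So I would fix a crisp $\Phi$-bisimulation $Z$ between $\mI$ and $\mIp$ with $Z(a^\mI,a^\mIp) = 1$ for all $a \in \IN$, and assume $\mI \models \mT$.

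The key step is the claim that for every $y' \in \Delta^\mIp$ there exists $y \in \Delta^\mI$ with $Z(y,y') = 1$. To prove it I would invoke connectedness of $\mIp$ to obtain $a \in \IN$, elements $x'_0,\ldots,x'_n \in \Delta^\mIp$ and basic roles $R_1,\ldots,R_n$ w.r.t.\ $\Phi$ with $x'_0 = a^\mIp$, $x'_n = y'$ and $R_i^\mIp(x'_{i-1},x'_i) > 0$ for all $1 \leq i \leq n$. Setting $x_0 = a^\mI$, we have $Z(x_0,x'_0) = 1$, and I would walk this path inside $\mI$ by induction on $i$: given $x_{i-1} \in \Delta^\mI$ with $Z(x_{i-1},x'_{i-1}) = 1$, the back condition~\eqref{eq: FB 4} applied with $R = R_i$ supplies $x_i \in \Delta^\mI$ with $Z(x_{i-1},x'_{i-1}) \fand R_i^\mIp(x'_{i-1},x'_i) \leq Z(x_i,x'_i) \fand R_i^\mI(x_{i-1},x_i)$. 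The left-hand side equals $R_i^\mIp(x'_{i-1},x'_i) > 0$, so the right-hand side is positive; since $Z$ is crisp, this forces $Z(x_i,x'_i) = 1$. Taking $y = x_n$ establishes the claim.

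To finish I would argue exactly as in Theorem~\ref{theorem: UDKMS}. Let $(C \sqsubseteq D) \rhd p$ be a fuzzy GCI of $\mT$ and let $y' \in \Delta^\mIp$ be arbitrary; it suffices to show $(C \to D)^\mIp(y') \rhd p$. By the claim, choose $y \in \Delta^\mI$ with $Z(y,y') = 1$. Since $\mI \models \mT$ gives $(C \to D)^\mI(y) \rhd p$, and since $C \to D$ is a concept of \mLPn, invariance of concepts under strong $\Phi$-bisimilarity (Theorem~\ref{theorem: UFNSJ2}, equivalently assertion~\eqref{eq: cGDHAW 1} of Lemma~\ref{lemma: cGDHAW}, which applies as $\mI$ and $\mIp$ are witnessed) yields $(C \to D)^\mIp(y') = (C \to D)^\mI(y) \rhd p$. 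Therefore $\mI' \models (C \sqsubseteq D) \rhd p$, and as the GCI was arbitrary, $\mI' \models \mT$.

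I expect the reachability claim to be the main obstacle: it is where connectedness takes over the role played by the universal role in Theorem~\ref{theorem: UDKMS}, ensuring that every point of $\mIp$ admits a strongly bisimilar preimage in $\mI$. The crispness of $Z$ is essential to this step, since it converts the mere inequality furnished by~\eqref{eq: FB 4} into the exact equality $Z(x_i,x'_i) = 1$ needed to continue the induction; a genuinely fuzzy bisimulation would only propagate a positive lower bound along the path, which would not guarantee a value of $1$ at the endpoint.
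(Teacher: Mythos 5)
Your proposal is correct and follows essentially the same route as the paper's proof: one direction by symmetry, connectedness of $\mIp$ to obtain a path from a named individual to an arbitrary $y'$, Condition~\eqref{eq: FB 4} to lift that path step by step into $\mI$ (with crispness turning the inequality into $Z(x_i,x'_i)=1$, a detail the paper leaves implicit), and Theorem~\ref{theorem: UFNSJ2} to transfer the GCI value. Your explicit justification of the symmetry via Proposition~\ref{prop: HFHSJ 2} fills in what the paper dismisses as ``similar and omitted,'' but the argument is the same.
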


\begin{proof}
Suppose $\mI$ and $\mIp$ are connected w.r.t.\ $\Phi$. We prove that, if $\mI \models \mT$, then $\mI' \models \mT$. The converse is similar and omitted. Suppose $\mI \models \mT$. Let $(C \sqsubseteq D) \rhd p$ be any fuzzy GCI of $\mT$ and let $x' \in \Delta^\mIp$. We need to show that $(C \to D)^\mIp(x') \rhd p$. 
Since $\mIp$ is connected w.r.t.\ $\Phi$, there exists $a \in \IN$, $x'_0,\ldots,x'_n \in \Delta^\mIp$ and basic roles $R_1,\ldots,R_n$ w.r.t.~$\Phi$ such that $x'_0 = a^\mIp$, $x'_n = x'$ and $R_i^\mI(x'_{i-1},x'_i) > 0$ for all $1 \leq i \leq n$. Let $x_0 = a^\mI$ and let $Z$ be a crisp $\Phi$-bisimulation between $\mI$ and $\mI'$ such that $Z(b^\mI,b^\mIp) = 1$ for all $b \in \IN$. 
For each $i$ from $1$ to $n$, since $Z(x_{i-1},x'_{i-1}) = 1$ and $R^\mIp(x'_{i-1},x'_i) > 0$, by Condition~\eqref{eq: FB 4}, there exists $x_i \in \Delta^\mI$ such that $Z(x_i,x'_i) = 1$. Let $x = x_n$. Thus, $Z(x,x') = 1$. Since $\mI \models \mT$, we have $(C \to D)^\mI(x) \rhd p$. Since $Z(x,x') = 1$, by Theorem~\ref{theorem: UFNSJ2}, $(C \to D)^\mIp(x') = (C \to D)^\mI(x) \rhd p$, which completes the proof.
\myend
\end{proof}

The following theorem is a counterpart of Theorem~\ref{theorem: IFDMS} \modifiedA{and can be proved similarly.}  

\begin{theorem}\label{theorem: IFDMS2}
Let $\mA$ be a fuzzy ABox in \mLPn. If $O \in \Phi$ or $\mA$ consists of only fuzzy assertions of the form $C(a) \bowtie p$, then $\mA$ is invariant under strong $\Phi$-bisimilarity. 
\end{theorem}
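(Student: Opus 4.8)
The plan is to mirror the proof of its non-involutive counterpart, Theorem~\ref{theorem: IFDMS}, almost verbatim, making the standard substitutions of the crisp setting: replace ``witnessed w.r.t.\ \mLP'' by ``witnessed w.r.t.\ \mLPn'', ``fuzzy $\Phi$-bisimulation'' by ``crisp $\Phi$-bisimulation'', $\simP$ by $\simPn$, the invariance theorem (Theorem~\ref{theorem: UFNSJ}) by its crisp version (Theorem~\ref{theorem: UFNSJ2}), and Lemma~\ref{lemma: GDHAW} together with its back-and-forth clauses \eqref{eq: GDHAW 2}--\eqref{eq: GDHAW 3} by Lemma~\ref{lemma: cGDHAW} with clauses \eqref{eq: cGDHAW 2}--\eqref{eq: cGDHAW 3}. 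Concretely, I would assume the hypothesis (so $O \in \Phi$, or every $\varphi \in \mA$ has the form $C(a) \bowtie p$), take fuzzy interpretations $\mI$, $\mIp$ that are witnessed w.r.t.\ \mLPn with $\mI \simPn \mIp$ and $\mI \models \mA$, fix a crisp $\Phi$-bisimulation $Z$ with $Z(a^\mI,a^\mIp)=1$ for all $a \in \IN$, and reduce to proving $\mIp \models \varphi$ for each $\varphi \in \mA$ by a case analysis on the shape of $\varphi$.

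The concept-assertion case $\varphi = (C(a) \bowtie p)$ is immediate: from $\mI \simPn \mIp$ we obtain $a^\mI \simPn a^\mIp$, and Theorem~\ref{theorem: UFNSJ2} then gives $C^\mIp(a^\mIp) = C^\mI(a^\mI) \bowtie p$. The equality and inequality cases $\varphi = (a \doteq b)$ and $\varphi = (a \ndoteq b)$ can only occur, under the hypothesis, when $O \in \Phi$, so Condition~\eqref{eq: FB 5} is available; applying it at the pairs $\tuple{a^\mI,a^\mIp}$ and $\tuple{b^\mI,b^\mIp}$ (where $Z$ takes the value $1$) transfers $a^\mI = b^\mI$ to $a^\mIp = b^\mIp$ and conversely. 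These steps are routine and indifferent to whether $Z$ is crisp or fuzzy.

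The genuine work, and the main obstacle, is the role-assertion case, which again forces $O \in \Phi$. For $\varphi = (R(a,b) \rhd p)$ with $\rhd \in \{\geq,>\}$, I would instantiate clause~\eqref{eq: cGDHAW 2} of Lemma~\ref{lemma: cGDHAW} at $x=a^\mI$, $x'=a^\mIp$, $y=b^\mI$; since $Z(a^\mI,a^\mIp)=1$, this produces a witness $y' \in \Delta^\mIp$ with $Z(b^\mI,y') \rhd p$ and $R^\mIp(a^\mIp,y') \rhd p$. Taking the nominal $C=\{b\}$ and using \eqref{eq: cGDHAW 1} with the identity $(1 \fequiv q)=q$ yields $\{b\}^\mIp(y') \rhd p$, so that (unless $\rhd$ is $\geq$ and $p=0$, in which case $\varphi$ holds trivially) the crisp value $\{b\}^\mIp(y')$ must equal $1$, forcing $y'=b^\mIp$ and hence $R^\mIp(a^\mIp,b^\mIp) \rhd p$. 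The case $\varphi = (R(a,b) \lhd p)$ with $\lhd \in \{\leq,<\}$ is handled by contradiction: assuming $\mIp \not\models \varphi$ gives $R^\mIp(a^\mIp,b^\mIp) \rhd p$ with $\rhd$ the comparator complementary to $\lhd$, and rerunning the previous argument in the reverse direction (equivalently, through the inverse crisp bisimulation $Z^-$, which is again a crisp $\Phi$-bisimulation by Proposition~\ref{prop: HFHSJ 2}) forces $\mI \models (R(a,b) \rhd p)$, contradicting $\mI \models \varphi$. The only points deserving care are the threshold bookkeeping at the boundary $p=0$ and the crispness of $\{b\}^\mIp$, exactly as in the proof of Theorem~\ref{theorem: IFDMS}; no new phenomenon arises from involutive negation, since it is nominals rather than negation that drive the role-assertion arguments.
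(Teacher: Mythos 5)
Your proposal is correct and matches the paper's own treatment exactly: the paper proves Theorem~\ref{theorem: IFDMS2} precisely by transferring the proof of Theorem~\ref{theorem: IFDMS} with the same substitutions you list (witnessed w.r.t.\ \mLPn, crisp $\Phi$-bisimulation, $\simPn$ for $\simP$, Theorem~\ref{theorem: UFNSJ2} for Theorem~\ref{theorem: UFNSJ}, and Lemma~\ref{lemma: cGDHAW} for Lemma~\ref{lemma: GDHAW}), including the same case analysis with the nominal $\{b\}$ in the role-assertion cases and the contradiction argument for $\lhd \in \{\leq,<\}$.
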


\comment{
The proof of this theorem is obtained from the proof of Theorem~\ref{theorem: IFDMS} by replacing:
\begin{itemize}
\item ``witnessed w.r.t.\ \mLP'' with ``witnessed w.r.t.\ \mLPn'', 
\item ``fuzzy $\Phi$-bisimulation'' with ``crisp $\Phi$-bisimulation'',
\item all occurrences of $\simP$ with $\simPn$,
\item ``Theorem~\ref{theorem: UFNSJ}'' with ``Theorem~\ref{theorem: UFNSJ2}'',
\item ``Lemma~\ref{lemma: GDHAW}'' with ``Lemma~\ref{lemma: cGDHAW}''.
\end{itemize} 
}

\subsection{The Hennessy-Milner Property}

The notion of being {\em modally saturated w.r.t.~\DLPp} defined below is less restrictive than the notion of being ``modally saturated w.r.t.~\mLPp'' specified in Definition~\ref{def: modally saturated}. 

\begin{definition}\label{def: modally saturated 2}
	A fuzzy interpretation $\mI$ is said to be {\em modally saturated} w.r.t.~\DLPp (and the G\"odel semantics) if the following conditions hold:
	\begin{itemize}
		\item for every $p \in (0,1]$, every $x \in \Delta^\mI$, every basic role $R$ w.r.t.~$\Phi$ and every infinite set $\Gamma$ of concepts in~\DLPp, if for every finite subset $\Lambda$ of $\Gamma$ there exists $y \in \Delta^\mI$ such that $R^\mI(x,y) \fand C^\mI(y) \geq p$ for all $C \in \Lambda$, then there exists $y \in \Delta^\mI$ such that $R^\mI(x,y) \geq p$ and $C^\mI(y) > 0$ for all $C \in \Gamma$; 
		
		\item if $Q_n \in \Phi$, then for every $p \in (0,1]$, every $x \in \Delta^\mI$, every basic role $R$ w.r.t.~$\Phi$ and every infinite set $\Gamma$ of concepts in~\DLPp, if for every finite subset $\Lambda$ of $\Gamma$ there exist $n$ pairwise distinct $y_1,\ldots,y_n \in \Delta^\mI$ such that $R^\mI(x,y_i) \fand C^\mI(y_i) \geq p$ for all $1 \leq i \leq n$ and $C \in \Lambda$, then there exist $n$ pairwise distinct $y_1,\ldots,y_n \in \Delta^\mI$ such that $R^\mI(x,y_i) \geq p$ and $C^\mI(y_i) > 0$ for all $1 \leq i \leq n$ and $C \in \Gamma$; 
		
		\item if $U \in \Phi$, then for every infinite set $\Gamma$ of concepts in~\DLPp, if for every finite subset $\Lambda$ of $\Gamma$ there exists $y \in \Delta^\mI$ such that $C^\mI(y) = 1$ for all $C \in \Lambda$, then there exists $y \in \Delta^\mI$ such that $C^\mI(y) > 0$ for all $C \in \Gamma$.
		\myend
	\end{itemize}
\end{definition}

Observe that a condition like ``$R^\mI(x,y) \geq p$ and $C^\mI(y) > 0$'' is weaker than $R^\mI(x,y) \fand C^\mI(y) \geq p$. Also notice that the condition $C^\mI(y) = 1$ for the case when $U \in \Phi$ is stronger than $C^\mI(y) > 0$. These loosenings make the class of modally saturated interpretations larger. 
Like the case of \mLPp, we also have the following claims, which can be easily proved:
\begin{itemize}
\item every finite fuzzy interpretation is modally saturated w.r.t.\ \DLPp for any $\Phi$,
\item if $U \notin \Phi$, then every image-finite fuzzy interpretation w.r.t.\ $\Phi$ is modally saturated w.r.t.~\DLPp. 
\end{itemize}

\newcommand{\TextTheoremfGHMt}{
	Let $\mI$ and $\mI'$ be fuzzy interpretations that are witnessed and modally saturated w.r.t.~\DLPp. Let $Z : \Delta^\mI \times \Delta^\mIp \to \{0,1\}$ be specified by: $Z(x,x') = 1$ if $C^\mI(x) = C^\mIp(x)$ for all concepts $C$ of \DLPp, and \mbox{$Z(x,x') = 0$} otherwise. 
	Then, $Z$ is the greatest crisp $\Phi$-bisimulation between $\mI$ and~$\mI'$.
}

\begin{theorem} \label{theorem: fG H-M 2}
\TextTheoremfGHMt
\end{theorem}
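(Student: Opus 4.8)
The plan is to prove the two halves separately: that the canonical $Z$ dominates every crisp $\Phi$-bisimulation, and that $Z$ is itself a crisp $\Phi$-bisimulation; together these give that it is the greatest one. Dominance is the easy half. Suppose $Z'$ is any crisp $\Phi$-bisimulation between $\mI$ and $\mIp$ with $Z'(x,x')=1$. By Lemma~\ref{lemma: cGDHAW2}, $Z'(x,x') \leq (C^\mI(x) \fequiv C^\mIp(x'))$ for every concept $C$ of \DLPp, so $C^\mI(x) \fequiv C^\mIp(x') = 1$ and hence $C^\mI(x) = C^\mIp(x')$ for all such $C$; by the definition of $Z$ this gives $Z(x,x')=1$, i.e.\ $Z' \leq Z$. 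It therefore remains to check every clause of Definition~\ref{def: DHGAK} for $Z$. Throughout I would fix $x,x'$ with $Z(x,x')=1$ (all clauses are vacuous otherwise), so that $x$ and $x'$ agree on the value of every concept of \DLPp. The local clauses are then immediate: \eqref{eq: FB 2} holds because concept names are in \DLPp; \eqref{eq: FB 5} (if $O \in \Phi$) holds because $\{a\}$ is in \DLPp; and \eqref{eq: FB 10} (if $\Self \in \Phi$) holds because $\E r.\Self$ is in \DLPp.

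The core is the forth clause \eqref{eq: FB 3}, with the back clause \eqref{eq: FB 4} and the universal-role clauses \eqref{eq: FB 8}, \eqref{eq: FB 9} being symmetric (the latter using the third item of Definition~\ref{def: modally saturated 2}). Given $y$ with $R^\mI(x,y)=p$, if $p=0$ any $y'$ works, so assume $p>0$. The key device is that \DLPp pins down exact values crisply: for every concept $D$ of \DLPp and every threshold $q$, the concepts $\triangle(q \to D)$ and $\triangle(D \to q)$ lie in \DLPp and express ``$D \geq q$'' and ``$D \leq q$''. Let $\Gamma$ collect all such $\triangle$-wrapped threshold concepts (with $q$ rational, to keep $\Gamma$ countable) whose value at $y$ is $1$. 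For any finite $\Lambda \subseteq \Gamma$ the concept $\E R.\bigsqcap\Lambda$ lies in \DLPp and takes value $\geq R^\mI(x,y) \fand 1 = p$ at $x$, witnessed by $y$; since $Z(x,x')=1$ it takes the same value at $x'$, and witnessedness of $\mIp$ yields $z'$ with $R^\mIp(x',z') \fand E^\mIp(z') \geq p$ for all $E \in \Lambda$. Thus the hypothesis of the first item of Definition~\ref{def: modally saturated 2} holds for $\mIp,x',R,p,\Gamma$, producing $y'$ with $R^\mIp(x',y') \geq p$ and $E^\mIp(y') > 0$ (hence $=1$, crisply) for every $E \in \Gamma$. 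A squeeze then forces $D^\mIp(y') = D^\mI(y)$ for every concept $D$ of \DLPp — the concepts $\triangle(q \to D) \in \Gamma$ push $D^\mIp(y')$ up to $D^\mI(y)$ and the concepts $\triangle(D \to q) \in \Gamma$ push it down — so $Z(y,y')=1$; together with $R^\mIp(x',y') \geq p = R^\mI(x,y)$ this establishes \eqref{eq: FB 3}.

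The hard part will be the number-restriction clauses \eqref{eq: FB 6}, \eqref{eq: FB 7} (for $Q_n$) and \eqref{eq: FB 6n}, \eqref{eq: FB 7n} (for $N_n$), which must use the second item of Definition~\ref{def: modally saturated 2}. Given pairwise distinct $R$-successors $y_1,\dots,y_n$ of $x$ with $p = \bigotimes\{R^\mI(x,y_j) \mid 1 \leq j \leq n\} > 0$, the goal is to produce $n$ pairwise distinct $\mIp$-successors $y'_1,\dots,y'_n$ reachable to degree $\geq p$, each matched by $Z$ to some $y_j$ — equivalently, each of a \DLPp-type occurring among $y_1,\dots,y_n$. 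I would group the $y_j$ by their (finitely many) \DLPp-types and, for each type $\tau$ of multiplicity $k$, reproduce $k$ distinct $\mIp$-successors of type $\tau$ reachable $\geq p$ by applying the $Q_n$-item of Definition~\ref{def: modally saturated 2} to the characteristic set of $\tau$; distinct types yield distinct elements, and the non-injective form of the matching (``for every $i$ there is $j$'') lets several $y'_i$ reuse the same $y_j$. The delicate points are (i) that the bound in the clause is a fixed $n$ while the per-type multiplicities are only $\leq n$, and (ii) that no $\mIp$-successor of a ``foreign'' type (not among the $y_j$) may be counted.

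Both of these are handled by exploiting that \DLPp expresses Gödel negation through $C \to 0$, so that ``having an $R$-successor of a type different from each $y_j$'' is itself a \DLPp concept; the fact that $Z(x,x')=1$ then excludes foreign types, and invoking witnessedness together with the compactness built into modal saturatedness over the resulting (possibly infinite) families of distinguishing concepts yields exactly $n$ suitably typed successors on the $\mIp$-side. The $N_n$ clauses are the analogue of the $Q_n$ case with the qualifier $\top$, and are handled identically. Once all clauses are verified, $Z$ is a crisp $\Phi$-bisimulation, and by the dominance argument of the first paragraph it is the greatest one, as claimed. The whole argument runs parallel to that of Theorem~\ref{theorem: fG H-M}, the only genuinely new ingredients being the crispness of $Z$ (which lets ``$>0$'' in Definition~\ref{def: modally saturated 2} be upgraded to ``$=1$'') and the use of $\triangle$ and $C \to 0$ to realize exact-value and complement tests inside \DLPp.
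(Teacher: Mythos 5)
Your dominance argument and your treatment of the local clauses and of \eqref{eq: FB 3}, \eqref{eq: FB 4}, \eqref{eq: FB 8}, \eqref{eq: FB 9} are correct and essentially the paper's argument: the paper runs the same type-realization idea contrapositively (assuming no matching $y'$ exists and contradicting $Z(x,x')=1$ via $\E R.\bigsqcap\Lambda$), while you invoke the first and third items of Definition~\ref{def: modally saturated 2} directly on the $\triangle$-wrapped threshold concepts; these are equivalent. The problem is the $Q_n$ clauses \eqref{eq: FB 6} and \eqref{eq: FB 7}, where your strategy has a genuine gap and, as sketched, cannot be completed.

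First, the $Q_n$ item of Definition~\ref{def: modally saturated 2} demands, in its hypothesis, $n$ pairwise distinct witnesses for every finite $\Lambda\subseteq\Gamma$. If you take $\Gamma$ to be the characteristic set $\Gamma_\tau$ of a single type $\tau$ of multiplicity $k<n$, you cannot discharge that hypothesis: on the $\mI$ side only the $k$ successors of type $\tau$ are guaranteed to satisfy $\Lambda$, so the transfer concept $\geq\!n\,R.\bigsqcap\Lambda$ need not have value $\geq p$ at $x$, and nothing forces $n$ witnesses to exist on the $\mIp$ side. Second, and more seriously, the intermediate statement you aim at --- ``for each type $\tau$ of multiplicity $k$, reproduce $k$ distinct $\mIp$-successors of type $\tau$ reachable to degree $\geq p$'' --- is false. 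Take $\Phi=\{Q_3\}$, $\CN=\{A\}$, $\IN=\emptyset$; let $x$ have exactly three $r$-successors, two with $A=1$ (type $\tau_1$) and one with $A=0$ (type $\tau_2$), all to degree $1$, and let $x'$ have exactly three $r$-successors, one of type $\tau_1$ and two of type $\tau_2$, all to degree $1$. Both interpretations are finite, hence witnessed and modally saturated w.r.t.\ \DLPp. Every concept $\geq\!3\,r.C$ evaluates to $C^{\tau_1}\fand C^{\tau_2}$ at both $x$ and $x'$, and every $\E r.C$ to $C^{\tau_1}\fOr C^{\tau_2}$, so $x \equivPdp x'$ and $Z(x,x')=1$; yet $\mIp$ has only one successor of type $\tau_1$ while $\tau_1$ has multiplicity two among the $y_j$. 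Condition \eqref{eq: FB 6} still holds here --- precisely because the matching ``for every $i$ there is $j$'' may be non-injective --- but your per-type construction is attempting to prove a stronger, multiplicity-preserving claim that fails.

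Your proposed repair does not close this gap. The assertion that ``having an $R$-successor of a type different from each $y_j$'' is a \DLPp concept is not correct (a type is an infinite set of concepts, and which concept witnesses the difference varies with the candidate successor), and the claim that $Z(x,x')=1$ ``excludes foreign types'' is also wrong: $y_1,\ldots,y_n$ need not exhaust the $R$-successors of $x$, so $x'$ may perfectly well have successors of foreign types. What is needed is not to rule such successors out, but to ensure the $n$ elements produced by saturation avoid them. The paper achieves exactly this with a different device: for each ``bad'' $y'$ (one with $R^\mIp(x',y')\geq p$ matched by $Z$ to no $y_j$) it builds a single concept $C_{y'}=D_{y',1}\mor\ldots\mor D_{y',n}$ that equals $1$ at \emph{every} $y_j$ and $0$ at $y'$. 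Because all $n$ of the $y_j$ satisfy every member of $\Gamma=\{C_{y'}\mid y'\ \textrm{bad}\}$, the hypothesis of the $Q_n$ item is verifiable via $\geq\!n\,R.\bigsqcap\Lambda$, and its conclusion yields $n$ pairwise distinct elements with $R^\mIp(x',y'_i)\geq p$ avoiding every bad $y'$, i.e.\ lying in $Y'$. Some such global, disjunction-based argument is unavoidable; the per-type route is a dead end. (For the $N_n$ clauses your ``identical'' handling is harmless only because the qualifier is trivial there; the paper simply transfers the value of $\geq\!n\,R$ and uses witnessedness, with no appeal to saturation.)
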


This theorem is a counterpart of Theorem~\ref{theorem: fG H-M}. It can be proved analogously. \modifiedA{To} make the text self-contained and ease the checking for the reader, \modifiedA{we present its proof in the Appendix.} 

Given fuzzy interpretations $\mI$, $\mIp$ and $x \in \Delta^\mI$, $x' \in \Delta^\mIp$, we write $x \equivPn x'$ to denote that $C^\mI(x) = C^\mIp(x')$ for every concept $C$ of~\mLPn. Similarly, we write $x \equivPdp x'$ to denote that $C^\mI(x) = C^\mIp(x')$ for every concept $C$ of~\DLPp. 

\begin{corollary} \label{cor: fG H-M 1 c}
	Let $\mI$, $\mI'$ be fuzzy interpretations and let $x \in \Delta^\mI$, $x' \in \Delta^\mIp$.
	\begin{enumerate}
		\item\label{cor: fG H-M 1-1 c} If $\mI$ and $\mI'$ are witnessed and modally saturated w.r.t.~\DLPp, then
		\[ x \simPn x'\ \ \textrm{iff}\ \ x \equivPdp x'. \]
		\item\label{cor: fG H-M 1-2 c} If $\mI$ and $\mI'$ are image-finite w.r.t.~$\Phi$ and $U \notin \Phi$, then
		\[ x \simPn x'\ \ \textrm{iff}\ \ x \equivPdp x'. \] 
		\item\label{cor: fG H-M 1-3 c} If $\mI$ and $\mI'$ are witnessed w.r.t.~\mLPn and modally saturated w.r.t.~\DLPp, then
		\[ x \equivPn x'\ \ \textrm{iff}\ \ x \simPn x'\ \ \textrm{iff}\ \ x \equivPdp x'. \]
	\end{enumerate}	
\end{corollary}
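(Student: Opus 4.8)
The plan is to mirror the proof of Corollary~\ref{cor: fG H-M 1}, replacing the fuzzy ingredients by their crisp counterparts: Theorem~\ref{theorem: fG H-M 2} in place of Theorem~\ref{theorem: fG H-M}, and Lemmas~\ref{lemma: cGDHAW} and~\ref{lemma: cGDHAW2} in place of Lemmas~\ref{lemma: GDHAW} and~\ref{lemma: GDHAW2}. In each assertion the ``$\simPn \Rightarrow \equiv$'' direction comes from an invariance lemma, and the ``$\equiv \Rightarrow \simPn$'' direction from the Hennessy-Milner theorem.

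For assertion~\ref{cor: fG H-M 1-1 c}, I would assume $\mI$ and $\mI'$ are witnessed and modally saturated w.r.t.~\DLPp. For the forward implication, suppose $x \simPn x'$, witnessed by a crisp $\Phi$-bisimulation $Z$ with $Z(x,x')=1$; Lemma~\ref{lemma: cGDHAW2} gives $1 = Z(x,x') \leq (C^\mI(x) \fequiv C^\mIp(x'))$ for every concept $C$ of \DLPp, and since $(p \fequiv q)=1$ forces $p=q$, we obtain $C^\mI(x)=C^\mIp(x')$ for all such $C$, i.e., $x \equivPdp x'$. For the converse, suppose $x \equivPdp x'$; Theorem~\ref{theorem: fG H-M 2} states that the function $Z$ defined by $Z(x,x')=1$ iff $x \equivPdp x'$ is the greatest crisp $\Phi$-bisimulation between $\mI$ and $\mI'$, so $Z(x,x')=1$ witnesses $x \simPn x'$.

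For assertion~\ref{cor: fG H-M 1-3 c}, I would assume $\mI$ and $\mI'$ are witnessed w.r.t.~\mLPn and modally saturated w.r.t.~\DLPp, and establish the equivalences as the cycle $x \equivPn x' \Rightarrow x \equivPdp x' \Rightarrow x \simPn x' \Rightarrow x \equivPn x'$. The first implication is immediate because \DLPp is a sublanguage of \mLPn. The second reuses the converse direction of assertion~1: being witnessed w.r.t.~\mLPn entails being witnessed w.r.t.~\DLPp (fewer extrema to witness), so together with modal saturatedness w.r.t.~\DLPp the hypotheses of Theorem~\ref{theorem: fG H-M 2} hold. The third uses Lemma~\ref{lemma: cGDHAW}, which requires only witnessedness w.r.t.~\mLPn and yields $C^\mI(x)=C^\mIp(x')$ for every concept $C$ of \mLPn from $Z(x,x')=1$. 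Closing the cycle gives all three equivalences at once.

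Assertion~\ref{cor: fG H-M 1-2 c} then reduces to assertion~1: when $U \notin \Phi$, an image-finite interpretation w.r.t.~$\Phi$ is modally saturated w.r.t.~\DLPp (one of the claims stated before Theorem~\ref{theorem: fG H-M 2}) and also witnessed w.r.t.~\DLPp, since every extremum occurring in a \DLPp-concept is a finitary supremum over the successors of a point; hence the hypotheses of assertion~1 are met. The argument is essentially bookkeeping, so I do not expect a genuine obstacle; the only point needing care is matching hypotheses across sublanguages---verifying that ``witnessed w.r.t.~\mLPn'' transfers to ``witnessed w.r.t.~\DLPp'' (for assertion~3) and that image-finiteness with $U \notin \Phi$ supplies both witnessedness and modal saturatedness w.r.t.~\DLPp (for assertion~2). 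Once these inclusions are checked, the invariance lemmas and Theorem~\ref{theorem: fG H-M 2} slot in directly.
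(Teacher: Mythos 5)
Your proposal is correct and takes essentially the same approach as the paper: assertion~\ref{cor: fG H-M 1-1 c} from Lemma~\ref{lemma: cGDHAW2} (invariance) together with Theorem~\ref{theorem: fG H-M 2} (Hennessy--Milner), assertion~\ref{cor: fG H-M 1-3 c} from Lemma~\ref{lemma: cGDHAW} together with the same theorem, and assertion~\ref{cor: fG H-M 1-2 c} by reduction to assertion~\ref{cor: fG H-M 1-1 c}. The bookkeeping you make explicit (witnessedness w.r.t.\ \mLPn transferring to \DLPp, and image-finiteness with $U \notin \Phi$ yielding both witnessedness and modal saturatedness w.r.t.\ \DLPp) is exactly what the paper leaves implicit in its one-line justification.
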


The assertion~\ref{cor: fG H-M 1-1 c} (resp.~\ref{cor: fG H-M 1-3 c}) directly follows from Theorem~\ref{theorem: fG H-M 2} and Lemma~\ref{lemma: cGDHAW2} (resp.~\ref{lemma: cGDHAW}). 
The assertion~\ref{cor: fG H-M 1-2 c} directly follows from the assertion~\ref{cor: fG H-M 1-1 c}. 
The following corollary directly follows from Theorem~\ref{theorem: fG H-M 2} and Lemma~\ref{lemma: cGDHAW}. 

\begin{corollary} \label{cor: fG H-M 4 c}
Suppose $\IN \neq \emptyset$ and let $\mI$ and $\mI'$ be fuzzy interpretations that are witnessed w.r.t.~\mLPn and modally saturated w.r.t.~\DLPp. Then, $\mI$ and $\mIp$ are strongly $\Phi$-bisimilar iff $a^\mI \equivPdp a^\mIp$ for all $a \in \IN$.
\end{corollary}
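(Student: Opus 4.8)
The plan is to prove the biconditional by establishing its two directions separately, importing Lemma~\ref{lemma: cGDHAW} for the ``only if'' part and Theorem~\ref{theorem: fG H-M 2} for the ``if'' part. Both directions hinge on the fact that \DLPp is a sublanguage of \mLPn, so that every concept of \DLPp is in particular a concept of \mLPn; this is what lets the invariance statement (formulated for \mLPn) feed into the relation $\equivPdp$ (formulated for \DLPp).

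First I would treat the forward direction. Assume $\mI$ and $\mIp$ are strongly $\Phi$-bisimilar, i.e.\ there is a crisp $\Phi$-bisimulation $Z$ between $\mI$ and $\mIp$ with $Z(a^\mI,a^\mIp)=1$ for every $a \in \IN$. Since $\mI$ and $\mIp$ are witnessed w.r.t.\ \mLPn, the hypotheses of Lemma~\ref{lemma: cGDHAW} are met, and for each $a \in \IN$ and each concept $C$ of \mLPn it yields $Z(a^\mI,a^\mIp) \leq (C^\mI(a^\mI) \fequiv C^\mIp(a^\mIp))$. Substituting $Z(a^\mI,a^\mIp)=1$ forces $(C^\mI(a^\mI) \fequiv C^\mIp(a^\mIp)) = 1$, and by the defining property of the G\"odel equivalence operator (namely $(p \fequiv q)=1$ exactly when $p=q$) this gives $C^\mI(a^\mI) = C^\mIp(a^\mIp)$. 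As this holds for every concept of \mLPn, it holds for every concept of the sublanguage \DLPp, which is precisely $a^\mI \equivPdp a^\mIp$.

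For the backward direction I would invoke Theorem~\ref{theorem: fG H-M 2}. Assume $a^\mI \equivPdp a^\mIp$ for all $a \in \IN$. Since $\mI$ and $\mIp$ are witnessed and modally saturated w.r.t.\ \DLPp, that theorem guarantees that the function $Z$ defined by $Z(x,x')=1$ iff $x \equivPdp x'$ is a (in fact the greatest) crisp $\Phi$-bisimulation between $\mI$ and $\mIp$. By the hypothesis, $a^\mI \equivPdp a^\mIp$ holds for every $a \in \IN$, so $Z(a^\mI,a^\mIp)=1$ for all such $a$; hence this $Z$ witnesses that $\mI$ and $\mIp$ are strongly $\Phi$-bisimilar.

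There is no substantial obstacle here, since the corollary is essentially the conjunction of the invariance result and the Hennessy--Milner property specialized to the interpretations of named individuals. The only points requiring care --- and the closest thing to a difficulty --- are, first, matching the correct hypotheses to each imported result (witnessedness w.r.t.\ \mLPn for Lemma~\ref{lemma: cGDHAW}, and witnessedness together with modal saturatedness w.r.t.\ \DLPp for Theorem~\ref{theorem: fG H-M 2}), and second, tracking the direction of the sublanguage inclusion so that the \mLPn-conclusion of Lemma~\ref{lemma: cGDHAW} is correctly restricted to \DLPp-concepts to match the definition of $\equivPdp$.
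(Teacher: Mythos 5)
Your proof is correct and matches the paper's argument exactly: the paper derives this corollary directly from Lemma~\ref{lemma: cGDHAW} (your forward direction, using $Z(a^\mI,a^\mIp)=1$ and the fact that $(p \fequiv q)=1$ iff $p=q$) and Theorem~\ref{theorem: fG H-M 2} (your backward direction, taking the greatest crisp $\Phi$-bisimulation defined via $\equivPdp$ as the witness). The hypothesis-matching you flag is handled the same way implicitly in the paper, since witnessedness w.r.t.\ \mLPn entails witnessedness w.r.t.\ the sublanguage \DLPp.
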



\section{Separating the Expressive Powers of Fuzzy DLs}
\label{section: exp-powers}

We say that a concept $C$ {\em cannot be expressed} in a fuzzy DL $L$ if there does not exist any concept $D$ of $L$ equivalent to $C$. A fuzzy TBox $\mT$ (resp.\ fuzzy ABox $\mA$) {\em cannot be expressed} in a fuzzy DL $L$ if there does not exist any fuzzy TBox $\mT'$ (resp.\ fuzzy ABox $\mA'$) in $L$ such that, for every fuzzy interpretation~$\mI$, $\mI \models \mT$ iff $\mI \models \mT'$ (resp.\ $\mI \models \mA$ iff $\mI \models \mA'$).
If $L$ is a sublogic of a fuzzy DL $L'$ and there is a concept (resp.\ a fuzzy TBox, a fuzzy ABox) in $L'$ that cannot be expressed in $L$, then we say that $L'$ is {\em strictly more expressive} than $L$ w.r.t.\ concepts (resp.\ TBoxes, ABoxes). 

We denote $\PhiF = \{I,O,U,\Self,Q_n,N_n \mid n \in \NN \setminus \{0\}\}$. 
In~\cite{BSDL-INS}, Divroodi and Nguyen used crisp bisimulations to separate the expressive powers of traditional DLs w.r.t.\ concepts, TBoxes and ABoxes. As traditional DLs are a special kind of fuzzy DLs and crisp bisimulations are a special kind of fuzzy bisimulations, the proofs of~\cite{BSDL-INS} can still be applied to separate the expressive powers of fuzzy DLs \mLP for $\Phi \subseteq \PhiF$. The work~\cite{BSDL-INS} considers the feature $Q$ (qualified number restrictions) that includes $Q_n$ for all $n \in \NN$, but it does neither consider features $Q_n$ separately, nor unqualified number restrictions $N_n$. To deal with the features $Q_n$ and $N_n$, one can use a similar technique, which relies on crisp bisimulations and crisp interpretations. The task is straightforward and omitted. 

In this section, as a special point that relies on fuzzy bisimulations, we prove that involutive negation and the Baaz projection operator cannot be expressed in fuzzy DLs by using the other constructors. 

\comment{
\modifiedA{As related work, it is observed in~\cite{DBLP:conf/kr/BorgwardtDP14} that {\em ``the residual negation is often used in fuzzy logics, but under G\"odel semantics it is much less expressive than the involutive negation''}. Both the works~\cite{DBLP:conf/kr/BorgwardtDP14,DBLP:journals/ai/BorgwardtDP15} concern decidability of fuzzy DLs, but they also contain certain assertions related to expressivity. For example, in~\cite{DBLP:journals/ai/BorgwardtDP15} Borgwardt et al.\ wrote: {\em ``Intuitively, a fuzzy DL is undecidable whenever it can express upper bounds for the membership degrees of concepts, e.g.\ through the involutive negation or the implication constructor. On the other hand, our decidability results exploit the fact that some fuzzy DLs cannot express such upper bounds except for 0''.}}
} 

\begin{proposition}\label{prop: JDAJA}
The concepts \mbox{$\ineg\!A$} and $\triangle A$ cannot be expressed in \mLP with $\Phi = \PhiF$.
\end{proposition}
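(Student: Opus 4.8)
The plan is to use the graded invariance bound of Lemma~\ref{lemma: GDHAW}, namely \eqref{eq: GDHAW 1}, as an obstruction: if either $\ineg\!A$ or $\triangle A$ were equivalent to some concept $D$ of \mLP, then $D$ would satisfy $Z(x,x') \leq (D^\mI(x) \fequiv D^\mIp(x'))$ for every fuzzy $\Phi$-bisimulation $Z$ between witnessed interpretations, and I will produce a witness where this bound fails once its right-hand side is computed from the values of $\ineg\!A$ (resp.\ $\triangle A$). Concretely, I would exhibit two witnessed interpretations $\mI$, $\mIp$, points $x \in \Delta^\mI$, $x' \in \Delta^\mIp$, and a fuzzy $\PhiF$-bisimulation $Z$ with $Z(x,x') > 0$ such that both $((\ineg\!A)^\mI(x) \fequiv (\ineg\!A)^\mIp(x'))$ and $((\triangle A)^\mI(x) \fequiv (\triangle A)^\mIp(x'))$ are strictly smaller than $Z(x,x')$.

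For the construction I would take the signature $\CN = \{A\}$, $\RN = \emptyset$, $\IN = \emptyset$, and two single-element interpretations $\Delta^\mI = \{x\}$, $\Delta^\mIp = \{x'\}$ with $A^\mI(x) = 1$ and $A^\mIp(x') = 0.5$, then set $Z(x,x') = 0.5$. Since both interpretations are finite, they are witnessed w.r.t.\ \mLP, so Lemma~\ref{lemma: GDHAW} applies. The verification that $Z$ is a fuzzy $\PhiF$-bisimulation is the routine core: condition \eqref{eq: FB 2} holds with equality because $(A^\mI(x) \fequiv A^\mIp(x')) = (1 \fequiv 0.5) = 0.5 = Z(x,x')$; the role conditions \eqref{eq: FB 3}--\eqref{eq: FB 4}, \eqref{eq: FB 6}--\eqref{eq: FB 7n} and \eqref{eq: FB 10} are vacuous because there are no role names (hence no basic roles); condition \eqref{eq: FB 5} is vacuous because $\IN = \emptyset$; and the universal-role conditions \eqref{eq: FB 8}--\eqref{eq: FB 9} hold trivially because each domain is a singleton, so the only available choice $y = x$, $y' = x'$ yields $Z(x,x') \leq Z(x,x')$.

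The contradiction then follows by direct computation. Using $(\ineg\!A)^\mI(x) = 0$, $(\ineg\!A)^\mIp(x') = 0.5$ and $(\triangle A)^\mI(x) = 1$, $(\triangle A)^\mIp(x') = 0$, I obtain $((\ineg\!A)^\mI(x) \fequiv (\ineg\!A)^\mIp(x')) = (0 \fequiv 0.5) = 0$ and $((\triangle A)^\mI(x) \fequiv (\triangle A)^\mIp(x')) = (1 \fequiv 0) = 0$, both strictly below $Z(x,x') = 0.5$. Hence if $\ineg\!A \equiv D$ (resp.\ $\triangle A \equiv D$) for some concept $D$ of \mLP, then $D^\mI$ and $D^\mIp$ agree with the respective interpretations of $\ineg\!A$ (resp.\ $\triangle A$), and \eqref{eq: GDHAW 1} would give $0.5 = Z(x,x') \leq (D^\mI(x) \fequiv D^\mIp(x')) = 0$, a contradiction. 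Therefore neither $\ineg\!A$ nor $\triangle A$ is expressible in \mLP.

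The only place demanding care is the bisimulation check, since $\Phi = \PhiF$ activates every feature ($I$, $O$, $U$, $\Self$, and all $Q_n$, $N_n$) simultaneously; the device that makes this effortless is the empty role/individual signature together with singleton domains, which collapses all feature-specific conditions to vacuous or trivial statements and leaves only \eqref{eq: FB 2}, arranged to hold with equality. I expect no genuine obstacle beyond confirming these degeneracies and the (immediate) witnessedness from finiteness, and I note that the same single pair of interpretations refutes expressibility of both $\ineg\!A$ and $\triangle A$ at once.
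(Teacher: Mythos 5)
Your proof is correct and takes essentially the same route as the paper's: two singleton witnessed interpretations that differ only in the value of $A$ ($1$ vs.\ $0.5$), a fuzzy $\PhiF$-bisimulation of degree $0.5$ between them, and a contradiction with the invariance bound \eqref{eq: GDHAW 1} of Lemma~\ref{lemma: GDHAW}. One small repair is needed: you shrink the signature to $\CN=\{A\}$, $\RN=\emptyset$, $\IN=\emptyset$, whereas the proposition concerns whatever ambient signature contains $A$, so (as the paper does) you should instead keep $\CN$, $\RN$, $\IN$ arbitrary, interpret every other concept name and every role name as constantly $0$, and map every individual name to the unique domain element; the verification is unchanged except that the role, $\Self$ and nominal conditions then hold trivially (all role degrees being $0$, and both sides of \eqref{eq: FB 5} being true) rather than vacuously.
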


\begin{proof}
	Let $\mI$ be the fuzzy interpretation such that $\Delta^\mI = \{v\}$, $A^\mI(v) = 0.5$, $B^\mI(v) = 0$ for all $B \in \CN\setminus \{A\}$, and $r^\mI(v,v) = 0$ for all $r \in \RN$. Let $\mIp$ be the fuzzy interpretation defined similarly, except that $A^\mIp(v) = 1$. 
	Let $Z : \Delta^\mI \times \Delta^\mIp \to [0,1]$ be the function specified by $Z(v,v) = 0.5$. 
	Clearly, $Z$ is the greatest fuzzy $\Phi$-bisimulation between $\mI$ and~$\mIp$. 
	
	We have $(\ineg\!A)^\mI(v) = 0.5$ and $(\ineg\!A)^\mIp(v) = 0$. 
	If $\ineg\!A$ can be expressed by a concept $C$ of \mLP, then $C^\mI(v) = 0.5$ and, by the assertion~\eqref{eq: GDHAW 1} of Lemma~\ref{lemma: GDHAW}, $C^\mIp(v) \geq 0.5$, which contradicts $(\ineg\!A)^\mIp(v) = 0$. 
	
	We have $(\triangle A)^\mI(v) = 0$ and $(\triangle A)^\mIp(v) = 1$. 
	If $\triangle A$ can be expressed by a concept $C$ of \mLP, then $C^\mI(v) = 0$ and, by the assertion~\eqref{eq: GDHAW 1} of Lemma~\ref{lemma: GDHAW}, $C^\mIp(v) = 0$, which contradicts $(\triangle A)^\mIp(v) = 1$.  
	\myend
\end{proof}

\begin{proposition}\label{prop: JDGAS 2}
The fuzzy ABoxes \mbox{$\{(\E r.\!\ineg\!A)(a) \geq 0.1\}$} and \mbox{$\{(\E r.\triangle A)(a) \geq 0.1\}$} cannot be expressed in \mLP when $\PhiF\setminus\Phi$ is $\{U\}$, $\{I\}$ or $\{O\}$.
The fuzzy TBoxes \mbox{$\{(B \sqsubseteq \E r.\!\ineg\!A) \geq 0.1\}$} and \mbox{$\{(B \sqsubseteq \E r.\triangle A) \geq 0.1\}$} cannot be expressed in \mLP when $\PhiF\setminus\Phi$ is $\{I\}$ or $\{O\}$.
\end{proposition}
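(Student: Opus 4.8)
The plan is to prove every clause by the bisimulation method: for each relevant $\Phi$ I construct two \emph{finite} fuzzy interpretations $\mI$ and $\mIp$ (finiteness makes them witnessed w.r.t.\ \mLP and modally saturated w.r.t.\ \mLPp for free) that are $\Phi$-bisimilar yet disagree on the target ABox (resp.\ TBox). Were the target expressible by some \mLP ABox $\mA'$ (resp.\ TBox $\mT'$), then $\mA'$ (resp.\ $\mT'$) would distinguish $\mI$ from $\mIp$ too, contradicting invariance. For the TBox clauses ($\PhiF\setminus\Phi\in\{\{I\},\{O\}\}$) we have $U\in\Phi$, so Theorem~\ref{theorem: UDKMS} gives invariance of \emph{all} \mLP TBoxes; for the ABox clauses with $O\in\Phi$ (the cases $\{U\}$ and $\{I\}$) Theorem~\ref{theorem: IFDMS} gives invariance of all \mLP ABoxes. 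The remaining ABox clause $\PhiF\setminus\Phi=\{O\}$ is handled separately because there $O\notin\Phi$.

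The basic gadget is a root $u$ with $A^\mI(u)=0$ and a single $r$-edge of weight $0.5$ to a successor $v$ with $A^\mI(v)=0.5$ but $A^\mIp(v)=1$ (the two values are swapped for the $\triangle$-statements). Setting $a^\mI=a^\mIp=u$ gives $(\E r.\ineg A)^\mI(u)=0.5\fand 0.5=0.5\geq 0.1$ while $(\E r.\ineg A)^\mIp(u)=0.5\fand 0=0<0.1$, so $\mI$ and $\mIp$ disagree on $\{(\E r.\ineg A)(a)\geq 0.1\}$. For the TBox statements I set $B^\mI(u)=B^\mIp(u)=1$ and $B=0$ elsewhere, so that $u$ is the only $B$-point and $(B\to\E r.\ineg A)^\mIp(u)=(1\to 0)=0<0.1$ witnesses $\mIp\not\models (B\sqsubseteq\E r.\ineg A)\geq 0.1$, whereas every point of $\mI$ satisfies the inclusion. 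In the cases $\{U\}$ and $\{I\}$ this minimal gadget, together with a disjoint mirror component (a root reaching a successor whose $A$-value is $1$ in $\mI$ and $0.5$ in $\mIp$), already suffices.

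Two enlargements are forced by the features. First, whenever $U\in\Phi$, Conditions~\eqref{eq: FB 8}--\eqref{eq: FB 9} require every point to have a partner of bisimulation degree $1$; since the successor's $A$-value differs, I \emph{balance} $\mI$ and $\mIp$ so that they carry the same multiset of point-types. When $I\in\Phi$ as well (the case $\{O\}$) the balancing must also match incoming edges: for the ABox I realise $\mI$ and $\mIp$ \emph{on a common domain with identical role interpretations}, swapping only the $A$-values of two successors, so that all \mLP role assertions and (in)equalities are validated identically (this is the workaround for $O\notin\Phi$, where Theorem~\ref{theorem: IFDMS} does not cover an arbitrary $\mA'$); the only assertions left for $\mA'$ are concept assertions $C(a)\bowtie p$, which agree once $a^\mI\equivP a^\mIp$. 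For the $\{O\}$ TBox case the balancing must in addition respect the number restrictions: the violating $B$-root of $\mIp$ is given \emph{two} $A=1$-successors, matched by a $B$-root of $\mI$ reaching a \emph{mixed} pair (one $A=0.5$ and one $A=1$ successor), which satisfies the inclusion while remaining bisimilar to the violator by the capping phenomenon below.

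The heart of the argument, and the step I expect to be hardest, is to prove $a^\mI\equivP a^\mIp$: no concept of \mLP---including inverse-role, $\Self$ and number-restriction constructors---separates a root reaching an $A=0.5$ successor from one reaching an $A=1$ successor across a weight-$0.5$ edge, even though $\ineg A$ (resp.\ $\triangle A$) separates the successors outright. I would prove this by a simultaneous induction on concepts, maintaining two invariants: $C^\mI(x)=C^\mIp(x)$ at each root $x$, and $\min\{C^\mI(y),0.5\}=\min\{C^\mIp(y),0.5\}$ (equivalently $C^\mI(y)=C^\mIp(y)$, or both values are $\geq 0.5$) at each successor $y$. The decisive point is that the weight-$0.5$ edge \emph{caps} every $\E r$-, $\V r$- and $\geq\!n\,r$-value at $0.5$, so the gap between $A=0.5$ and $A=1$---which lies above $0.5$ or is flattened by the residuum, and survives the same number-restriction counts---is invisible to \mLP; the cases for $\fand$, $\fOr$, the non-monotone residuum $\to$ and $\lnot$ are delicate, and the inverse-role cases close the induction by feeding the root-invariant back into the successor-invariant. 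Once $a^\mI\equivP a^\mIp$ is established, assertion~\ref{cor: fG H-M 1-3} of Corollary~\ref{cor: fG H-M 1} yields a fuzzy $\Phi$-bisimulation $Z$ with $Z(a^\mI,a^\mIp)=1$, i.e.\ $\mI\simP\mIp$, and each clause follows from the relevant invariance theorem.
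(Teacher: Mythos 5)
Your high-level strategy is the paper's: build finite $\Phi$-bisimilar interpretations that disagree on the target ABox/TBox and invoke Theorem~\ref{theorem: UDKMS} (TBox clauses, where $U\in\Phi$) and Theorem~\ref{theorem: IFDMS} (ABox clauses with $O\in\Phi$), with the same special handling of role/equality assertions in the $\{O\}$ case. But your route to bisimilarity is inverted relative to the paper's, and this is where the trouble starts. The paper simply writes down an explicit fuzzy bisimulation $Z$ with \emph{mixed} degrees ($1$ on the root pairs, $0.9$ on the successor pairs whose $A$-values differ) and checks the finitely many conditions of Definition~\ref{def: DHGAK} on atomic data; Lemma~\ref{lemma: GDHAW} then handles all concepts once and for all. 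You instead propose to prove $a^\mI \equivP a^\mIp$ by a fresh induction over all concepts of \mLP and to recover a bisimulation from assertion~\ref{cor: fG H-M 1-3} of Corollary~\ref{cor: fG H-M 1}. That is legitimate in principle (finite interpretations are witnessed and modally saturated), but it redoes by hand exactly the work the paper's machinery exists to avoid, and your two stated invariants are too weak to carry the induction when $U\in\Phi$ (cases $\{I\}$ and $\{O\}$): from ``both successor values $\geq 0.5$'' one cannot conclude that the global suprema computed by $\E U.C$ agree, so root equality is not maintained (already $C = (0.7\to A)$ gives successor values $0.5$ vs $1$). Repairing this needs a third, swap-symmetry invariant tied to your mirror components, plus an argument that nominals (present in case $\{I\}$) do not destroy that symmetry; none of this is in the proposal.

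The genuinely broken step is the TBox clause for $\PhiF\setminus\Phi=\{O\}$, where $U$ and $I$ are both available. Abbreviate $P(X) = (X\to 0.5)\mand(0.5\to X)$, so that $P(A)$ has value $1$ exactly when $A=0.5$, value $0.5$ when $A=1$, and value $0$ when $A=0$; let $C = P(A)\to\E r^-.B$ and $E = \E U.P(C)$, all concepts of \mLP here. In your $\mI$, the $A{=}0.5$ successor $v_1$ of the $B$-root satisfies $(\E r^-.B)^\mI(v_1)=0.5\fand 1=0.5$ and $P(A)^\mI(v_1)=1$, hence $C^\mI(v_1)=(1\fto 0.5)=0.5$ and $P(C)^\mI(v_1)=1$, so $E^\mI$ is constantly $1$. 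In your $\mIp$, $E=1$ requires a point where $C$ equals exactly $0.5$, i.e.\ an $A{=}0.5$ point lying \emph{below a $B$-root}: at the violator's $A{=}1$ successors $C=(0.5\fto 0.5)=1$, at an $A{=}0.5$ point under a non-$B$ root $C=(1\fto 0)=0$, and at roots $C=1$; so whether your balancing omits an $A{=}0.5$ point or places it under a non-$B$ root, $E^\mIp=0.5\neq 1$. Thus $E$ separates the named roots, by Theorem~\ref{theorem: UFNSJ} your interpretations are not $\Phi$-bisimilar, and the argument collapses. A correct construction is necessarily asymmetric: $\mIp$ must contain a disjoint full copy of $\mI$'s mixed component (so the pattern ``$A{=}0.5$ below a $B$-root'' is realized on both sides) in addition to the violating component, with all individual names placed at the violating root; bisimilarity is then verified by exhibiting $Z$ and checking Definition~\ref{def: DHGAK} directly. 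Note that your guiding principle ``same multiset of point-types'' provably cannot hold for this pair (the two sides validate different TBoxes); bisimilarity only requires every realized type to have a partner, which the explicit-$Z$ check establishes and your cap-invariant induction cannot express. (This sub-case is treacherous: the paper's own sketch also stumbles on it, since in its $\mI_3$ the root $u_2$ has $B=1$ and an $A{=}1$ successor, so $\mI_3\not\models\mT$ just like $\mI'_3$; the asymmetric construction above repairs that argument as well.)
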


\begin{proof}
	Let $\mA$ be any of the mentioned fuzzy ABoxes and $\mT$ any of the mentioned fuzzy TBoxes.
	
	Consider the case when $\PhiF \setminus \Phi = \{U\}$. 
	Let $\mI$ be the fuzzy interpretation illustrated below on the left: 
	\begin{center}
		\begin{tikzpicture}
		\node (x0) {};
		\node (x) [node distance=0.0cm, right of=x0] {};
		\node (I) [node distance=0.0cm, below of=x] {$\mI$};
		\node (I1) [node distance=3.0cm, right of=I] {$\,\mI'$};
		\node (u) [node distance=0.5cm, below of=I] {$u: B_1$};
		\node (v) [node distance=1.5cm, below of=u] {$v: A_{\,0.9}$};
		\node (up) [node distance=0.5cm, below of=I1] {$\,u': B_1$};
		\node (vp) [node distance=1.5cm, below of=up] {$v': A_1$};
		\draw[->] (u) to node [left]{\footnotesize{0.9}} (v);
		\draw[->] (up) to node [left]{\footnotesize{0.9}} (vp);
		\end{tikzpicture}
	\end{center}	
	and specified as follows: 
	\begin{itemize}
		\item $\Delta^\mI = \{u,v\}$, $b^\mI = u$ for all $b \in \IN$,  
		\item $A^\mI(u) = 0$, $A^\mI(v) = 0.9$, $B^\mI(u) = 1$ and $B^\mI(v) = 0$, 
		\item $r^\mI(u,v) = 0.9$ and $r^\mI(x,y) = 0$ for the other pairs $\tuple{x,y}$, 
		\item $C^\mI(x) = 0$ and $s^\mI(x,y) = 0$ for all $C \in \CN \setminus \{A,B\}$, $s \in \RN \setminus \{r\}$ and $x,y \in \Delta^\mI$.
	\end{itemize}
	Let $\mIp$ be the fuzzy interpretation illustrated above on the right and specified analogously.
	Let $Z : \Delta^\mI \times \Delta^\mIp \to [0,1]$ be the function specified by $Z(u,u') = 1$, $Z(v,v') = 0.9$ and $Z(u,v') = Z(v,u') = 0$.
	It is easy to check that $Z$ is the greatest fuzzy $\Phi$-bisimulation between $\mI$ and $\mIp$. 
	Thus, $\mI$ and $\mIp$ are $\Phi$-bisimilar. 
	Observe that $\mI \models \mA$ iff $\mIp \not\models \mA$. 
	If $\mA$ could be expressed by a fuzzy ABox $\mA_2$ in \mLP, then 
	$\mI \models \mA_2$ iff $\mIp \not\models \mA_2$, but by Theorem~\ref{theorem: IFDMS}, 
	$\mI \models \mA_2$ iff $\mIp \models \mA_2$, leading to a contradiction. 
	
	Consider the case when $\PhiF \setminus \Phi = \{I\}$. Let $\mI_2$ be the fuzzy interpretation that differs from $\mI$ only in that $\Delta^{\mI_2} = \{u,v,w\}$, $A^{\mI_2}(w) = 1$ and $B^{\mI_2}(w) = 0$. Similarly, let $\mI'_2$ be the fuzzy interpretation that differs from $\mIp$ only in that $\Delta^{\mI'_2} = \{u',v',w'\}$, $A^{\mI'_2}(w') = 0.9$ and $B^{\mI'_2}(w') = 0$. 
	Let $Z_2 : \Delta^{\mI_2} \times \Delta^{\mI'_2} \to [0,1]$ be the function specified by $Z_2(u,u') = 1$, $Z_2(v,v') = Z_2(w,w') = 0.9$, $Z_2(v,w') = Z_2(w,v') = 1$ and $Z_2(x,x') = 0$ for the four remaining pairs $\tuple{x,x'} \in \Delta^{\mI_2} \times \Delta^{\mI'_2}$.
	It is easy to check that $Z_2$ is the greatest fuzzy $\Phi$-bisimulation between $\mI_2$ and~$\mI'_2$. 
	Thus, $\mI_2$ and $\mI'_2$ are $\Phi$-bisimilar. Using a similar argumentation as for the previous case, we can conclude that $\mA$ cannot be expressed in \mLP.
	Observe that $\mI_2 \models \mT$ iff $\mI'_2 \not\models \mT$. 
	If $\mT$ could be expressed by a fuzzy TBox $\mT_2$ in \mLP, then 
	$\mI_2 \models \mT_2$ iff $\mI'_2 \not\models \mT_2$, but by Theorem~\ref{theorem: UDKMS}, 
	$\mI_2 \models \mT_2$ iff $\mI'_2 \models \mT_2$, leading to a contradiction.

	Consider the case when $\PhiF \setminus \Phi = \{O\}$. Let $\mI_3$ and $\mI'_3$ be the fuzzy interpretations illustrated below: 
	\begin{center}
		\begin{tikzpicture}
		\node (x) {};
		\node (L) [node distance=0.0cm, below of=x] {};
		\node (Lp) [node distance=7.0cm, right of=L] {};
		\node (I) [node distance=1.0cm, right of=L] {$\mI_3$};
		\node (Ip) [node distance=1.0cm, right of=Lp] {$\,\mI'_3$};
		\node (u) [node distance=0.5cm, below of=L] {$u_1: B_1$};
		\node (v) [node distance=1.5cm, below of=u] {$v_1: A_{\,0.9}$};
		\node (u2) [node distance=2.0cm, right of=u] {$u_2: B_1$};
		\node (v2) [node distance=1.5cm, below of=u2] {$v_2: A_1$};
		\node (up) [node distance=0.5cm, below of=Lp] {$\,u'_1: B_1$};
		\node (vp) [node distance=1.5cm, below of=up] {$v'_1: A_1$};
		\node (u2p) [node distance=2.0cm, right of=up] {$u'_2: B_1$};
		\node (v2p) [node distance=1.5cm, below of=u2p] {$v'_2: A_{\,0.9}$};
		\draw[->] (u) to node [left]{\footnotesize{0.9}} (v);
		\draw[->] (u2) to node [left]{\footnotesize{0.9}} (v2);
		\draw[->] (up) to node [left]{\footnotesize{0.9}} (vp);
		\draw[->] (u2p) to node [left]{\footnotesize{0.9}} (v2p);
		\end{tikzpicture}
	\end{center}	
	They are specified similarly to the way for $\mI$ and $\mIp$, with $b^{\mI_3} = u_1$ and $b^{\mI'_3} = u'_1$ for all $b \in \IN$. 
	Let $Z_3 : \Delta^{\mI_3} \times \Delta^{\mI'_3} \to [0,1]$ be the function specified by:
	\begin{itemize}
		\item $Z_3(u_i,u'_j) = 1$ and $Z_3(u_i,v'_j) = Z_3(v_i,u'_j) = 0$ for all $\tuple{i,j} \in \{1,2\} \times \{1,2\}$, 
		\item $Z_3(v_1,v'_1) = Z_3(v_2,v'_2) = 0.9$ and 
		$Z_3(v_1,v'_2) = Z_3(v_2,v'_1) = 1$. 
	\end{itemize}
	It is easy to check that $Z_3$ is the greatest fuzzy $\Phi$-bisimulation between $\mI_3$ and~$\mI'_3$. 
	Thus, $\mI_3$ and $\mI'_3$ are $\Phi$-bisimilar. 
	Like the previous two cases, $\mI_3 \models \mA$ iff $\mI'_3 \not\models \mA$. 
	If $\mA$ could be expressed by a fuzzy ABox $\mA_3$ in \mLP, then 
	$\mI \models \mA_3$ iff $\mIp \not\models \mA_3$, but by Theorem~\ref{theorem: IFDMS}, 
	$\mI \models \mA_3$ iff $\mIp \models \mA_3$ (here note that, although $O \notin \Phi$, we have $b^{\mI_3} = u_1$ and $b^{\mI'_3} = u'_1$ for all $b \in \IN$, and assertions of the form $c \doteq d$, $c \not\doteq d$ or $R(c,d) \bowtie p$ are trivial), leading to a contradiction. 
	Using a similar argumentation as for the previous case, we can also conclude that $\mT$ cannot be expressed in \mLP.
	\myend
\end{proof}

\begin{remark}\label{remark: HDKSL}
	Reconsider the text of the above proof. 
	\modifiedA{Observe that $Z$ does not satisfy Conditions~\eqref{eq: FB 8} and~\eqref{eq: FB 9} (with $x = u$ and $x' = u'$), and thus $\mI$ and $\mIp$ cannot be used to deal with the case when $U \in \Phi$. One can think of $\mI_2$ and $\mIp_2$ as a repair of $\mI$ and $\mIp$ to cover the case when $U \in \Phi$, but this ``repair'' works only when $I \notin \Phi$. Similarly, $\mI_3$ and $\mIp_3$ are a further try to cover the case when $\{U,I\} \subseteq \Phi$, but the change spoils the satisfaction of Condition~\eqref{eq: FB 5}, which is related to the feature $O$ (nominals).}

	For the case when $\Phi = \PhiF \setminus \{U\}$, if $Z'$ is a crisp $\Phi$-bisimulation between $\mI$ and $\mIp$, then $Z' \leq Z$, hence $Z'(v,v') = 0$ and consequently, $Z'(u,u') = 0$ (i.e., $Z'(x,x') = 0$ for all $\tuple{x,x'} \in \Delta^\mI \times \Delta^\mIp$). 
	Thus, $\mI$ and $\mIp$ are $\Phi$-bisimilar but not strongly $\Phi$-bisimilar. 
	Similarly, for the case when $\Phi = \PhiF \setminus \{I\}$, $\mI_2$ and $\mI'_2$ are $\Phi$-bisimilar but not strongly $\Phi$-bisimilar. 
	For the case when $\Phi = \PhiF \setminus \{O\}$, $\mI_3$ and $\mI'_3$ are $\Phi$-bisimilar but not strongly $\Phi$-bisimilar. 
	\myend 
\end{remark}

\begin{proposition}\label{prop: JDGAS 3}
For $n \geq 2$ and $\Phi = \PhiF$, 
the following fuzzy ABoxes/TBoxes cannot be expressed in \mLP:
\[
\begin{array}{lcl}
\{(\geq\!n\,r.\!\ineg\!A)(a) \geq 0.1\} & & 
\{(\geq\!n\,r.\triangle A)(a) \geq 0.1\} \\[0.5ex]
\{(B \sqsubseteq\; \geq\!n\,r.\!\ineg\!A) \geq 0.1\} & & 
\{(B \sqsubseteq\; \geq\!n\,r.\triangle A) \geq 0.1\}.
\end{array}
\]
\end{proposition}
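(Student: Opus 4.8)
The plan is to follow the same bisimulation-based scheme used in Propositions~\ref{prop: JDAJA} and~\ref{prop: JDGAS 2}: I would exhibit two finite (hence witnessed w.r.t.\ \mLP) fuzzy interpretations $\mI$ and $\mIp$ that are $\Phi$-bisimilar but separated by the relevant \mLPn concept, and then derive a contradiction from the invariance theorems. Since $\Phi = \PhiF$ contains both $O$ and $U$, Theorem~\ref{theorem: IFDMS} makes \emph{every} \mLP-ABox invariant under $\Phi$-bisimilarity and Theorem~\ref{theorem: UDKMS} makes \emph{every} \mLP-TBox invariant under $\Phi$-bisimilarity. Thus if any of the listed ABoxes (resp.\ TBoxes) were expressible in \mLP by some $\mA_2$ (resp.\ $\mT_2$), then $\mI \models \mA_2$ iff $\mIp \models \mA_2$ (resp.\ for $\mT_2$), contradicting $\mI \models \mA$ while $\mIp \not\models \mA$. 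A single pair of interpretations will serve both the ABox and the TBox version, because with $B^\mI = 1$ holding only at the named point $u$, validity of $(B \sqsubseteq C) \geq 0.1$ reduces to $C^\mI(u) \geq 0.1$, i.e.\ to the corresponding ABox condition at $u$.

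For the concepts involving $\ineg\!A$, I would take $\mI$ with a named root $u$ (all individual names mapping to $u$, with $B^\mI(u)=1$ and $A^\mI(u)=0$), together with $n$ leaves carrying $A=0.9$ and one leaf carrying $A=1$, all reached by $r$-edges of weight $0.9$; and $\mIp$ with a named root $u'$ ($B^\mIp(u')=1$) together with $n-1$ leaves carrying $A=0.9$ and two leaves carrying $A=1$, again with $r$-edges of weight $0.9$ (here $n-1\ge 1$ because $n\ge 2$). A direct computation gives $(\geq n\,r.\ineg\!A)^\mI(u)=0.9\fand 0.1=0.1$, while $(\geq n\,r.\ineg\!A)^\mIp(u')=0$, since any $n$ distinct successors of $u'$ must include a leaf with $A=1$, contributing $0.9\fand 0=0$ to the infimum. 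Hence $\mI\models\mA$ and $\mIp\not\models\mA$ (and likewise for $\mT$). For the $\triangle A$ versions I would use the mirror-image construction, interchanging the roles of the values $0.9$ and $1$ (so $\mI$ has $n$ leaves with $A=1$ and one with $A=0.9$, and $\mIp$ has $n-1$ leaves with $A=1$ and two with $A=0.9$); then $\triangle A$ plays the role that $\ineg\!A$ did.

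It remains to verify that $\mI$ and $\mIp$ are $\Phi$-bisimilar, and this is where the work lies. I would define $Z(u,u')=1$, put $Z(x,x')=(A^\mI(x)\fequiv A^\mIp(x'))$ on leaf pairs, and $Z=0$ on the remaining (``mixed'') pairs, and claim it is the greatest fuzzy $\Phi$-bisimulation. Conditions~\eqref{eq: FB 2}, \eqref{eq: FB 5} and~\eqref{eq: FB 10} are immediate from the definition of $Z$ and the absence of self-loops; \eqref{eq: FB 3} and~\eqref{eq: FB 4} for $r$ and $r^-$ hold because the root is matched exactly ($Z(u,u')=1$) and every relevant edge has weight $0.9$. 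The main obstacle is checking the universal-role conditions~\eqref{eq: FB 8}, \eqref{eq: FB 9} and the number-restriction conditions~\eqref{eq: FB 6}--\eqref{eq: FB 7n} simultaneously under $\Phi=\PhiF$. The universal-role conditions force every node to have an \emph{exact} ($Z=1$) partner, which is exactly why both sides must carry leaves of \emph{both} $A$-values; this is the role of the extra leaves that keep both values present on each side, and it is also precisely why the case $n=1$ (treated only partially in Proposition~\ref{prop: JDGAS 2}) cannot be handled for the full $\PhiF$: there $\mIp$ would be forced to omit the value $0.9$ entirely, leaving the $A=0.9$ leaf of $\mI$ without an exact partner and breaking~\eqref{eq: FB 9}. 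For the number-restriction conditions the key observation is that every successor of $u$ and of $u'$ has $A\ge 0.9$, so for any two such leaves $Z(x,x')\ge(0.9\fequiv 1)=0.9$, which equals the common edge weight; consequently every required matching of $n$ distinct successors in~\eqref{eq: FB 6}--\eqref{eq: FB 7n} succeeds (there are $n+1$ successors on each side, so $n$ distinct witnesses always exist), with all the relevant inequalities bounded below by $0.9$. Having verified every condition, $Z$ is a fuzzy $\Phi$-bisimulation with $Z(a^\mI,a^\mIp)=Z(u,u')=1$ for all $a\in\IN$, so $\mI\simP\mIp$, completing the argument.
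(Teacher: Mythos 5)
Your proposal is correct and takes essentially the same route as the paper: the paper's proof (written out for $n=2$, with the other cases declared similar and omitted) uses exactly your interpretations --- a named root with $n$ leaves at $A_{\,0.9}$ and one at $A_1$ versus a root with $n-1$ leaves at $A_{\,0.9}$ and two at $A_1$, all $r$-edges of weight $0.9$ --- together with the same fuzzy $\Phi$-bisimulation $Z$ (value $1$ on leaf pairs with equal $A$-values, $0.9$ on leaf pairs with different values, $0$ on mixed pairs) and the same contradiction via Theorems~\ref{theorem: UDKMS} and~\ref{theorem: IFDMS}. The only cosmetic differences are that you write out the construction for general $n$ and use a mirrored pair for the $\triangle A$ cases, whereas the paper reuses the single pair with the roles of $\mI$ and $\mIp$ interchanged (noting $\mI \models \mA$ iff $\mIp \not\models \mA$).
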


\begin{proof}
	Let $\mA$ be any of the mentioned fuzzy ABoxes and $\mT$ any of the mentioned fuzzy TBoxes. 
	We prove this proposition for the case when $n = 2$. The other cases are similar and omitted. 
	Let $\mI$ and $\mIp$ be the fuzzy interpretations illustrated below and specified similarly to the way in the proof of Proposition~\modifiedA{\ref{prop: JDGAS 2}}: 
	\begin{center}
		\begin{tikzpicture}
		\node (x0) {};
		\node (x) [node distance=0.0cm, right of=x0] {};
		\node (I) [node distance=0.0cm, below of=x] {$\mI$};
		\node (I1) [node distance=7.0cm, right of=I] {$\,\mI'$};
		\node (u) [node distance=0.5cm, below of=I] {$u: B_1$};
		\node (v1) [node distance=1.5cm, below of=u] {$v_1: A_{\,0.9}$};
		\node (v0) [node distance=2.0cm, left of=v1] {$v_0: A_{\,0.9}$};
		\node (v2) [node distance=2.0cm, right of=v1] {$v_2: A_1$};
		\node (up) [node distance=0.5cm, below of=I1] {$\,u': B_1$};
		\node (v1p) [node distance=1.5cm, below of=up] {$v'_1: A_1$};
		\node (v0p) [node distance=2.0cm, left of=v1p] {$v'_0: A_{\,0.9}$};
		\node (v2p) [node distance=2.0cm, right of=v1p] {$v'_2: A_1$};
		\draw[->] (u) to node [left]{\footnotesize{0.9}} (v0);
		\draw[->] (u) to node [left]{\footnotesize{0.9}} (v1);
		\draw[->] (u) to node [left]{\footnotesize{0.9}} (v2);
		\draw[->] (up) to node [left]{\footnotesize{0.9}} (v0p);
		\draw[->] (up) to node [left]{\footnotesize{0.9}} (v1p);
		\draw[->] (up) to node [left]{\footnotesize{0.9}} (v2p);
		\end{tikzpicture}
	\end{center}
	Let $Z : \Delta^\mI \times \Delta^\mIp \to [0,1]$ be the function specified by:
	\begin{itemize}
		\item $Z(u,u') = 1$, $Z(u,v'_i) = Z(v_i,u') = 0$ for all $0 \leq i \leq 2$, 
		\item $Z(v_i,v'_j) = 1$ for all $\tuple{i,j} \in \{0,1,2\} \times \{0,1,2\}$ with $A^\mI(v_i) = A^\mIp(v'_j)$, 
		\item $Z(v_i,v'_j) = 0.9$ for all $\tuple{i,j} \in \{0,1,2\} \times \{0,1,2\}$ with $A^\mI(v_i) \neq A^\mIp(v'_j)$.
	\end{itemize}
	It is easy to check that $Z$ is the greatest fuzzy $\Phi$-bisimulation between $\mI$ and $\mIp$. 
	Using a similar argumentation as for Proposition~\ref{prop: JDGAS 2}, we can conclude that $\mA$ and $\mT$  cannot be expressed in \mLP.
	\myend
\end{proof}

\begin{theorem}
	\mLPn and \DLP are strictly more expressive than \mLP 
	\begin{itemize}
		\item w.r.t.\ concepts for any $\Phi \subseteq \PhiF$, 
		\item w.r.t.\ ABoxes for any $\Phi \subseteq \PhiF$ such that $\{I,O,U\} \setminus \Phi \neq \emptyset$ or $\Phi$ contains some $Q_n$ with $n \geq 2$, 
		\item w.r.t.\ TBoxes for any $\Phi \subseteq \PhiF$ such that $\{I,O\} \setminus \Phi \neq \emptyset$ or $\Phi$ contains some $Q_n$ with $n \geq 2$. 
	\end{itemize} 
\end{theorem}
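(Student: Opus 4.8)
The plan is to derive all three claims from the three preceding propositions (Propositions~\ref{prop: JDAJA}, \ref{prop: JDGAS 2} and~\ref{prop: JDGAS 3}) by a single monotonicity observation. First I would isolate the following elementary fact: whenever $\Phi \subseteq \Phi' \subseteq \PhiF$, the language \mLP with feature set $\Phi$ is syntactically a sublanguage of \mLP with feature set $\Phi'$. Hence, if a concept (resp.\ fuzzy TBox, fuzzy ABox) is expressible in \mLP with the smaller set $\Phi$, the very same expressing object witnesses expressibility in \mLP with the larger set $\Phi'$, since the defining semantic equivalence is interpretation-wise and does not depend on which of the two languages we regard the object as living in. Taking the contrapositive, inexpressibility in the larger language propagates down to every sublanguage obtained by deleting features. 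This is the only structural lemma needed, and it lets me upgrade each proposition (proved for one fixed feature set) to the whole family of $\Phi$ stated in the theorem.

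For the concepts claim, note that $\ineg A$ is a concept of \mLPn and $\triangle A = \lnot\!\ineg A$ is a concept of \DLP, for every $\Phi \subseteq \PhiF$, since neither target uses a feature from $\Phi$. Proposition~\ref{prop: JDAJA} shows neither is expressible in \mLP with $\Phi = \PhiF$, so by the monotonicity fact (applied with $\Phi' = \PhiF$) neither is expressible in \mLP with any $\Phi \subseteq \PhiF$. As \mLP is a sublogic of both \mLPn and \DLP, the definition of being strictly more expressive w.r.t.\ concepts is met for all $\Phi \subseteq \PhiF$.

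For the ABoxes claim I would split on the disjunctive hypothesis. If some $f \in \{I,O,U\}$ lies outside $\Phi$, then $\Phi \subseteq \PhiF \setminus \{f\}$, and Proposition~\ref{prop: JDGAS 2} (whose three cases are exactly $\PhiF \setminus \Phi \in \{\{U\},\{I\},\{O\}\}$) supplies inexpressibility of $\{(\E r.\!\ineg\!A)(a) \geq 0.1\}$ and $\{(\E r.\triangle A)(a) \geq 0.1\}$ in \mLP with $\PhiF \setminus \{f\}$; monotonicity then pushes it down to $\Phi$. If instead $Q_n \in \Phi$ for some $n \geq 2$, I would invoke Proposition~\ref{prop: JDGAS 3} for that $n$ (stated for $\Phi = \PhiF$) and push down via $\Phi \subseteq \PhiF$, after checking that $\geq\!n\,r.\!\ineg\!A$ is a legitimate concept of \mLPn precisely because $Q_n \in \Phi$. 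The TBoxes claim is identical, using the TBox parts of Propositions~\ref{prop: JDGAS 2} and~\ref{prop: JDGAS 3}; the universal-role case is absent from the TBox hypothesis ($\{I,O\} \setminus \Phi \neq \emptyset$ rather than $\{I,O,U\} \setminus \Phi \neq \emptyset$) exactly because Proposition~\ref{prop: JDGAS 2} only establishes TBox inexpressibility in the $\{I\}$ and $\{O\}$ cases, where $U \in \Phi$ and Theorem~\ref{theorem: UDKMS} is available.

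I do not expect a genuine obstacle, since the hard work (constructing $\Phi$-bisimilar witnessed interpretations that the target objects separate, together with the greatest-bisimulation verifications) is already done inside the three propositions. The only point demanding care is the \emph{direction} of the monotonicity: it is inexpressibility in the larger language that descends to sublanguages, so each proposition must be applied at the largest feature set compatible with the relevant missing feature, namely $\PhiF \setminus \{f\}$ in the missing-$f$ cases and $\PhiF$ in the number-restriction cases, and never at $\Phi$ directly. A secondary routine check is that every target ($\ineg A$, $\triangle A$, $\E r.\!\ineg\!A$, $\geq\!n\,r.\triangle A$, and their TBox forms) is well-formed in \mLPn or \DLP for the given $\Phi$, which reduces to confirming that the number-restriction targets are only claimed when the corresponding $Q_n$ belongs to $\Phi$.
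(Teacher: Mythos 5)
Your proposal is correct and takes essentially the same route as the paper, whose entire proof is the remark that the theorem ``immediately follows from Propositions~\ref{prop: JDAJA}--\ref{prop: JDGAS 3}.'' The only difference is that you make explicit the downward-monotonicity of inexpressibility (an object inexpressible in \mLP at a larger feature set is inexpressible at any smaller one, since the candidate expressing concept/TBox/ABox would live in both languages) and the well-formedness checks for the witnessing objects, both of which the paper leaves implicit.
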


This theorem immediately follows from Propositions~\ref{prop: JDAJA}--\ref{prop: JDGAS 3}.
It remains open whether the second and third assertions of this theorem can be significantly strengthened, e.g., to any $\Phi \subseteq \PhiF$ without restrictions.


\section{Minimizing Finite Fuzzy Interpretations}
\label{section: minimization}

In this section, as an application of strong $\Phi$-bisimilarity, we study the problem of minimizing a finite fuzzy interpretation while preserving certain properties. 

Given a fuzzy interpretation $\mI$, by $\simPIn$ we denote the binary relation on $\Delta^\mI$ such that, for $x,x' \in \Delta^\mI$, $x \simPIn x'$ iff $x \simPn x'$. We call it the {\em strong $\Phi$-bisimilarity relation of~$\mI$}. By the assertions~\ref{item: HFHSJ2 1}--\ref{item: HFHSJ2 3} of Proposition~\ref{prop: HFHSJ 2}, $\simPIn$ is an equivalence relation on $\Delta^\mI$. By the assertion~\ref{item: HFHSJ2 4} of Proposition~\ref{prop: HFHSJ 2}, the function $Z : \Delta^\mI \times \Delta^\mI \to \{0,1\}$ specified by \mbox{$Z(x,x') =$ (if $x \simPIn x'$ then 1 else 0)} is the greatest crisp $\Phi$-bisimulation between $\mI$ and itself, which we also call the {\em greatest crisp $\Phi$-auto-bisimulation of~$\mI$}. 

\begin{definition}\label{def: HDJOS}
	Given a fuzzy interpretation $\mI$ and $\Phi \subseteq \{I,O,U\}$, the {\em quotient fuzzy interpretation} $\mIsimPn$ of $\mI$ w.r.t.\ the equivalence relation $\simPIn$ is specified as follows:\footnote{Formally, the quotient fuzzy interpretation of $\mI$ w.r.t.\ the equivalence relation $\simPIn$ should be denoted by~$\mI/_{\simPIn}$. We use $\mIsimPn$ instead to simplify the notation.}	 
	\begin{itemize}
		\item $\Delta^{\mIsimPn} = \{[x]_{\simPIn} \mid x \in \Delta^\mI \}$, where $[x]_{\simPIn}$ is the equivalence class of $x$ w.r.t.\ $\simPIn$, 
		\item $a^{\mIsimPn} = [a^\mI]_{\simPIn}$ for $a \in \IN$, 
		\item $A^{\mIsimPn}([x]_{\simPIn}) = A^\mI(x)$ for $A \in \CN$ and $x \in \Delta^\mI$, 
		\item $r^{\mIsimPn}([x]_{\simPIn},[y]_{\simPIn}) = \sup\{r^\mI(x,y') \mid y' \in [y]_{\simPIn}\}$ for $r \in \RN$ and $x,y \in \Delta^\mI$.
		\myend
	\end{itemize}
\end{definition}
To justify that Definition~\ref{def: HDJOS} is well specified, we need to show that:
\begin{enumerate}
	\item For every $A \in \CN$, $x \in \Delta^\mI$ and $x' \in [x]_{\simPIn}$, $A^\mI(x) = A^\mI(x')$.
	\item For every $r \in \RN$, $x,y \in \Delta^\mI$ and $x' \in [x]_{\simPIn}$, 
	\[ 
	\sup\{r^\mI(x,y') \mid y' \in [y]_{\simPIn}\} = 
	\sup\{r^\mI(x',y') \mid y' \in [y]_{\simPIn}\}.
	\]
\end{enumerate}
Let $Z$ be $\simPIn$. Then, the first assertion follows from Condition~\eqref{eq: FB 2} and the assumption that $Z(x,x') = 1$.
The second one follows from Conditions~\eqref{eq: FB 3}, \eqref{eq: FB 4} and the assumption $Z(x,x') = 1$. 

\begin{example}\label{example: HDKSL2-3}
	Let $\RN = \{r\}$, $\CN = \{A\}$ and $\IN = \{a\}$. Consider the fuzzy interpretation $\mI$ illustrated below and specified similarly as in Example~\ref{example: HDKSL}, with $a^\mI = u$:
	\begin{center}		
		\begin{tikzpicture}
		\node (I) {};
		\node (u) [node distance=0.0cm, below of=I] {$u:A_0$};
		\node (ub) [node distance=1.5cm, below of=u] {$v_2:A_{\,0.8}$};
		\node (v) [node distance=1.8cm, left of=ub] {$v_1:A_{\,0.7}$};
		\node (w) [node distance=1.8cm, right of=ub] {$v_3:A_{\,0.8}$};
		\draw[->] (u) to node [left]{\footnotesize{0.5}} (v);
		\draw[->] (u) to node [right]{\footnotesize{0.6}} (ub);
		\draw[->] (u) to node [right]{\footnotesize{0.3}} (w);
		\node (vp) [node distance=3cm, right of=w] {$v'_1:A_{\,0.7}$};
		\node (upb) [node distance=1.8cm, right of=vp] {$v'_2:A_{\,0.8}$};
		\node (up) [node distance=1.5cm, above of=upb] {$u':A_0$};
		\draw[->] (up) to node [left]{\footnotesize{0.5}} (vp);
		\draw[->] (up) to node [right]{\footnotesize{0.6}} (upb);
		\end{tikzpicture}
	\end{center}
	
	\begin{itemize}
		\item Case $\Phi \subseteq \{U\}$: 
		We have
		\[
			\simPIn\ \, =\, \{\tuple{x,x} \mid x \in \Delta^\mI\} \cup 
			\{\tuple{u,u'}, \tuple{u',u}, \tuple{v_1,v'_1}, \tuple{v'_1,v_1}\} \cup \{\tuple{x,x'} \mid x, x' \in \{v_2,v_3,v'_2\}\}
		\]
		and $\mIsimPn$ has the following form, with 
		$a^{\mIsimPn} = \{u,u'\}$:
		\begin{center}		
			\begin{tikzpicture}
			\node (I) {};
			\node (u) [node distance=0.0cm, below of=I] {$\{u,u'\}:A_0$};
			\node (ub) [node distance=1.5cm, below of=u] {};
			\node (v1) [node distance=1.8cm, left of=ub] {$\{v_1,v'_1\}:A_{\,0.7}$};
			\node (v2) [node distance=1.8cm, right of=ub] {$\{v_2,v_3,v'_2\}:A_{\,0.8}$};
			\draw[->] (u) to node [left]{\footnotesize{0.5}} (v1);
			\draw[->] (u) to node [right]{\footnotesize{0.6}} (v2);
			\end{tikzpicture}
		\end{center}
		
		\item Case $\{O\} \subseteq \Phi \subseteq \{O,U\}$: 
		We have
		\[ 
		\simPIn \;\, =\, \{\tuple{x,x} \mid x \in \Delta^\mI\} \cup \{\tuple{v_1,v'_1}, \tuple{v'_1,v_1}\} \cup 
		\{\tuple{x,x'} \mid x, x' \in \{v_2,v_3,v'_2\}\}
		\]
		and $\mIsimPn$ has the following form, with 
		$a^{\mIsimPn} = \{u\}$:
		\begin{center}		
			\begin{tikzpicture}
			\node (I) {};
			\node (u) [node distance=0.0cm, below of=I] {$\{u\}:A_0$};
			\node (up) [node distance=5.0cm, right of=u] {$\{u'\}:A_0$};
			\node (v1) [node distance=1.5cm, below of=u] {$\{v_1,v'_1\}:A_{\,0.7}$};
			\node (v2) [node distance=1.5cm, below of=up] {$\{v_2,v_3,v'_2\}:A_{\,0.8}$};
			\draw (u)  edge[->, left] node{\footnotesize{0.5}} (v1)
			(u)  edge[->, below=2pt, pos=.15] node{\footnotesize{0.6}} (v2)
			(up) edge[->, below=2pt, pos=.15] node{\footnotesize{0.5}} (v1)
			(up) edge[->, right] node{\footnotesize{0.6}} (v2);
			\end{tikzpicture}
		\end{center}
		
		\item Case $I \in \Phi\,$: We have $ \simPIn\ = \{\tuple{x,x} \mid x \in \Delta^\mI\}$ and $\mIsimPn$ has the same form as~$\mI$. 
		\myend
	\end{itemize}
\end{example}

\begin{lemma}\label{lemma: HDAMA}
	Let $\Phi \subseteq \{I,O,U\}$, $\mI$ be a fuzzy interpretation that is image-finite w.r.t.~$\Phi$, and \mbox{$Z: \Delta^\mI \times \Delta^{\mIsimPn} \to \{0,1\}$} be specified by \mbox{$Z(x,[x'']_{\simPIn}) =$ (if $x \in [x'']_{\simPIn}$ then 1 else 0)}. Then, $Z$ is a crisp $\Phi$-bisimulation between $\mI$ and $\mIsimPn$. 
	It is also a crisp $(\Phi \cup \{U\})$-bisimulation between $\mI$ and $\mIsimPn$.
\end{lemma}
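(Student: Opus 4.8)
The plan is to verify directly that $Z$ satisfies every condition of Definition~\ref{def: DHGAK} that is relevant when $\Phi \subseteq \{I,O,U\}$: namely \eqref{eq: FB 2}, \eqref{eq: FB 3}, \eqref{eq: FB 4}, plus \eqref{eq: FB 5} when $O \in \Phi$ and \eqref{eq: FB 8}, \eqref{eq: FB 9} when $U \in \Phi$ (no conditions for $\Self$, $Q_n$, $N_n$ arise). Since $Z$ is crisp, each condition only needs to be checked when $Z(x,[x'']_{\simPIn}) = 1$, i.e.\ when $[x''] = [x]$, in which case I will write the second argument as $[x]$. Throughout I would use that $\simPIn$, read as the function sending $\langle x,x'\rangle$ to $1$ iff $x \simPIn x'$, is the greatest crisp $\Phi$-auto-bisimulation of $\mI$ (noted just before the lemma via Proposition~\ref{prop: HFHSJ 2}), so it itself satisfies all conditions of Definition~\ref{def: DHGAK}.

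Condition \eqref{eq: FB 2} is immediate from Definition~\ref{def: HDJOS}: if $[x''] = [x]$ then $A^{\mIsimPn}([x]) = A^\mI(x)$, so the G\"odel equivalence is $1$. For the role conditions I would first record the uniform identity
\[ R^{\mIsimPn}([x]_{\simPIn},[y]_{\simPIn}) = \sup\{R^\mI(x,y') \mid y' \in [y]_{\simPIn}\} \]
for every basic role $R$ w.r.t.\ $\Phi$. For a role name this is the definition of $r^{\mIsimPn}$. For an inverse role one has $(r^-)^{\mIsimPn}([x],[y]) = r^{\mIsimPn}([y],[x]) = \sup\{r^\mI(w,x') \mid x' \in [x]\}$ for any $w \in [y]$; here I would invoke Conditions \eqref{eq: FB 3}--\eqref{eq: FB 4} of the auto-bisimulation $\simPIn$ (equivalently, the representative-independence already used to justify Definition~\ref{def: HDJOS}) to move the supremum onto the second coordinate, rewriting the right-hand side as $\sup\{r^\mI(w,x) \mid w \in [y]\} = \sup\{(r^-)^\mI(x,w) \mid w \in [y]\}$. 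Granting this identity, Condition \eqref{eq: FB 3} follows by taking $[y]$ as witness: since $y \in [y]$ the supremum is $\geq R^\mI(x,y)$ and $Z(y,[y]) = 1$.

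The crux is Condition \eqref{eq: FB 4}, and this is the only place where image-finiteness is essential. Fix a class $[y'']$. By the uniform identity, $R^{\mIsimPn}([x],[y'']) = \sup\{R^\mI(x,y') \mid y' \in [y'']\}$. If this value is $0$, any $y \in [y'']$ witnesses the trivial inequality. Otherwise, since $\mI$ is image-finite w.r.t.\ $\Phi$ and $R$ is a basic role w.r.t.\ $\Phi$, the set $\{y' \mid R^\mI(x,y') > 0\}$ is finite, so the supremum is attained at some $y \in [y'']$ with $R^\mI(x,y) = R^{\mIsimPn}([x],[y''])$; for this $y$ we have $Z(y,[y'']) = 1$, hence $Z(y,[y'']) \fand R^\mI(x,y) = R^{\mIsimPn}([x],[y''])$. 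I expect the inverse-role instance of this step to be the main obstacle, because there the supremum defining $r^{\mIsimPn}$ runs over the \emph{first} coordinate of $r^\mI$, and one must both rewrite it onto the second coordinate (via the auto-bisimulation) and then realize it as a maximum (via image-finiteness); without image-finiteness no single successor need attain the supremum, and the required witness $y$ may simply not exist.

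For the remaining features the checks are light. If $O \in \Phi$ and $[x''] = [x]$, the forward half of \eqref{eq: FB 5} is clear, as $x = a^\mI$ gives $[x] = [a^\mI] = a^{\mIsimPn}$; for the converse, $[x] = a^{\mIsimPn} = [a^\mI]$ means $x \simPIn a^\mI$, and applying Condition \eqref{eq: FB 5} to the auto-bisimulation $\simPIn$ forces $x = a^\mI$. Condition \eqref{eq: FB 8} holds by taking $[y]$ as witness (so $Z(y,[y]) = 1$) and \eqref{eq: FB 9} by taking any $y \in [y'']$ (so $Z(y,[y'']) = 1$); these two verifications use nothing about $U$, hence hold unconditionally. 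Consequently $Z$ satisfies \eqref{eq: FB 8} and \eqref{eq: FB 9} for the feature set $\Phi \cup \{U\}$ as well, and since the basic roles w.r.t.\ $\Phi$ and w.r.t.\ $\Phi \cup \{U\}$ coincide while all other conditions are unchanged, $Z$ is also a crisp $(\Phi \cup \{U\})$-bisimulation between $\mI$ and $\mIsimPn$.
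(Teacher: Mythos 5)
Your proof is correct and follows essentially the same route as the paper's: it checks the conditions of Definition~\ref{def: DHGAK} one by one with the same witnesses (the class $[y]_{\simPIn}$ for \eqref{eq: FB 3} and \eqref{eq: FB 8}, a supremum-attaining representative obtained from image-finiteness for \eqref{eq: FB 4}, any representative for \eqref{eq: FB 9}), and it handles the converse half of \eqref{eq: FB 5} exactly as the paper does, by applying that condition to the auto-bisimulation $\simPIn$ of $\mI$. The only difference is that you explicitly establish the identity $R^{\mIsimPn}([x]_{\simPIn},[y]_{\simPIn}) = \sup\{R^\mI(x,y') \mid y' \in [y]_{\simPIn}\}$ for \emph{inverse} basic roles via the auto-bisimulation conditions, a point the paper's proof leaves implicit under ``by Definition~\ref{def: HDJOS}''; this is a welcome sharpening rather than a different argument.
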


\begin{proof}
	We need to prove Conditions~\eqref{eq: FB 2}--\eqref{eq: FB 9} (regardless of whether $U \in \Phi$) for $\mIp = \mIsimPn$. 
	Without loss of generality, assume that $Z(x,x') = 1$, which means $x' = [x]_{\simPIn}$. 
	
	\begin{itemize}
		\item Condition~\eqref{eq: FB 2} directly follows from Definition~\ref{def: HDJOS}. 
		
		\item Consider Condition~\eqref{eq: FB 3} and take $y' = [y]_{\simPIn}$. By Definition~\ref{def: HDJOS}, $R^\mI(x,y) \leq R^{\mIsimPn}([x]_{\simPIn},[y]_{\simPIn})$. We also have $Z(y,y') = 1$. 
		
		\item Consider Condition~\eqref{eq: FB 4}. Since $\mI$ is image-finite, by Definition~\ref{def: HDJOS}, $R^{\mIsimPn}([x]_{\simPIn},y') = \max\{R^\mI(x,y) \mid y \in y'\}$. Hence, there exists $y \in y' \subseteq \Delta^\mI$ such that $R^{\mIsimPn}([x]_{\simPIn},y') = R^\mI(x,y)$. We also have $Z(y,y') = 1$.  
		
		\item Consider Condition~\eqref{eq: FB 5} for the case $O \in \Phi$. 
		If $x = a^\mI$, then by Definition~\ref{def: HDJOS}, $x' = [x]_{\simPIn} = [a^\mI]_{\simPIn} = a^\mIp$.  
		Conversely, if $x' = a^\mIp$, then $x \simPIn a^\mI$ and, by Condition~\eqref{eq: FB 5} with $Z$, $x'$ and $\mIp$ replaced by $\simPIn$, $a^\mI$ and $\mI$, respectively, we can derive that $x = a^\mI$. 
		
		\item Condition~\eqref{eq: FB 8} holds because we can take $y' = [y]_{\simPIn}$.
		\item Condition~\eqref{eq: FB 9} holds because we can take any $y \in y'$.
		\myend
	\end{itemize}
\end{proof}

\begin{corollary}\label{cor: JFWKA}
	Suppose that $\IN \neq \emptyset$, $\Phi \subseteq \{I,O,U\}$ and $\mI$ is a fuzzy interpretation that is image-finite w.r.t.~$\Phi$. Then, $\mI$ and $\mIsimPn$ are both strongly $\Phi$-bisimilar and strongly $(\Phi \cup \{U\})$-bisimilar.
\end{corollary}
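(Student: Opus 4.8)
The plan is to invoke Lemma~\ref{lemma: HDAMA} directly and then verify the additional condition on named individuals that is required by the definition of strong $\Phi$-bisimilarity between interpretations. By that lemma, the function $Z : \Delta^\mI \times \Delta^{\mIsimPn} \to \{0,1\}$ given by $Z(x,[x'']_{\simPIn}) = (\textrm{if } x \in [x'']_{\simPIn} \textrm{ then } 1 \textrm{ else } 0)$ is a crisp $\Phi$-bisimulation between $\mI$ and $\mIsimPn$, and simultaneously a crisp $(\Phi \cup \{U\})$-bisimulation between them. Since $\IN \neq \emptyset$ by assumption, the only thing left to check is that $Z(a^\mI, a^{\mIsimPn}) = 1$ for every $a \in \IN$.

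For this step I would recall from Definition~\ref{def: HDJOS} that $a^{\mIsimPn} = [a^\mI]_{\simPIn}$. Because $\simPIn$ is an equivalence relation (as noted immediately after Definition~\ref{def: HDJOS}, which follows from the assertions~\ref{item: HFHSJ2 1}--\ref{item: HFHSJ2 3} of Proposition~\ref{prop: HFHSJ 2}), it is in particular reflexive, so $a^\mI \in [a^\mI]_{\simPIn}$. Hence, by the defining formula of $Z$, we get $Z(a^\mI, a^{\mIsimPn}) = Z(a^\mI, [a^\mI]_{\simPIn}) = 1$.

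Combining these observations yields the result: the single function $Z$ is a crisp $\Phi$-bisimulation between $\mI$ and $\mIsimPn$ with $Z(a^\mI, a^{\mIsimPn}) = 1$ for all $a \in \IN$, which is exactly the condition defining $\mI \simPn \mIsimPn$, i.e.\ strong $\Phi$-bisimilarity. Since the same $Z$ is also a crisp $(\Phi \cup \{U\})$-bisimulation meeting the identical named-individual condition, $\mI$ and $\mIsimPn$ are strongly $(\Phi \cup \{U\})$-bisimilar as well. There is no genuine obstacle here; the corollary is essentially a packaging of Lemma~\ref{lemma: HDAMA} together with the trivial reflexivity fact $a^\mI \in [a^\mI]_{\simPIn}$, and the only points warranting attention are that the hypothesis $\IN \neq \emptyset$ is what makes the definition of strong bisimilarity applicable, and that Lemma~\ref{lemma: HDAMA} conveniently establishes both feature-set versions for one and the same witness $Z$, so both conclusions follow at once.
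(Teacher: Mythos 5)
Your proof is correct and follows exactly the paper's route: the paper justifies Corollary~\ref{cor: JFWKA} with the single remark that it ``immediately follows from Lemma~\ref{lemma: HDAMA}''. You have merely made explicit the small step the paper leaves implicit, namely that $a^{\mIsimPn} = [a^\mI]_{\simPIn}$ (Definition~\ref{def: HDJOS}) together with reflexivity of $\simPIn$ gives $Z(a^\mI,a^{\mIsimPn})=1$ for all $a\in\IN$, which is what the definition of strong bisimilarity requires.
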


This corollary immediately follows from Lemma~\ref{lemma: HDAMA}. 

\begin{corollary}\label{cor: JFWKA2}
	Let $\Phi \subseteq \{I,O,U\}$, $\mI$ be a finite fuzzy interpretation, $\mT$ a fuzzy TBox and $\mA$ a fuzzy ABox in \mLPn. Then:
	\begin{enumerate}
		\item $\mI \models \mT$ iff $\mIsimPn \models \mT$, 
		\item if $O \in \Phi$ or $\mA$ consists of only fuzzy assertions of the form $C(a) \bowtie p$, then $\mI \models \mA$ iff $\mIsimPn \models \mA$.  
	\end{enumerate}
\end{corollary}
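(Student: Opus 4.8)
The plan is to produce one explicit crisp bisimulation between $\mI$ and its quotient and then feed it to the invariance results already proved. First I would record the easy structural facts: since $\mI$ is finite it is image-finite w.r.t.\ $\Phi$ and witnessed w.r.t.\ \mLPn, and $\Delta^{\mIsimPn}$ is the quotient of a finite set, so $\mIsimPn$ is finite and hence also witnessed w.r.t.\ \mLPn. Lemma~\ref{lemma: HDAMA} then supplies the canonical function $Z : \Delta^\mI \times \Delta^{\mIsimPn} \to \{0,1\}$ with $Z(x,[x'']_{\simPIn}) = 1$ iff $x \in [x'']_{\simPIn}$, which is a crisp $\Phi$-bisimulation (indeed a crisp $(\Phi \cup \{U\})$-bisimulation) between $\mI$ and $\mIsimPn$. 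The property I would single out is its \emph{coverage}: for every $x \in \Delta^\mI$ we have $Z(x,[x]_{\simPIn}) = 1$, and every element of $\Delta^{\mIsimPn}$ has the form $[x]_{\simPIn}$; so the pairs mapped to $1$ by $Z$ project onto all of $\Delta^\mI$ and onto all of $\Delta^{\mIsimPn}$.

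For assertion~1 I would argue each fuzzy GCI $(C \sqsubseteq D) \rhd p \in \mT$ separately and directly. The concept $C \to D$ is a concept of \mLPn, so Lemma~\ref{lemma: cGDHAW} gives $(C \to D)^\mI(x) = (C \to D)^{\mIsimPn}(x')$ whenever $Z(x,x') = 1$ (the G\"odel equivalence of the two values is then $1$, forcing equality). Combining this with coverage: if $\mI \models (C \sqsubseteq D) \rhd p$, then for each $x' = [x]_{\simPIn}$ we get $(C \to D)^{\mIsimPn}(x') = (C \to D)^\mI(x) \rhd p$, so $\mIsimPn \models (C \sqsubseteq D) \rhd p$; conversely, if $\mIsimPn$ validates the GCI, then for each $x \in \Delta^\mI$, taking $x' = [x]_{\simPIn}$ yields $(C \to D)^\mI(x) = (C \to D)^{\mIsimPn}(x') \rhd p$. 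Ranging over the finitely many GCIs of $\mT$ gives assertion~1. This route deliberately avoids Theorems~\ref{theorem: UDKMS2} and~\ref{theorem: UFSSK}, so it needs neither $U \in \Phi$ nor connectedness, and it stays valid when $\IN = \emptyset$.

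For assertion~2 I would instead route through the invariance theorem already established. If $\mA = \emptyset$ the claim is trivial, so assume $\mA \neq \emptyset$; every fuzzy assertion mentions an individual name, whence $\IN \neq \emptyset$. Corollary~\ref{cor: JFWKA} then gives that $\mI$ and $\mIsimPn$ are strongly $\Phi$-bisimilar, and both are witnessed w.r.t.\ \mLPn. Under the stated hypothesis ($O \in \Phi$, or all assertions of the form $C(a) \bowtie p$), Theorem~\ref{theorem: IFDMS2} states that $\mA$ is invariant under strong $\Phi$-bisimilarity, and therefore $\mI \models \mA$ iff $\mIsimPn \models \mA$.

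The hard part will be assertion~1 in the case $U \notin \Phi$: one cannot simply quote Theorem~\ref{theorem: UDKMS2} (it presupposes $U \in \Phi$) nor Theorem~\ref{theorem: UFSSK} (connectedness of $\mI$ and $\mIsimPn$ is not guaranteed), and moreover the notion of strong $\Phi$-bisimilarity \emph{between interpretations} is only defined when $\IN \neq \emptyset$. The direct GCI-by-GCI argument above resolves all of this: its only genuine content is the coverage property of the quotient map together with the invariance of each $C \to D$ under the crisp bisimulation $Z$. Alternatively one could note that $Z$ is a crisp $(\Phi \cup \{U\})$-bisimulation and invoke the invariance result for the enlarged feature set $\Phi \cup \{U\}$, but the direct argument is cleaner and sidesteps the $\IN = \emptyset$ corner case.
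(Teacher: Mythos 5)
Your proof is correct, and for assertion~1 it takes a genuinely different route from the paper. For assertion~2 you do essentially what the paper does (Corollary~\ref{cor: JFWKA} plus Theorem~\ref{theorem: IFDMS2}), with the added care of deriving $\IN \neq \emptyset$ from $\mA \neq \emptyset$ instead of assuming it. For assertion~1, the paper does not argue GCI-by-GCI; it first assumes $\IN \neq \emptyset$ ``without loss of generality'', uses Corollary~\ref{cor: JFWKA} to get that $\mI$ and $\mIsimPn$ are strongly $(\Phi \cup \{U\})$-bisimilar, and then invokes Theorem~\ref{theorem: UDKMS2} for the enlarged feature set $\Phi \cup \{U\}$ (whose language \mLPUn contains $\mT$) --- exactly the alternative you mention and discard at the end of your proposal. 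What your direct argument buys is real: it is insensitive to $\IN$, needing only the explicit bisimulation $Z$ of Lemma~\ref{lemma: HDAMA}, its coverage property, and the concept-invariance Lemma~\ref{lemma: cGDHAW} applied to $C \to D$; by contrast, the paper's reduction to $\IN \neq \emptyset$ is delicate when $O \in \Phi$, since adding a fresh named individual can shrink $\simPIn$ and hence change the quotient interpretation the statement is about. What the paper's route buys is brevity and uniformity: both assertions are dispatched by citing already-proved invariance theorems rather than redoing a (short) semantic argument. Both proofs are sound.
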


\begin{proof}
	Without loss of generality, we assume that $\IN \neq \emptyset$. 
	By Corollary~\ref{cor: JFWKA}, $\mI$ and $\mIsimPn$ are both strongly $\Phi$-bisimilar and strongly $(\Phi \cup \{U\})$-bisimilar.
	Also recall that every finite fuzzy interpretation is witnessed w.r.t.\ \mLPn and \mLPUn. 
	
	By Theorem~\ref{theorem: UDKMS2}, all fuzzy TBoxes in \mLPUn are invariant under strong $(\Phi \cup \{U\})$-bisimilarity. In particular, $\mT$ is invariant under strong $(\Phi \cup \{U\})$-bisimilarity. Hence, $\mI \models \mT$ iff $\mIsimPn \models \mT$. 
	
	Consider the second assertion and suppose that $O \in \Phi$ or $\mA$ consists of only fuzzy assertions of the form $C(a) \bowtie p$. By Theorem~\ref{theorem: IFDMS2}, $\mA$ is invariant under strong $\Phi$-bisimilarity. Hence, $\mI \models \mA$ iff $\mIsimPn \models \mA$.
	\myend
\end{proof}

In the following theorem, the term ``\modifiedB{minimal}'' is understood w.r.t.\ the size of the domain of the considered fuzzy interpretation. 

\begin{theorem}\label{theorem: HSJAO}
	Let $\Phi \subseteq \{I,O,U\}$ and let $\mI$ be a finite fuzzy interpretation. Then:
	\begin{enumerate}
		\item $\mIsimPn$ is a \modifiedB{minimal} fuzzy interpretation that validates the same set of fuzzy GCIs in \mLPn as $\mI$, 
		\item if $\IN \neq \emptyset$ and either $U \in \Phi$ or $\mI$ is connected w.r.t.\ $\Phi$, then:
		\begin{enumerate}
			\item\label{item: GHDJW} $\mIsimPn$ is a \modifiedB{minimal} fuzzy interpretation strongly $\Phi$-bisimilar to $\mI$, 
			\item\label{item: GHDJW2} $\mIsimPn$ is a \modifiedB{minimal} fuzzy interpretation that validates the same set of fuzzy assertions of the form $C(a) \bowtie p$ in \mLPn as $\mI$.
		\end{enumerate}
	\end{enumerate}
\end{theorem}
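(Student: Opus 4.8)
The plan is to split each of the three claims into an \emph{equivalence} half (that $\mIsimPn$ has the stated property) and a \emph{minimality} half (that nothing with a strictly smaller domain does). All equivalence halves are already available: that $\mIsimPn$ validates the same fuzzy GCIs, respectively the same assertions $C(a)\bowtie p$, in \mLPn as $\mI$ is Corollary~\ref{cor: JFWKA2}, and that $\mIsimPn$ is strongly $\Phi$-bisimilar to $\mI$ is Corollary~\ref{cor: JFWKA}. The common backbone of the minimality arguments is that $\mIsimPn$ is \emph{reduced}: since $\mI$ is finite it is witnessed w.r.t.\ \mLPn and modally saturated w.r.t.\ \DLPp, so by Corollary~\ref{cor: fG H-M 1-3 c} strong $\Phi$-bisimilarity coincides with $\equivPn$; hence two elements of $\mI$ share a class of $\simPIn$ iff they satisfy the same concepts of \mLPn, and distinct elements of $\Delta^{\mIsimPn}$ carry distinct \mLPn-types. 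Writing $m=|\Delta^{\mIsimPn}|$, this makes $m$ equal to the number of distinct \mLPn-types realized in $\mI$, and any interpretation realizing $k$ distinct types has at least $k$ domain elements.

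For minimality in part~1, I would take any fuzzy interpretation $\mathcal{J}$ validating the same set of fuzzy GCIs in \mLPn as $\mI$ with $|\Delta^{\mathcal{J}}|<m$; as $m$ is finite, $\mathcal{J}$ is finite as well. The key step is to show $\mI$ and $\mathcal{J}$ realize the same set of \mLPn-types. For this I would build \emph{crisp separators}: given a type $\tau$ and a type $\sigma$ with a concept $E$ such that $\tau(E)\neq\sigma(E)$, the concept $\triangle(E\to v)\mand\triangle(v\to E)$ with the constant $v=\tau(E)$ (here $\triangle=\lnot\ineg$, which lives in \mLPn) takes value $1$ exactly where $E$ equals $v$. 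Conjoining one such separator for each of the finitely many types of a finite interpretation yields a crisp concept $\chi$ that is $1$ precisely on the elements of type $\tau$ and $0$ on all other realized types. If some type of $\mathcal{J}$ were not realized in $\mI$, the corresponding $\chi$ makes the GCI $(\chi\sqsubseteq\bot)\geq 1$ valid in $\mI$ but not in $\mathcal{J}$; symmetrically (now using finiteness of $\mathcal{J}$) every type of $\mI$ is realized in $\mathcal{J}$. Hence the two type-sets coincide, so $\mathcal{J}$ realizes $m$ distinct types and $|\Delta^{\mathcal{J}}|\geq m$, a contradiction. This argument never invokes connectedness, matching the fact that part~1 assumes none.

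For part~2(a), given $\mathcal{J}$ strongly $\Phi$-bisimilar to $\mI$ with $|\Delta^{\mathcal{J}}|<m$, composition of crisp $\Phi$-bisimulations (Proposition~\ref{prop: HFHSJ 2}) together with Corollary~\ref{cor: JFWKA} makes $\mathcal{J}$ strongly $\Phi$-bisimilar to $\mIsimPn$; fix a witnessing crisp $\Phi$-bisimulation $Z$ with $Z(X,y)$ oriented from $\mIsimPn$ to $\mathcal{J}$ and $Z(a^{\mIsimPn},a^{\mathcal{J}})=1$ for all $a\in\IN$. I would first check that $\mIsimPn$ inherits connectedness from $\mI$, since a connecting path in $\mI$ projects to one in the quotient because $r^{\mIsimPn}([x]_{\simPIn},[y]_{\simPIn})\geq r^\mI(x,y)$ (and analogously for inverse roles); thus either $U\in\Phi$ or $\mIsimPn$ is connected. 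Propagating the value $1$ from the named individuals through the forth condition~\eqref{eq: FB 3} along connecting paths (or through~\eqref{eq: FB 8} when $U\in\Phi$), every $X\in\Delta^{\mIsimPn}$ acquires some $y\in\Delta^{\mathcal{J}}$ with $Z(X,y)=1$. By invariance (Lemma~\ref{lemma: cGDHAW}) such $X$ and $y$ share their \mLPn-type, and since distinct elements of $\mIsimPn$ have distinct types the assignment $X\mapsto y$ is injective, forcing $|\Delta^{\mathcal{J}}|\geq m$, a contradiction.

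Finally, part~2(b) reduces to part~2(a). If $\mathcal{J}$ validates the same assertions $C(a)\bowtie p$ as $\mI$ with $|\Delta^{\mathcal{J}}|<m$, then for every concept $C$ of \mLPn and every named $a$ the joint validity of $C(a)\geq p$, $C(a)\leq p$, and so on for all $p$ pins down $C^{\mathcal{J}}(a^{\mathcal{J}})=C^\mI(a^\mI)$, i.e.\ $a^{\mathcal{J}}\equivPn a^\mI$; as $\mathcal{J}$ is finite, Corollary~\ref{cor: fG H-M 1-3 c} upgrades this to $a^{\mathcal{J}}\simPn a^\mI$, and taking the supremum over $a\in\IN$ of the resulting crisp bisimulations (Proposition~\ref{prop: HFHSJ 2}) gives $\mI\simPn\mathcal{J}$; part~2(a) then yields $|\Delta^{\mathcal{J}}|\geq m$. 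The main obstacle I anticipate is the type-level bookkeeping in part~1: one must verify that the crisp separators genuinely belong to \mLPn — this is exactly where involutive negation, via $\triangle$, is indispensable — and that restricting attention to finite competitors is legitimate, which it is, since any $\mathcal{J}$ with $|\Delta^{\mathcal{J}}|<m$ is automatically finite.
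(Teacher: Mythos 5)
Your proposal is correct and follows essentially the same route as the paper's own proof: the equivalence halves via Corollaries~\ref{cor: JFWKA} and~\ref{cor: JFWKA2}, minimality in part~1 via crisp $\triangle$-based separator concepts combined with GCI-equivalence, part~2(a) via composing crisp bisimulations and propagating the value $1$ from named individuals along connecting paths (or via the universal-role condition), and part~2(b) by reduction to part~2(a) through the Hennessy--Milner corollaries. The only cosmetic differences are that in part~1 the paper packages the separators into two explicit families of GCIs, $(\top \sqsubseteq E) \geq 1$ and $(\top \sqsubseteq E_i) \geq 1$, rather than arguing that the sets of realized types coincide, and in part~2(a) it concludes pairwise distinctness of the witnesses via transitivity of $\simPn$ (Proposition~\ref{prop: HFHSJ 2}) instead of type invariance, thereby avoiding any witnessedness assumption on the competing interpretation.
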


\begin{proof}
	Without loss of generality, we assume that $\IN \neq \emptyset$. 
	By Corollaries~\ref{cor: JFWKA}	and~\ref{cor: JFWKA2}, $\mIsimPn$ validates the same set of fuzzy GCIs in \mLPn as $\mI$, is strongly $\Phi$-bisimilar to $\mI$, and validates the same set of fuzzy assertions of the form $C(a) \bowtie p$ in \mLPn as $\mI$. It remains to justify its minimality. 
	
	Since $\mI$ is finite, $\mIsimPn$ is also finite. 
	Let $\Delta^{\mIsimPn} = \{v_1,\ldots,v_n\}$, where $v_1,\ldots,v_n$ are pairwise distinct and $v_i = [x_i]_{\simPIn}$ with $x_i \in \Delta^\mI$, for $1 \leq i \leq n$. By Lemma~\ref{lemma: HDAMA}, $x_i \simPn v_i$ for all $1 \leq i \leq n$. Let $i$ and $j$ be arbitrary indexes such that $1 \leq i,j\leq n$ and $i \neq j$. We have that $x_i \not\simPn x_j$, and by Proposition~\ref{prop: HFHSJ 2}, it follows that $v_i \not\simPn v_j$. Consequently, by Corollary~\ref{cor: fG H-M 1 c}, $v_i \not\equivPdp v_j$. Therefore, there exists a concept $C_{i,j}$ of \DLPp such that $C_{i,j}^{\mIsimPn}(v_i) \neq C_{i,j}^{\mIsimPn}(v_j)$. Let $D_{i,j} = \triangle(C_{i,j} \to C_{i,j}^{\mIsimPn}(v_i))$ if $C_{i,j}^{\mIsimPn}(v_i) < C_{i,j}^{\mIsimPn}(v_j)$, and $D_{i,j} = \triangle(C_{i,j}^{\mIsimPn}(v_i) \to C_{i,j})$ otherwise (i.e., when $C_{i,j}^{\mIsimPn}(v_i) > C_{i,j}^{\mIsimPn}(v_j)$). We have that $D_{i,j}^{\mIsimPn}(v_i) = 1$ and $D_{i,j}^{\mIsimPn}(v_j) = 0$. Let $D_i = D_{i,1} \mand\ldots\mand D_{i,i-1} \mand D_{i,i+1} \mand\ldots\mand D_{i,n}$. We have that $D_i^{\mIsimPn}(v_i) = 1$ and $D_i^{\mIsimPn}(v_j) = 0$. Let $E = D_1 \mor\ldots\mor D_n$ and $E_i = D_1 \mor\ldots\mor D_{i-1} \mor D_{i+1} \mor\ldots\mor D_n$. 
	
	We have that $\mIsimPn$ validates the fuzzy GCI $(\top \sqsubseteq E) \geq 1$ but does not validate $(\top \sqsubseteq E_i) \geq 1$ for any $1 \leq i \leq n$. Any other fuzzy interpretation with such properties must
	have at least $n$ elements in the domain. That is, $\mIsimPn$ is a \modifiedB{minimal} fuzzy interpretation that validates the same set of fuzzy GCIs in \mLPn as $\mI$. 
	
	Consider the second assertion of the theorem and suppose that either $U \in \Phi$ or $\mI$ is connected w.r.t.\ $\Phi$. Let $Z: \Delta^\mI \times \Delta^{\mIsimPn} \to \{0,1\}$ be specified by \mbox{$Z(x,[x'']_{\simPIn}) =$ (if $x \in [x'']_{\simPIn}$ then 1 else 0)}. By Lemma~\ref{lemma: HDAMA}, $Z$ is a crisp $\Phi$-bisimulation between $\mI$ and $\mIsimPn$. 
	
	\begin{itemize}
		\item Consider the assertion~\ref{item: GHDJW} of the theorem and let $\mIp$ be a fuzzy interpretation strongly $\Phi$-bisimilar to~$\mI$. There exists a crisp $\Phi$-bisimulation $Z'$ between $\mIp$ and $\mI$ such that $Z'(a^\mIp, a^\mI) = 1$ for all $a \in \IN$. Let $Z'' = Z' \circ Z$. By Proposition~\ref{prop: HFHSJ 2}, $Z''$ is a crisp $\Phi$-bisimulation between $\mIp$ and $\mIsimPn$. Furthermore, 
		\begin{equation}
		Z''(a^\mIp, a^{\mIsimPn}) = 1 \textrm{ for all $a \in \IN$.} \label{eq: HFOWA}
		\end{equation}
		
		\begin{itemize}
			\item Consider the case when $\mI$ is connected w.r.t.\ $\Phi$. Thus, $\mIsimPn$ is also connected w.r.t.\ $\Phi$. By Condition~\eqref{eq: FB 4} (with $\mI$ and $\mIp$ in Condition~\eqref{eq: FB 4} replaced by $\mIp$ and $\mIsimPn$, respectively), it follows from \eqref{eq: HFOWA} that, for every $1 \leq i \leq n$, there exists $u_i \in \Delta^\mIp$ such that $Z''(u_i,v_i) = 1$.
			\item Consider the case when $U \in \Phi$. Since $\IN \neq \emptyset$, by \eqref{eq: HFOWA} and Condition~\eqref{eq: FB 9} (with $\mI$ and $\mIp$ in Condition~\eqref{eq: FB 9} replaced by $\mIp$ and $\mIsimPn$, respectively), for every $1 \leq i \leq n$, there exists $u_i \in \Delta^\mIp$ such that $Z''(u_i,v_i) = 1$.
		\end{itemize}
		
		Thus, $u_i \simPn v_i$ for all $1 \leq i \leq n$. Since $v_i \not\simPn v_j$ for any $i \neq j$, it follows that $u_i \not\simPn u_j$ for any $i \neq j$. Therefore the cardinality of $\Delta^\mIp$ is greater than or equal to $n$. 
		
		\item Consider the assertion~\ref{item: GHDJW2} of the theorem and let $\mIp$ be a fuzzy interpretation that validates the same set of fuzzy assertions of the form $C(a) \bowtie p$ in \mLPn as $\mI$. Thus, $a^\mIp \equivPdp a^\mI$ for all $a \in \IN$. Without loss of generality, assume that $\mIp$ is finite. By Corollary~\ref{cor: fG H-M 4 c}, $\mIp$ and $\mI$ are strongly $\Phi$-bisimilar. 
		By the assertion~\ref{item: GHDJW} proved above, it follows that the cardinality of $\Delta^\mIp$ is greater than or equal to $n$. 
		\myend
	\end{itemize}
\end{proof}

Given a fuzzy interpretation $\mI$, we say that an individual $x \in \Delta^\mI$ is {\em $\Phi$-reachable} (from a named individual) if there exist $a \in \IN$, $x_0,\ldots,x_n \in \Delta^\mI$ and basic roles $R_1,\ldots,R_n$ w.r.t.~$\Phi$ such that $x_0 = a^\mI$, $x_n = x$ and $R_i^\mI(x_{i-1},x_i) > 0$ for all $1 \leq i \leq n$. 

\begin{corollary}\label{cor: HFJHW}
Let $\Phi \subseteq \{I,O\}$ and suppose $\IN \neq \emptyset$. Let $\mI$ be a finite fuzzy interpretation and $\mIp$ the fuzzy interpretation obtained from $\mI$ by deleting from the domain all $\Phi$-unreachable individuals and restricting the interpretation function accordingly. Then:
\begin{enumerate}
\item $\mIpsimPn$ is a \modifiedB{minimal} fuzzy interpretation strongly $\Phi$-bisimilar to $\mI$, 
\item $\mIpsimPn$ is a \modifiedB{minimal} fuzzy interpretation that validates the same set of fuzzy assertions of the form $C(a) \bowtie p$ in \mLPn as $\mI$.
\end{enumerate}
\end{corollary}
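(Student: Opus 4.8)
The plan is to reduce Corollary~\ref{cor: HFJHW} to the second assertion of Theorem~\ref{theorem: HSJAO} by using the deletion of $\Phi$-unreachable individuals to produce the \emph{connected} interpretation that this assertion requires (here $U \notin \Phi$, so connectedness is the only available hypothesis). First I would observe that every named individual is $\Phi$-reachable (take $n = 0$ and $x_0 = a^\mI$ in the definition of $\Phi$-reachability), so no named individual is deleted; hence $\mIp$ is again finite, has $\IN \neq \emptyset$, and satisfies $a^{\mIp} = a^\mI$ for every $a \in \IN$. Moreover $\mIp$ is connected w.r.t.~$\Phi$: given $x \in \Delta^{\mIp}$, any reachability path $a^\mI = x_0, x_1, \ldots, x_n = x$ in $\mI$ consists entirely of $\Phi$-reachable individuals (each prefix witnesses the reachability of $x_i$), so the whole path survives in $\mIp$, and since the interpretation of basic roles is merely restricted we have $R_i^{\mIp}(x_{i-1},x_i) = R_i^\mI(x_{i-1},x_i) > 0$.

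The first key step is to show that $\mI$ and $\mIp$ are strongly $\Phi$-bisimilar. For this I would take $Z : \Delta^\mI \times \Delta^{\mIp} \to \{0,1\}$ with $Z(x,x') = 1$ iff $x = x'$ (recall $\Delta^{\mIp} \subseteq \Delta^\mI$) and verify Conditions~\eqref{eq: FB 2}--\eqref{eq: FB 5}, the only ones relevant when $\Phi \subseteq \{I,O\}$. Conditions~\eqref{eq: FB 2} and~\eqref{eq: FB 5} hold immediately because $A^{\mIp}$ is the restriction of $A^\mI$ and $a^{\mIp} = a^\mI$. The crucial point is the zig condition~\eqref{eq: FB 3}: if $x = x'$ is retained and $R^\mI(x,y) > 0$ for a basic role $R$ w.r.t.~$\Phi$, then $y$ is itself $\Phi$-reachable and hence lies in $\Delta^{\mIp}$, so choosing $y' = y$ works with $R^{\mIp}(x',y') = R^\mI(x,y)$; the zag condition~\eqref{eq: FB 4} is handled symmetrically by taking $y = y'$. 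Thus $Z$ is a crisp $\Phi$-bisimulation with $Z(a^\mI,a^{\mIp}) = 1$ for all $a \in \IN$, i.e., $\mI \simPn \mIp$.

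With $\mIp$ finite, connected w.r.t.~$\Phi$, and $\IN \neq \emptyset$, the second assertion of Theorem~\ref{theorem: HSJAO} applies and yields that $\mIpsimPn$ is a minimal fuzzy interpretation strongly $\Phi$-bisimilar to $\mIp$ and a minimal one validating the same set of assertions $C(a) \bowtie p$ in \mLPn as $\mIp$. It then remains to transfer both statements from $\mIp$ to $\mI$. For the first assertion of the corollary I would use that strong $\Phi$-bisimilarity of interpretations is an equivalence relation (reflexivity, symmetry and transitivity follow from assertions~\ref{item: HFHSJ2 1}--\ref{item: HFHSJ2 3} of Proposition~\ref{prop: HFHSJ 2}, composing the identity, inverse and composite crisp bisimulations and noting that the value $1$ on named individuals is preserved); since $\mI \simPn \mIp$, an interpretation is strongly $\Phi$-bisimilar to $\mI$ iff it is strongly $\Phi$-bisimilar to $\mIp$, so the two minimality statements coincide. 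For the second assertion I would invoke Theorem~\ref{theorem: UFNSJ2}: both $\mI$ and $\mIp$ are finite, hence witnessed w.r.t.~\mLPn, and $a^\mI \simPn a^{\mIp}$ via $Z$, so $C^\mI(a^\mI) = C^{\mIp}(a^{\mIp})$ for every concept $C$ of \mLPn; therefore $\mI$ and $\mIp$ validate exactly the same assertions $C(a) \bowtie p$, and combining this with the statement already obtained for $\mIp$ gives the claim for $\mI$.

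I expect the main obstacle to be the careful verification that the identity relation on the retained domain really is a crisp $\Phi$-bisimulation, and in particular the observation that a positive basic-role edge cannot leave the reachable set, which is precisely what makes the back-and-forth conditions solvable by the identity. Handling inverse roles uniformly (when $I \in \Phi$) and the nominal condition~\eqref{eq: FB 5} (when $O \in \Phi$) requires only the fact $a^{\mIp} = a^\mI$, so once this reachability-closure fact is in place the remainder is routine.
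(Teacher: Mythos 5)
Your proposal is correct and follows essentially the same route as the paper's proof: the identity function $Z(x,x') = (\textrm{if } x = x' \textrm{ then } 1 \textrm{ else } 0)$ as a crisp $\Phi$-bisimulation witnessing $\mI \simPn \mIp$, the observation that $\mIp$ is connected w.r.t.~$\Phi$, an application of Theorem~\ref{theorem: HSJAO}, and Proposition~\ref{prop: HFHSJ 2} to transfer minimality from $\mIp$ back to $\mI$. The only cosmetic difference is that for the assertion-preservation part you invoke Theorem~\ref{theorem: UFNSJ2} (concept invariance) directly, whereas the paper cites Theorem~\ref{theorem: IFDMS2} (ABox invariance), whose relevant case is itself proved from concept invariance, so the two are interchangeable here.
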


This corollary follows from Theorem~\ref{theorem: HSJAO}, Proposition~\ref{prop: HFHSJ 2}, Theorem~\ref{theorem: IFDMS2} and the observations that $\mIp$ is connected w.r.t.~$\Phi$ and strongly $\Phi$-bisimilar to $\mI$ by using the following crisp $\Phi$-bisimulation \[ Z = \lambda \tuple{x,x'} \in \Delta^\mI \times \Delta^\mIp.(\textrm{if $x=x'$ then 1 else 0}).\]

\begin{remark}
	In this section, we have studied minimizing finite fuzzy interpretations for the case when $\Phi \subseteq \{I,O,U\}$. In~\cite{BSDL-INS,thesis-ARD}, Nguyen and Divroodi studied the problem of minimizing (traditional) interpretations also for the cases when the considered interpretation is infinite or $\Phi \cap \{Q,\Self\} \neq \emptyset$. For dealing with the case when $\Phi \cap \{Q,\Self\} \neq \emptyset$, they introduced the notion of QS-interpretation that allows ``multi-edges'' and keeps information about ``self-edges'' (where ``edge'' is understood as an instance of a role). Minimizing fuzzy interpretations can be extended for those cases using their approach. Here, we have restricted to finite (fuzzy) interpretations and the case when $\Phi \subseteq \{I,O,U\}$ to increase the readability. 
	\myend
\end{remark}


\section{Conclusions}
\label{sec: conc}

We have defined fuzzy bisimulation and bisimilarity for a large class of fuzzy DLs under the G\"odel semantics, as well as crisp bisimulation and strong bisimilarity for such logics extended with involutive negation or the Baaz projection operator. We have formulated and proved results on invariance of concepts under fuzzy/crisp bisimulation and conditional invariance of fuzzy TBoxes/ABoxes under bisimilarity and strong bisimilarity. We have also formulated and proved results on the Hennessy-Milner property of the introduced bisimulations. 

As mentioned in the Introduction, we use elementary conditions 
instead of the ones based on relational composition for defining bisimulations. They are suitable for dealing with number restrictions. Thus, our notion of fuzzy bisimulation is different in nature from the one introduced by Fan~\cite{Fan15} for G\"odel monomodal logics. Furthermore, in comparison with~\cite{Fan15}, not only are the logics studied by us expressive, with various role and concept constructors, we also study invariance of fuzzy TBoxes/ABoxes and our theorems on the Hennessy-Milner property are formulated and proved  for witnessed and modally saturated interpretations, which are more general than image-finite interpretations. 

In addition, we have provided new results on using fuzzy bisimulations to separate the expressive powers of fuzzy DLs and using strong bisimilarity to minimize fuzzy interpretations while preserving validity of fuzzy axioms/assertions.

As far as we know, this is the first time bisimulation and bisimilarity are defined and studied for fuzzy DLs under the G\"odel semantics. Our notions and results may have potential applications to concept learning in fuzzy DLs. 
\modifiedA{As another open problem, one may try to study bisimulation and bisimilarity in fuzzy DLs under the {\L}ukasiewicz or Product semantics.}



\bibliography{BSfDL}
\bibliographystyle{plain}

\appendix

\section{\modifiedA{Proofs}}
\label{appendix A}

\modifiedA{In this appendix, we present the proofs of Lemma~\ref{lemma: GDHAW}, Theorem~\ref{theorem: fG H-M} and Theorem~\ref{theorem: fG H-M 2}. To increase the readability, we recall the lemma and theorems before providing their proofs.} 

\bigskip

\noindent
\textbf{Lemma~\ref{lemma: GDHAW}.}
{\em \TextLemmaGDHAWp}

\vspace{-1em}

\begin{proof}
	We prove this lemma by induction on the structures of $C$ and~$R$.
	First, consider the assertion~\eqref{eq: GDHAW 2}. Let $x,y \in \Delta^\mI$ and $x' \in \Delta^\mIp$. It is sufficient to show that there exists $y' \in \Delta^\mIp$ such that
	\begin{equation}\label{eq: HDHAK}
	Z(x,x') \fand R^\mI(x,y) \leq Z(y,y') \fand R^\mIp(x',y').
	\end{equation}
	The base case occurs when $R$ is a basic role w.r.t.~$\Phi$ and follows from~\eqref{eq: FB 3}. The induction steps are given below.
	\begin{itemize}
		\item Case $R = R_1 \circ R_2$: Since $\mI$ is witnessed w.r.t.\ \mLP, there exists $z \in \Delta^\mI$ such that \mbox{$R^\mI(x,y) = R_1^\mI(x,z) \fand R_2^\mI(z,y)$}. By the inductive assumption of~\eqref{eq: GDHAW 2}, there exist $z'$ and $y'$ such that:
		\begin{eqnarray*}
			Z(x,x') \fand R_1^\mI(x,z) & \leq & Z(z,z') \fand R_1^\mIp(x',z') \\
			Z(z,z') \fand R_2^\mI(z,y) & \leq & Z(y,y') \fand R_2^\mIp(z',y').
		\end{eqnarray*}
		Thus, 
		\begin{eqnarray*}
			Z(x,x') \fand R^\mI(x,y) & = & Z(x,x') \fand R_1^\mI(x,z) \fand R_2^\mI(z,y) \\
			& \leq & Z(z,z') \fand R_1^\mIp(x',z') \fand R_2^\mI(z,y) \\
			& \leq & Z(z,z') \fand R_2^\mI(z,y) \fand R_1^\mIp(x',z') \\
			& \leq & Z(y,y') \fand R_2^\mIp(z',y') \fand R_1^\mIp(x',z') \\
			& \leq & Z(y,y') \fand R^\mIp(x',y').
		\end{eqnarray*}
		
		\item Case $R = R_1 \sqcup R_2$: Without loss of generality, suppose $R^\mI(x,y) = R_1^\mI(x,y) \geq R_2^\mI(x,y)$. By the inductive assumption of~\eqref{eq: GDHAW 2}, there exists $y' \in \Delta^\mIp$ such that 
		\[
		Z(x,x') \fand R_1^\mI(x,y) \leq Z(y,y') \fand R_1^\mIp(x',z').
		\] 
		Thus, 
		\begin{eqnarray*}
			Z(x,x') \fand R^\mI(x,y) & = & Z(x,x') \fand R_1^\mI(x,y) \\
			& \leq & Z(y,y') \fand R_1^\mIp(x',y') \\
			& \leq & Z(y,y') \fand R^\mIp(x',y').
		\end{eqnarray*}
		
		\item Case $R = S^*$: Since $\mI$ is witnessed w.r.t.\ \mLP, there exist $x_0, \ldots, x_k \in \Delta^\mI$ such that $x_0 = x$, $x_k = y$ and 
		\[ R^\mI(x,y) = S^\mI(x_0,x_1) \fand\cdots\fand S^\mI(x_{k-1},x_k). \]
		Let $x'_0 = x'$. By the inductive assumption of~\eqref{eq: GDHAW 2}, there exists $x'_1,\ldots,x'_k \in \Delta^\mIp$ such that  
		\[ Z(x_i,x'_i) \fand S^\mI(x_i,x_{i+1}) \leq Z(x_{i+1},x'_{i+1}) \fand S^\mIp(x'_i,x'_{i+1}) \]
		for all $0 \leq i < k$. Thus,
		\begin{eqnarray*}
			Z(x_0,x'_0) \fand R^\mI(x_0,x_k) & = & Z(x_0,x'_0) \fand S^\mI(x_0,x_1) \fand\cdots\fand S^\mI(x_{k-1},x_k) \\ 
			& \leq & Z(x_k,x'_k) \fand S^\mIp(x'_0,x'_1) \fand\cdots\fand S^\mIp(x'_{k-1},x'_k) \\
			& \leq & Z(x_k,x'_k) \fand R^\mIp(x'_0,x'_k).
		\end{eqnarray*}
		Taking $y' = x'_k$, we obtain~\eqref{eq: HDHAK}.
		
		\item Case $R = (D?)$: If $x \neq y$, then $R^\mI(x,y) = 0$ and \eqref{eq: HDHAK} clearly holds. Suppose $x = y$ and take $y' = x'$. By the inductive assumption of~\eqref{eq: GDHAW 1}, \mbox{$Z(x,x') \leq (D^\mI(x) \fequiv D^\mIp(x'))$}. Hence,  
		\[ Z(x,x') \fand D^\mI(x) \leq Z(x,x') \fand D^\mIp(x'), \] 
		which implies \eqref{eq: HDHAK}.
		
		\item Case $U \in \Phi$ and $R = U$: With $R = U$, \eqref{eq: HDHAK} is equivalent to $Z(x,x') \leq Z(y,y')$. The existence of such a $y'$ is guaranteed by~\eqref{eq: FB 8}. 
	\end{itemize}
	
	The assertion~\eqref{eq: GDHAW 3} can be proved analogously as for~\eqref{eq: GDHAW 2}.
	
	Consider the assertion~\eqref{eq: GDHAW 1}. 
	It clearly holds when $C^\mI(x) = C^\mIp(x')$. Suppose $C^\mI(x) > C^\mIp(x')$ (the case $C^\mI(x) < C^\mIp(x')$ is similar and omitted). Thus, $C^\mIp(x') = (C^\mI(x) \fequiv C^\mIp(x'))$. 
	The cases when $C$ is of the form $p$ or $A$ are trivial. The cases when $C$ is of the form $\lnot D$ or $D \mor E$ are omitted (see Remark~\ref{remark: OFHSJ}). 
	\begin{itemize}
		\item Case $C = D \mand E$: We have $C^\mI(x) = D^\mI(x) \fand E^\mI(x)$ and \mbox{$C^\mIp(x') = D^\mIp(x') \fand E^\mIp(x')$}. By the inductive assumption of~\eqref{eq: GDHAW 1}, 
		\begin{eqnarray}
		Z(x,x') & \leq & (D^\mI(x) \fequiv D^\mIp(x')) \label{eq: JDGSH 1} \\
		Z(x,x') & \leq & (E^\mI(x) \fequiv E^\mIp(x')). \label{eq: JDGSH 2}
		\end{eqnarray}
		
		\modifiedA{Without loss of generality, assume that $D^\mIp(x') \leq E^\mIp(x')$. Since $C^\mI(x) > C^\mIp(x')$, it follows that $D^\mI(x) \fand E^\mI(x) > D^\mIp(x')$. Hence, by~\eqref{eq: JDGSH 1}, $Z(x,x') \leq D^\mIp(x')$. Therefore,}
		\[ Z(x,x') \leq D^\mIp(x') = D^\mIp(x') \fand E^\mIp(x') = C^\mIp(x') = (C^\mI(x) \fequiv C^\mIp(x')). \] 
		
		\item Case $C = (D \to E)$: By the inductive assumption of~\eqref{eq: GDHAW 1}, we also have~\eqref{eq: JDGSH 1} and~\eqref{eq: JDGSH 2}. Since $C^\mIp(x') < C^\mI(x) \leq 1$, we have $C^\mIp(x') = E^\mIp(x') < D^\mIp(x')$. 
		If $E^\mI(x) \neq E^\mIp(x')$, then by~\eqref{eq: JDGSH 2}, $Z(x,x') \leq E^\mIp(x') = C^\mIp(x')$, and hence $Z(x,x') \leq (C^\mI(x) \fequiv C^\mIp(x'))$. Suppose $E^\mI(x) = E^\mIp(x')$. Since $C^\mI(x) > C^\mIp(x') = E^\mIp(x') = E^\mI(x)$, we must have that $D^\mI(x) \leq E^\mI(x)$. Since  
		$D^\mI(x) \leq E^\mI(x) = E^\mIp(x') < D^\mIp(x')$, by~\eqref{eq: JDGSH 1}, $Z(x,x') \leq D^\mI(x)$. We have that 
		\[ Z(x,x') \leq D^\mI(x) \leq E^\mI(x) = E^\mIp(x') = C^\mIp(x') = (C^\mI(x) \fequiv C^\mIp(x')). \]
		
		\item Case $C = \E R.D$: Since $\mI$ is witnessed w.r.t.\ \mLP, there exists $y \in \Delta^\mI$ such that 
		\begin{equation}
		C^\mI(x) = R^\mI(x,y) \fand D^\mI(y). \label{eq: HJKAQ 1}
		\end{equation}
		By the inductive assumption of~\eqref{eq: GDHAW 2}, there exists $y' \in \Delta^\mIp$ such that 
		\begin{equation}
		Z(x,x') \fand R^\mI(x,y) \leq Z(y,y') \fand R^\mIp(x',y'). \label{eq: HJKAQ 2}
		\end{equation}
		By definition, 
		\begin{equation}
		R^\mIp(x',y') \fand D^\mIp(y') \leq C^\mIp(x'). \label{eq: HJKAQ 2a}
		\end{equation}
		
		By the inductive assumption of~\eqref{eq: GDHAW 1}, 
		\begin{equation}
		Z(y,y') \leq (D^\mI(y) \fequiv D^\mIp(y')). \label{eq: HJKAQ 3}
		\end{equation}

		We have the following:
		\[
		\begin{array}{lclcl}
		Z(x,x') & \leq & R^\mI(x,y) \fto Z(y,y') \fand R^\mIp(x',y') && \textrm{(by \eqref{eq: HJKAQ 2} and \eqref{fop: HSDJW 3})} \\
		        & \leq & R^\mI(x,y) \fto (D^\mI(y) \fequiv D^\mIp(y')) \fand R^\mIp(x',y') && \textrm{(by \eqref{eq: HJKAQ 3} and \eqref{fop: HSDJW 2})} \\
		        & \leq & R^\mI(x,y) \fto (D^\mI(y) \fequiv R^\mIp(x',y') \fand D^\mIp(y')) && \textrm{(by \eqref{fop: HSDJW 4} and \eqref{fop: HSDJW 2})} \\
		        & \leq & R^\mI(x,y) \fand D^\mI(y) \fto R^\mIp(x',y') \fand D^\mIp(y') && \textrm{(by \eqref{fop: HSDJW 5})} \\
		        & \leq & C^\mI(x) \fto C^\mIp(x') && \textrm{(by \eqref{eq: HJKAQ 1}, \eqref{eq: HJKAQ 2a} and \eqref{fop: HSDJW 2})} \\
		        & \leq & C^\mI(x) \fequiv C^\mIp(x') && \textrm{(since $C^\mI(x) > C^\mIp(x')$).}
		\end{array}
		\]
		
		\item Case $C = \V R.D$: Since $\mIp$ is witnessed w.r.t.\ \mLP, there exists $y' \in \Delta^\mIp$ such that 
		\begin{equation}
		C^\mIp(x') = (R^\mIp(x',y') \fto D^\mIp(y')). \label{eq: KDNSJ 1}
		\end{equation}
		By the inductive assumption of~\eqref{eq: GDHAW 3}, there exists $y \in \Delta^\mI$ such that 
		\begin{equation}
		Z(x,x') \fand R^\mIp(x',y') \leq Z(y,y') \fand R^\mI(x,y). \label{eq: KDNSJ 2}
		\end{equation}
		By definition, 
		\begin{equation}
		C^\mI(x) \leq (R^\mI(x,y) \fto D^\mI(y)). \label{eq: KDNSJ 2a}
		\end{equation}
		By the inductive assumption of~\eqref{eq: GDHAW 1}, 
		\begin{equation}
		Z(y,y') \leq (D^\mI(y) \fequiv D^\mIp(y')). \label{eq: KDNSJ 3}
		\end{equation}
		
		We have the following:
		\[
		\begin{array}{lclcl}
		Z(x,x') & \leq & R^\mIp(x',y') \fto Z(y,y') \fand R^\mI(x,y) && \textrm{(by \eqref{eq: KDNSJ 2} and \eqref{fop: HSDJW 3})} \\
		        & \leq & R^\mIp(x',y') \fto (D^\mI(y) \fequiv D^\mIp(y')) \fand R^\mI(x,y) && \textrm{(by \eqref{eq: KDNSJ 3} and \eqref{fop: HSDJW 2})} \\
		        & \leq & R^\mIp(x',y') \fto ((R^\mI(x,y) \fto D^\mI(y)) \fto D^\mIp(y'))  && \textrm{(by \eqref{fop: HSDJW 4b} and \eqref{fop: HSDJW 2})} \\
		        & \leq & (R^\mI(x,y) \fto D^\mI(y)) \fto (R^\mIp(x',y') \fto D^\mIp(y'))  && \textrm{(by \eqref{fop: HSDJW 5b})} \\
		        & \leq & C^\mI(x) \fto C^\mIp(x')  && \textrm{(by \eqref{eq: KDNSJ 2a}, \eqref{eq: KDNSJ 1} and \eqref{fop: HSDJW 2})} \\
		        & \leq & C^\mI(x) \fequiv C^\mIp(x') && \textrm{(since $C^\mI(x) > C^\mIp(x')$).}
		\end{array}
		\]

		\item Case $O \in \Phi$ and $C = \{a\}$: By~\eqref{eq: FB 5}, $Z(x,x') \leq (C^\mI(x) \fequiv C^\mIp(x'))$. 
		\item Case $\Self \in \Phi$ and $C = \E r.\Self$: By~\eqref{eq: FB 10}, $Z(x,x') \leq (C^\mI(x) \fequiv C^\mIp(x'))$. 
		
		\item Case $Q_n \in \Phi$ and $C = (\geq\!n\,R.D)$: The case $Z(x,x') = 0$ is trivial. So, assume that $Z(x,x') > 0$. If $C^\mIp(x') > 0$, then: 
		\begin{itemize}
			\item there exist pairwise distinct $y'_1,\ldots,y'_n \in \Delta^\mIp$ such that $R^\mIp(x',y'_j) > 0$ and $D^\mIp(y'_j) > 0$ for all $1 \leq j \leq n$;
			\item by~\eqref{eq: FB 7}, there exist pairwise distinct $y_1,\ldots,y_n \in \Delta^\mI$ such that, for every $1 \leq i \leq n$, $R^\mI(x,y_i) > 0$ and $Z(y_i,y'_{j_i}) > 0$ for some $1 \leq j_i \leq n$;
			\item for every $1 \leq i \leq n$, since $Z(y_i,y'_{j_i}) > 0$ and $D^\mIp(y'_{j_i}) > 0$, by the inductive assumption of~\eqref{eq: GDHAW 1}, $D^\mI(y_i) > 0$; 
			\item therefore, $C^\mI(x) > 0$.
		\end{itemize}
		Hence, if $C^\mI(x)=0$, then $C^\mIp(x')=0$ and $Z(x,x') \leq (C^\mI(x) \fequiv C^\mIp(x'))$. So, assume that $C^\mI(x) > 0$. 
		Since $\mI$ is witnessed w.r.t.\ \mLP, there exist pairwise distinct $y_1$, \ldots, $y_n \in \Delta^\mI$ such that 
		\begin{equation}
		C^\mI(x) = \textstyle\bigotimes\{R^\mI(x,y_i) \fand D^\mI(y_i) \mid 1 \leq i \leq n\}. \label{eq: HFMLA 1}
		\end{equation}
		By~\eqref{eq: FB 6}, there exist pairwise distinct elements $y'_1,\ldots,y'_n$ of $\Delta^\mIp$ such that, for every $1 \leq i \leq n$, there exists $1 \leq j_i \leq n$ such that 
		\[ Z(x,x') \fand R^\mI(x,y_1)  \fand\cdots\fand R^\mI(x,y_n) \leq Z(y_{j_i},y'_i) \fand R^\mIp(x',y'_i). \]
		Therefore, 
		\begin{equation}
		Z(x,x') \fand R^\mI(x,y_1)  \fand\cdots\fand R^\mI(x,y_n) \leq \textstyle\bigotimes\{Z(y_{j_i},y'_i) \fand R^\mIp(x',y'_i) \mid 1 \leq i \leq n\}.  \label{eq: HFMLA 2}
		\end{equation}
		By definition, 
		\begin{equation}
		\textstyle\bigotimes\{R^\mIp(x',y'_i) \fand D^\mIp(y'_i) \mid 1 \leq i \leq n\} \leq C^\mIp(x').  \label{eq: HFMLA 3}
		\end{equation}
		By the inductive assumption of~\eqref{eq: GDHAW 1}, for every $1 \leq i \leq n$, 
		\begin{equation}
		Z(y_{j_i},y'_i) \leq (D^\mI(y_{j_i}) \fequiv D^\mIp(y'_i)).  \label{eq: HFMLA 4}
		\end{equation}
		Let $p = R^\mI(x,y_1) \fand\cdots\fand R^\mI(x,y_n)$ and $p' = R^\mIp(x',y'_1) \fand\cdots\fand R^\mIp(x',y'_n)$.
		We have the following:
		\[
		\begin{array}{lcll}
		& & Z(x,x') \\
		& \!\!\leq\!\! & p \fto \textstyle\bigotimes\{Z(y_{j_i},y'_i) \fand R^\mIp(x',y'_i) \mid 1 \leq i \leq n\} & \textrm{(by \eqref{eq: HFMLA 2} and \eqref{fop: HSDJW 3})} \\
		& \!\!\leq\!\! & p \fto p' \fand \textstyle\bigotimes\{D^\mI(y_{j_i}) \fequiv D^\mIp(y'_i) \mid 1 \leq i \leq n\} & \textrm{(by \eqref{eq: HFMLA 4} and \eqref{fop: HSDJW 2})} \\
		& \!\!\leq\!\! & p \fto p' \fand (D^\mI(y_{j_1}) \fand\cdots\fand D^\mI(y_{j_n}) \fequiv
		                          D^\mIp(y'_1) \fand\cdots\fand D^\mIp(y'_n)) & 
			\textrm{(by \eqref{fop: HSDJW 6}, \eqref{fop: HSDJW 1} and \eqref{fop: HSDJW 2})} \\
		& \!\!\leq\!\! & p \fto (D^\mI(y_{j_1}) \fand\cdots\fand D^\mI(y_{j_n}) \fequiv
						 p' \fand D^\mIp(y'_1) \fand\cdots\fand D^\mIp(y'_n)) & \textrm{(by \eqref{fop: HSDJW 4} and \eqref{fop: HSDJW 2})} \\
		& \!\!\leq\!\! & p \fand D^\mI(y_{j_1}) \fand\cdots\fand D^\mI(y_{j_n}) \fto
			     p' \fand D^\mIp(y'_1) \fand\cdots\fand D^\mIp(y'_n) & \textrm{(by \eqref{fop: HSDJW 5})} \\
		& \!\!\leq\!\! & p \fand D^\mI(y_1) \fand\cdots\fand D^\mI(y_n) \fto
			     p' \fand D^\mIp(y'_1) \fand\cdots\fand D^\mIp(y'_n) & \textrm{(by \eqref{fop: HSDJW 2})} \\
		& \!\!\leq\!\! & C^\mI(x) \fto C^\mIp(x')  & \textrm{(by \eqref{eq: HFMLA 1}, \eqref{eq: HFMLA 3} and \eqref{fop: HSDJW 2})} \\
		& \!\!\leq\!\! & C^\mI(x) \fequiv C^\mIp(x') & \textrm{(since $C^\mI(x) > C^\mIp(x')$).}
		\end{array}
		\]
		\item Case $N_n \in \Phi$ and $C = (\geq\!n\,R)$: The proof for this case is obtained from the proof of the previous case by: 
		replacing $D$, $Z(y_i,y'_{j_i})$ and $Z(y_{j_i},y'_i)$ with $1$, 
		replacing \eqref{eq: FB 6} with \eqref{eq: FB 6n}, 
		replacing \eqref{eq: FB 7} with \eqref{eq: FB 7n}, 
		and then simplifying the text appropriately.
		
		\item Case $Q_n \in \Phi$ and $C = (<\!n\,R.D)$: For a contradiction, suppose $Z(x,x') > 0$. Since $C^\mI(x) > C^\mIp(x')$, by Remark~\ref{remark: JFLWB}, $C^\mI(x) = 1$ and $C^\mIp(x') = 0$. Since $C^\mIp(x') = 0$, there exist pairwise distinct elements $y'_1,\ldots,y'_n \in \Delta^\mIp$ such that $R^\mIp(x',y'_i) \fand D^\mIp(y'_i) > 0$ for all $1 \leq i \leq n$. 
		By~\eqref{eq: FB 7}, there exist pairwise distinct elements $y_1,\ldots,y_n \in \Delta^\mI$ such that, for every $1 \leq i \leq n$, there exists $1 \leq j_i \leq n$ such that 
		\[ Z(x,x') \fand R^\mIp(x',y'_1) \fand\cdots\fand R^\mIp(x',y'_n) \leq Z(y_i,y'_{j_i}) \fand R^\mI(x,y_i). \]
		By the inductive assumption of~\eqref{eq: GDHAW 1}, for every $1 \leq i \leq n$, 
		\[ Z(y_i,y'_{j_i}) \leq (D^\mI(y_i) \fequiv D^\mIp(y'_{j_i})). \]
		Observe that, for every $1 \leq i \leq n$, $R^\mIp(x',y'_i)$, $D^\mIp(y'_i)$, $Z(y_i,y'_{j_i})$, $R^\mI(x,y_i)$ and $D^\mI(y_i)$ are all greater than 0. Hence, $(<\!n\,R.D)^\mI = 0$, which contradicts $C^\mI(x) = 1$. Therefore, 
		\[ Z(x,x') = 0 \leq (C^\mI(x) \fequiv C^\mIp(x')). \] 
		
		\item Case $N_n \in \Phi$ and $C = (<\!n\,R)$: The proof for this case is obtained from the proof of the previous case by: 
		replacing $D$ and $Z(y_i,y'_{j_i})$ with $1$, 
		replacing \eqref{eq: FB 7} with \eqref{eq: FB 7n}, 
		and then simplifying the text appropriately.
		\myend
	\end{itemize}
\end{proof}


\medskip

\noindent
\textbf{Theorem~\ref{theorem: fG H-M}.}
{\em \TextTheoremfGHM}

\begin{proof}
	By Lemma~\ref{lemma: GDHAW2}, it is sufficient to \modifiedA{prove} that $Z$ is a fuzzy $\Phi$-bisimulation between $\mI$ and~$\mI'$. Let $x \in \Delta^\mI$, $x' \in \Delta^\mIp$, $A \in \CN$, $a \in \IN$, $r \in \RN$ and let $R$ be a basic role w.r.t.~$\Phi$. We prove Conditions~\eqref{eq: FB 2}--\eqFBlast (under the corresponding assumptions about~$\Phi$). 
	\begin{itemize}
		\item Condition~\eqref{eq: FB 2} directly follows from the definition of~$Z$. 
		
		\item Consider Condition~\eqref{eq: FB 3} and let $y \in \Delta^\mI$. 
		Let $p = Z(x,x') \fand R^\mI(x,y)$ and $Y' = \{y' \in \Delta^\mIp \mid R^\mIp(x',y') \geq p\}$. 
		Without loss of generality, assume that $p > 0$. 
		Observe that $Y' \neq \emptyset$, because otherwise we would not have that $Z(x,x') \leq ((\E R.\top)^\mI \fequiv (\E R.\top)^\mIp)$ (we use here the assumption that $\mIp$ is witnessed w.r.t.~\mLPp). 
		We prove that there exists $y' \in Y'$ such that $Z(y,y') \geq p$. For a contradiction, suppose that, for every $y' \in Y'$, \mbox{$Z(y,y') < p$}, which means there exists a concept $C_{y'}$ of \mLPp such that 
		\mbox{$(C_{y'}^\mI(y) \fequiv C_{y'}^\mIp(y')) < p$}.
		For every $y' \in Y'$, let \modifiedA{$D_{y'} = (C_{y'} \to C_{y'}^\mI(y)) \mand (C_{y'}^\mI(y) \to C_{y'})$}. Let $\Gamma = \{D_{y'} \mid y' \in Y'\}$. 
		Observe that, for every $y' \in Y'$, $D_{y'}^\mI(y) = 1$ and $D_{y'}^\mIp(y') < p$. 
		Thus, for every $y' \in \Delta^\mIp$, there exists $D \in \Gamma$ such that $R^\mIp(x',y') \fand D^\mIp(y') < p$. Since $\mIp$ is modally saturated w.r.t.\ \mLPp, there exists a finite subset $\Lambda$ of $\Gamma$ such that, for every $y' \in \Delta^\mIp$, there exists $D \in \Lambda$ such that $R^\mIp(x',y') \fand D^\mIp(y') < p$. Let $C = \E R.\bigsqcap\Lambda$. We have $C^\mI(x) \geq p$ (since $(\bigsqcap\Lambda)^\mI(y) = 1$) and $C^\mIp(x') < p$ (since $\mIp$ is witnessed w.r.t.\ \mLPp). This contradicts \mbox{$p \leq Z(x,x') \leq (C^\mI(x) \fequiv C^\mIp(x'))$}.  
		
		\item Condition~\eqref{eq: FB 4} can be proved analogously as for Condition~\eqref{eq: FB 3}.
		
		\item Consider Condition~\eqref{eq: FB 5} when $O \in \Phi$. Since $Z(x,x') \leq (C^\mI(x) \fequiv C^\mIp(x'))$ for $C = \{a\}$, Condition~\eqref{eq: FB 5} clearly holds. 
		
		\item Consider Condition~\eqref{eq: FB 8} when $U \in \Phi$. Let $y \in \Delta^\mI$ and $p = Z(x,x')$. For a contradiction, suppose that, for every $y' \in \Delta^\mIp$, $Z(y,y') < p$, which means that there exists a concept $C_{y'}$ of \mLPp such that $(C_{y'}^\mI(y) \fequiv C_{y'}^\mIp(y')) < p$. 
		For every $y' \in \Delta^\mIp$, let $D_{y'} = (C_{y'} \to C_{y'}^\mI(y))$ if $C_{y'}^\mI(y) < \min\{p,C_{y'}^\mIp(y')\}$, and $D_{y'} = (C_{y'}^\mI(y) \to C_{y'})$ otherwise (i.e., when $C_{y'}^\mIp(y') < \min\{p,C_{y'}^\mI(y)\}$). Let $\Gamma = \{D_{y'} \mid y' \in \Delta^\mIp\}$. 
		Observe that, for every $D_{y'} \in \Gamma$, $D_{y'}^\mI(y) = 1$ and $D_{y'}^\mIp(y') < p$. 
		Thus, for every $y' \in \Delta^\mIp$, there exists $D \in \Gamma$ such that $D^\mIp(y') < p$. 
		Since $\mIp$ is modally saturated w.r.t.\ \mLPp, there exists a finite subset $\Lambda$ of $\Gamma$ such that, for every $y' \in \Delta^\mIp$, there exists $D \in \Lambda$ such that $D^\mIp(y') < p$. Let $C = \E U.\bigsqcap\Lambda$. We have $C^\mI(x) = 1 \geq p$ (since $(\bigsqcap\Lambda)^\mI(y) = 1$) and $C^\mIp(x') < p$ (since $\mIp$ is witnessed w.r.t.\ \mLPp). This contradicts \mbox{$p = Z(x,x') \leq (C^\mI(x) \fequiv C^\mIp(x'))$}.  
		
		\item Condition~\eqref{eq: FB 9} for the case when $U \in \Phi$ can be proved analogously as for Condition~\eqref{eq: FB 8}. 
		
		\item Consider Condition~\eqref{eq: FB 10} when $\Self \in \Phi$. Since $Z(x,x') \leq (C^\mI(x) \fequiv C^\mIp(x'))$ for $C = \E r.\Self$, Condition~\eqref{eq: FB 10} clearly holds. 

		\item Consider Condition~\eqref{eq: FB 6} when $Q_n \in \Phi$. 
		Suppose \mbox{$Z(x,x') > 0$} and let $y_1$, \ldots, $y_n$ be pairwise distinct elements of $\Delta^\mI$ such that $R^\mI(x,y_i) > 0$ for all $1 \leq i \leq n$.  
		Let $p = Z(x,x') \fand R^\mI(x,y_1) \fand\cdots\fand R^\mI(x,y_n)$. We have that $p > 0$. 
		Let $Y' = \{y'\in \Delta^\mIp \mid$ there exists $1 \leq i \leq n$ such that \mbox{$p \leq Z(y_i,y') \fand R^\mIp(x',y')\}$}. We need to prove that $\# Y' \geq n$. 
		For every \mbox{$y' \in \Delta^\mIp \setminus Y'$} and every $1 \leq i \leq n$, we have $Z(y_i,y') \fand R^\mIp(x',y') < p$, hence there exists a concept $C_{y',i}$ of \mLPp such that 
		\[ (C_{y',i}^\mI(y_i) \fequiv C_{y',i}^\mIp(y')) \fand R^\mIp(x',y') < p. \] 
		For $y' \in \Delta^\mIp \setminus Y'$ and $1 \leq i \leq n$, let $D_{y',i}$ be:
		\begin{itemize}
			\item 1 if $R^\mIp(x',y') < p$, 
			\item \modifiedA{$(C_{y',i} \to C_{y',i}^\mI(y_i)) \mand (C_{y',i}^\mI(y_i) \to C_{y',i})$ otherwise}.
		\end{itemize}
		With such $y'$ and $i$, we have that $D_{y',i}^\mI(y_i) = 1$ and $D_{y',i}^\mIp(y') \fand R^\mIp(x',y') < p$. 
		Let \mbox{$C_{y'} = D_{y',1} \mor\ldots\mor D_{y',n}$} for $y' \in \Delta^\mIp \setminus Y'$. 
		By Remark~\ref{remark: OFHSJ}, $C_{y'}$ is equivalent to a concept of \mLPp. 
		We have that $Z(x,x') \geq p$ and $C_{y'}^\mI(y_i) = 1$ for all $y' \in \Delta^\mIp \setminus Y'$ and $1 \leq i \leq n$. 
		Furthermore, $R^\mIp(x',y') \fand C_{y'}^\mIp(y') < p$ for all $y' \in \Delta^\mIp \setminus Y'$. 
		Let $\Gamma = \{C_{y'} \mid y' \in \Delta^\mIp \setminus Y'\}$. 
		Consider any finite subset $\Lambda$ of $\Gamma$. Since $R^\mI(x,y_i) \geq p$ and $(\bigsqcap\Lambda)^\mI(y_i) = 1$ for all $1 \leq i \leq n$, we have \mbox{$(\geq\!n\,R.\bigsqcap\Lambda)^\mI(x) \geq p$}. Since \mbox{$p \leq Z(x,x')$}, it follows that \mbox{$(\geq\!n\,R.\bigsqcap\Lambda)^\mIp(x') \geq p$}. 
		Since $\mIp$ is witnessed w.r.t.\ \mLPp, there are pairwise distinct $y'_1,\ldots,y'_n \in \Delta^\mIp$ such that $R^\mIp(x',y'_i) \fand C^\mIp(y'_i) \geq p$ for all $1 \leq i \leq n$ and $C \in \Lambda$. Since $\mIp$ is modally saturated w.r.t.\ \mLPp, it follows that there are pairwise distinct $y'_1,\ldots,y'_n \in \Delta^\mIp$ such that $R^\mIp(x',y'_i) \fand C^\mIp(y'_i) \geq p$ for all $1 \leq i \leq n$ and $C \in \Gamma$. Recall that $R^\mIp(x',y') \fand C_{y'}^\mIp(y') < p$ and $C_{y'} \in \Gamma$ for all $y' \in \Delta^\mIp \setminus Y'$. Hence, $\# Y' \geq n$.  
		
		\item Condition~\eqref{eq: FB 7} for the case when $Q_n \in \Phi$ can be proved analogously as for Condition~\eqref{eq: FB 6}.
		
		\item Consider Condition~\eqref{eq: FB 6n} when $N_n \in \Phi$. 
		Suppose \mbox{$Z(x,x') > 0$} and let $y_1$, \ldots, $y_n$ be pairwise distinct elements of $\Delta^\mI$ such that $R^\mI(x,y_i) > 0$ for all $1 \leq i \leq n$.  
		Let $p = Z(x,x') \fand R^\mI(x,y_1) \fand\cdots\fand R^\mI(x,y_n)$. 
		We have that $0 < p \leq Z(x,x')$. 
		Let $Y' = \{y'\in \Delta^\mIp \mid p \leq R^\mIp(x',y')\}$. We need to prove that $\# Y' \geq n$. Let $C = (\geq\!n\,R)$. We have 
		\[ p \leq Z(x,x') \leq (C^\mI(x) \fequiv C^\mIp(x')). \]
		Since $C^\mI(x) \geq p$, it follows that $C^\mIp(x') \geq p$. Hence, $\# Y' \geq n$.
		
		\item Condition~\eqref{eq: FB 7n} for the case when $N_n \in \Phi$ can be proved analogously as for Condition~\eqref{eq: FB 6n}.
		\myend
	\end{itemize}
\end{proof}


\medskip

\noindent
\textbf{Theorem~\ref{theorem: fG H-M 2}.}
{\em \TextTheoremfGHMt}

\begin{proof}
	By Lemma~\ref{lemma: cGDHAW2}, it is sufficient to proved that $Z$ is a crisp $\Phi$-bisimulation between $\mI$ and~$\mI'$. Let $x \in \Delta^\mI$, $x' \in \Delta^\mIp$, $A \in \CN$, $a \in \IN$, $r \in \RN$ and let $R$ be a basic role w.r.t.~$\Phi$. We prove Conditions~\eqref{eq: FB 2}--\eqFBlast (under the corresponding assumptions about~$\Phi$). 
	\begin{itemize}
		\item Consider Condition~\eqref{eq: FB 3} and let $y \in \Delta^\mI$. Without loss of generality assume that $Z(x,x') = 1$ and $R^\mI(x,y) = p > 0$. Let $Y' = \{y' \in \Delta^\mIp \mid R^\mIp(x',y') \geq p\}$. Observe that $Y' \neq \emptyset$, because otherwise we would have that $(\E R.\top)^\mI \geq p > (\E R.\top)^\mIp$ (since $\mIp$ is witnessed w.r.t.~\DLPp), which contradicts $Z(x,x') = 1$. We prove that there exists $y' \in Y'$ such that $Z(y,y') = 1$. For a contradiction, suppose that, for every $y' \in Y'$, $Z(y,y') = 0$, which means there exists a concept $C_{y'}$ of \DLPp such that 
		\( C_{y'}^\mI(y) \neq C_{y'}^\mIp(y'). \)
		For every $y' \in Y'$, let \modifiedA{$D_{y'} = \triangle((C_{y'} \to C_{y'}^\mI(y)) \mand (C_{y'}^\mI(y) \to C_{y'}))$}. Let $\Gamma = \{D_{y'} \mid y' \in Y'\}$. Observe that, for every $y' \in Y'$, $D_{y'}^\mI(y) = 1$ and $D_{y'}^\mIp(y') = 0$. 
		Thus, for every $y' \in \Delta^\mIp$ with $R^\mIp(x',y') \geq p$, there exists $D \in \Gamma$ such that $D^\mIp(y') = 0$. Since $\mIp$ is modally saturated w.r.t.\ \DLPp, there exists a finite subset $\Lambda$ of $\Gamma$ such that, for every $y' \in \Delta^\mIp$, there exists $D \in \Lambda$ such that $R^\mIp(x',y') \fand D^\mIp(y') < p$. Let $C = \E R.\bigsqcap\Lambda$. We have $C^\mI(x) \geq p > 0$ (since $(\bigsqcap\Lambda)^\mI(y) = 1$) and $C^\mIp(x') < p$ (since $\mIp$ is witnessed w.r.t.\ \DLPp). This contradicts $Z(x,x') = 1$.  
		
		\item Consider Condition~\eqref{eq: FB 8} when $U \in \Phi$. Without loss of generality, assume that $Z(x,x') = 1$. Let $y \in \Delta^\mI$. For a contradiction, suppose that, for every $y' \in \Delta^\mIp$, $Z(y,y') = 0$, which means that there exists a concept $C_{y'}$ of \DLPp such that $C_{y'}^\mI(y) \neq C_{y'}^\mIp(y')$. 
		For every $y' \in \Delta^\mIp$, let \modifiedA{$D_{y'} = \triangle((C_{y'} \to C_{y'}^\mI(y)) \mand  (C_{y'}^\mI(y) \to C_{y'}))$}. Let $\Gamma = \{D_{y'} \mid y' \in \Delta^\mIp\}$. 
		Observe that, for every $D_{y'} \in \Gamma$, $D_{y'}^\mI(y) = 1$ and $D_{y'}^\mIp(y') = 0$. 
		Thus, for every $y' \in \Delta^\mIp$, there exists $D \in \Gamma$ such that $D^\mIp(y') = 0$. 
		Since $\mIp$ is modally saturated w.r.t.\ \DLPp, there exists a finite subset $\Lambda$ of $\Gamma$ such that, for every $y' \in \Delta^\mIp$, there exists $D \in \Lambda$ such that $D^\mIp(y') < 1$. Let $C = \E U.\bigsqcap\Lambda$. We have $C^\mI(x) = 1$ (since $(\bigsqcap\Lambda)^\mI(y) = 1$) and $C^\mIp(x') < 1$ (since $\mIp$ is witnessed w.r.t.\ \DLPp). This contradicts \mbox{$Z(x,x')= 1$}.  
		
		\item Consider Condition~\eqref{eq: FB 6} when $Q_n \in \Phi$. 
		Suppose \mbox{$Z(x,x') > 0$}, i.e.\ $Z(x,x') = 1$, and let $y_1$, \ldots, $y_n$ be pairwise distinct elements of $\Delta^\mI$ such that $R^\mI(x,y_i) > 0$ for all $1 \leq i \leq n$. Let $p = R^\mI(x,y_1) \fand\cdots\fand R^\mI(x,y_n)$. We have that $p > 0$. 
		Let $Y' = \{y'\in \Delta^\mIp \mid$ $R^\mIp(x',y') \geq p$ and there exists $1 \leq i \leq n$ such that $Z(y_i,y') = 1\}$. We need to prove that $\# Y' \geq n$. 
		For every \mbox{$y' \in \Delta^\mIp \setminus Y'$}, either $R^\mIp(x',y') < p$ or $Z(y_i,y') = 0$ for all $1 \leq i \leq n$. Therefore, for every \mbox{$y' \in \Delta^\mIp \setminus Y'$} with $R^\mIp(x',y') \geq p$ and for every $1 \leq i \leq n$, there exists a concept $C_{y',i}$ of \DLPp such that $C_{y',i}^\mI(y_i) \neq C_{y',i}^\mIp(y')$. 
		For $y' \in \Delta^\mIp \setminus Y'$ and $1 \leq i \leq n$, let $D_{y',i}$ be:
		\begin{itemize}
			\item 1 if $R^\mIp(x',y') < p$, 
			\item \modifiedA{$\triangle((C_{y',i} \to C_{y',i}^\mI(y_i)) \mand (C_{y',i}^\mI(y_i) \to C_{y',i}))$ otherwise}.
		\end{itemize}
		With such $y'$ and $i$, we have that $D_{y',i}^\mI(y_i) = 1$ and, if $R^\mIp(x',y') \geq p$, then $D_{y',i}^\mIp(y') = 0$. 
		Let \mbox{$C_{y'} = D_{y',1} \mor\ldots\mor D_{y',n}$} for $y' \in \Delta^\mIp \setminus Y'$. 
		By Remark~\ref{remark: OFHSJ}, $C_{y'}$ is equivalent to a concept of \DLPp. 
		We have that $C_{y'}^\mI(y_i) = 1$ for all $y' \in \Delta^\mIp \setminus Y'$ and $1 \leq i \leq n$. 
		Furthermore, for all $y' \in \Delta^\mIp \setminus Y'$, if $R^\mIp(x',y') \geq p$, then $C_{y'}^\mIp(y') = 0$. Let $\Gamma = \{C_{y'} \mid y' \in \Delta^\mIp \setminus Y'\}$. 
		Consider any finite subset $\Lambda$ of $\Gamma$. Since $R^\mI(x,y_i) \geq p$ and $(\bigsqcap\Lambda)^\mI(y_i) = 1$ for all $1 \leq i \leq n$, we have \mbox{$(\geq\!n\,R.\bigsqcap\Lambda)^\mI(x) \geq p$}. Since \mbox{$Z(x,x') = 1$}, it follows that \mbox{$(\geq\!n\,R.\bigsqcap\Lambda)^\mIp(x') \geq p$}. 
		Since $\mIp$ is witnessed w.r.t.\ \DLPp, there are pairwise distinct $y'_1,\ldots,y'_n \in \Delta^\mIp$ such that $R^\mIp(x',y'_i) \fand C^\mIp(y'_i) \geq p$ for all $1 \leq i \leq n$ and $C \in \Lambda$. Since $\mIp$ is modally saturated w.r.t.\ \DLPp, it follows that there are pairwise distinct $y'_1,\ldots,y'_n \in \Delta^\mIp$ such that $R^\mIp(x',y'_i) \geq p$ and $C^\mIp(y'_i) > 0$ for all $1 \leq i \leq n$ and $C \in \Gamma$. Recall that, for all $y' \in \Delta^\mIp \setminus Y'$, if $R^\mIp(x',y') \geq p$, then $C_{y'}^\mIp(y') = 0$. Hence, $\# Y' \geq n$.  
	\end{itemize}	
	The proofs concerning the other conditions among \eqref{eq: FB 2}--\eqFBlast \modifiedA{are similar to the ones of the above considered conditions or the corresponding ones of} Theorem~\ref{theorem: fG H-M}. 
	\myend
\end{proof}

\end{document}